\documentclass[12pt]{article}%
\usepackage{afterpage}
\usepackage{mathptmx}
\usepackage{bm}
\usepackage{amsmath}
\usepackage{amsfonts}
\usepackage{tabularx}
\usepackage{amssymb}
\usepackage{booktabs}
\usepackage{float}
\usepackage{graphicx}
\newcolumntype{C}{>{\centering\arraybackslash}X}
\usepackage[subrefformat=parens]{subcaption}
\usepackage{color}
\usepackage{comment}
\usepackage{rotating}
\usepackage[letterpaper,margin=1in]{geometry}
\usepackage[breaklinks=true,bookmarksopen=true,colorlinks=true,citecolor=blue,urlcolor=blue]%
{hyperref}
\usepackage[authoryear]{natbib}
\usepackage{booktabs}
\usepackage{enumerate}
\usepackage{pdfsync}
\usepackage{pdflscape}
\usepackage{multirow}
\usepackage{setspace}
\usepackage{multicol}
\newcommand*\samethanks[1][\value{footnote}]{\footnotemark[#1]}
\providecommand{\U}[1]{\protect\rule{.1in}{.1in}}
\newtheorem{theorem}{Theorem}[section]

\newtheorem{assumption}{Assumption}

\newtheorem{example}{Example}[section]

\newtheorem{lemma}[theorem]{Lemma}

\newtheorem{proposition}[theorem]{Proposition}

\newenvironment{proof}[1][Proof]{\noindent\textbf{#1.} }{\ \rule{0.5em}{0.5em}}

\makeatletter

\long
\def\comment#1{}
\@addtoreset{lemma}{Appendix B} \makeatother

\newcommand\Perp{\protect\mathpalette{\protect\independenT}{\perp}}
\def\independenT#1#2{\mathrel{\rlap{$#1#2$}\mkern2mu{#1#2}}}

\newcommand{\E}{\mathbb{E}}

\begin{document}

\title{Identification and Estimation in Fuzzy Regression Discontinuity Designs with Covariates\thanks{Some of our results appeared originally as the ``binary case'' in our working paper \cite{caetano2017over}. We thank Luis Alvarez, Isaiah Andrews, Sebastian Calonico, Peter Hull, Toru Kitagawa, Rodrigo Pinto, David Slichter, Jann Spiess, Tymon S{\l}oczy{\'n}ski, three anonymous referees, and participants on several seminars and conferences for helpful comments. All errors are our own.}}
\author{Carolina Caetano\thanks{University of Georgia.}, Gregorio Caetano\samethanks, Juan Carlos Escanciano\thanks{Universidad Carlos III de Madrid. Research funded by Ministerio de Ciencia e Innovación grant PID2021-127794NB-I00 and Comunidad de Madrid, grants EPUC3M11 (VPRICIT) and H2019/HUM-589.}}
\date{January 2026}
\maketitle
\vspace{-0.15in}

\begin{abstract}
We study fuzzy regression discontinuity designs with covariates and characterize the weighted averages of conditional local average treatment effects (WLATEs) that are point identified. Any identified WLATE equals a Wald ratio of conditional reduced-form and first-stage discontinuities. We highlight the Compliance-Weighted LATE (CWLATE), which weights cells by squared first-stage discontinuities and maximizes first-stage strength. For discrete covariates, we provide simple estimators and robust bias-corrected inference. In simulations calibrated to common designs, CWLATE improves stability and reduces mean squared error relative to standard fuzzy RDD estimators when compliance varies. An application to Uruguayan cash transfers during pregnancy yields precise RDD-based effects on low birthweight.
\end{abstract}
\vspace{1.2\baselineskip} 
\section{Introduction}

Regression Discontinuity Design (RDD) is a leading quasi-experimental design \citep{Imbens_Lemieux,Lee_Lemieux,Cattaneo_Escanciano,Cattaneo_Idrobo_Titiunik}. In fuzzy RDDs, small first-stage discontinuities often render local Wald estimators imprecise and sensitive to tuning choices \citep[e.g.,][]{Feir_Lemieux_Marmer,Noack_Rothe}. At the same time, conditional first-stage discontinuities typically vary with observed covariates, implying heterogeneous compliance. This motivates two questions: (i) which causal parameters are point identified in fuzzy RDDs with covariates, and (ii) among those, which are best supported by the design in the sense of leveraging the strongest first-stage information.

We answer both questions within a unified framework. We characterize a broad class of \emph{weighted local average treatment effects} (WLATEs): weighted averages of covariate-specific LATEs at the cutoff with nonnegative weights summing to one. Our first result is an identification theorem: under conditions described below, a WLATE is point identified if and only if every covariate value with positive weight (i) has a nonzero conditional first-stage discontinuity at the cutoff, and (ii) appears with positive probability on both sides arbitrarily close to the cutoff. Any identified WLATE can be written as a Wald ratio in a transformed model where the ``treatment'' is the covariate-specific first-stage discontinuity and the ``outcome'' is the corresponding reduced-form discontinuity; equivalently, each WLATE is associated with an ``instrument'' given by a function of covariates, with different choices inducing different complier weights. This representation is a device for organizing what can be identified by cutoff discontinuities in fuzzy RDDs.

Within the identified WLATE class, we single out the \emph{Compliance-Weighted LATE (CWLATE)}. Choosing an ``instrument'' proportional to the conditional first-stage discontinuity implies weights proportional to the \emph{squared} first-stage. Subject to the sign and support restrictions mandated by identification, this ``instrument'' has maximum correlation with the first-stage among admissible choices, making the CWLATE the member of the class that is most aligned with available first-stage information. In this sense, the CWLATE is the ``easiest to estimate'' in our class.\footnote{We borrow the ``easy-to-estimate'' terminology from \cite{goldsmith2024contamination}, though here ``easy'' refers to strongest identification rather than to efficiency.}

We provide an economic interpretation of the identified WLATEs as the effect of a targeted policy where incentives are assigned by drawing a covariate value from a chosen distribution, and then offering the incentive to a random individual with that covariate value at the cutoff. Specifically, the ``instrument'' of a given WLATE is the ratio between the likelihood of that covariate value under the policy and the likelihood of that covariate value in the population. The ``instrument'' thus defines the relative tilt of the policy in favor of or against certain covariate values in comparison to the incidence of those values in the population. In that framework, the standard RDD estimand can be interpreted as the effect of a policy where incentives are offered randomly at the cutoff. In contrast, the CWLATE can be interpreted as the effect of a policy that targets compliers, by offering proportionally more incentives to groups that are more likely to have compliers. 

Compliance-weighting has antecedents in the IV literature  (e.g., \cite{joffe2003weighting,borusyak2020non,huntington2020instruments,coussens2021improving, CaetanoEscanciano21, hazard2022improving, abadie2023instrumental}), where the predominant concern is efficiency in regular conditional moment models. Fuzzy RDDs, by contrast, target \emph{local} and intrinsically nonregular parameters defined by discontinuities at a cutoff, so the usual efficiency bounds are not well-defined in the sense of \citet{chamberlain1992efficiency}. Accordingly, we take an identification-first approach: we characterize the full class of point identified WLATEs, and show that the CWLATE is the member of this class least prone to identification issues.

Our paper also complements a growing literature on fuzzy RDDs with covariates that uses covariates predominantly to improve precision, while retaining the fuzzy-RDD LATE as the target estimand. Covariate-adjusted local regressions are advocated in \citet{Imbens_Lemieux} and analyzed in depth by \citet{Calonico_Cattaneo_Farrell_Titiunik}, who characterize conditions under which such adjustment identifies the unconditional LATE and yields efficiency gains. \citet{Frolich} and \citet{Frolich_Huber} provide nonparametric representations and estimators of the unconditional LATE based on covariate-specific effects, and \citet{noack2021flexible} proposes flexible adjustment methods that can accommodate large covariate vectors. We share the concern that local Wald ratios can be highly sensitive to small first-stage discontinuities, but we shift emphasis to \emph{estimand selection}: when conditional first-stage discontinuities vary with covariates, fuzzy RDDs point identify a class of cutoff-local effects, and the CWLATE concentrates on where the conditional first-stage is strongest. Our weak-monotonicity condition parallels related discussions in compliance-weighted IV interpretations in \citet{kolesa2013estimation,sloczynski2020should,huntington2020instruments}.

We treat covariates as locally pre-determined features used solely to condition local discontinuities. Identification relies on four requirements near the cutoff: (a) local independence of unobservables and treatment effects from the running variable conditional on covariates; (b) weak monotonicity, i.e. for each covariate value either there are compliers or defiers near the cutoff, but not both; (c) overlap (each positively weighted covariate value occurs both just above and just below the cutoff); and (d) a nonzero conditional first-stage.  Our conditions are equivalent to the standard assumptions for RDD, but conditional on covariates. Importantly, we do not require smoothness or balance in the marginal covariate distribution near the cutoff, nor do we require strong monotonicity (which completely rules out defiers). As such, the conditions used to establish the WLATE class and the identification of the CWLATE are generally weaker than the conditions in the previous literature on fuzzy RDDs with covariates.

For estimation, we focus on discrete (or coarsened) covariates. We propose simple plug-in estimators for the identified WLATE class, based on local-polynomial estimates of covariate-specific reduced-form and first-stage discontinuities, and develop robust bias-corrected (RBC) inference for the CWLATE, building on \citet{Calonico_Cattaneo_Titiunik}; the Online Appendix provides the asymptotic theory and parallel results for more general WLATEs.

We investigate the finite sample performance of the CWLATE estimator in a Monte Carlo study, and show that,  when compliance varies across covariates, the CWLATE estimator is markedly more stable and typically attains lower mean square error (MSE) than the standard RDD estimators (with or without covariate adjustment). 

We also revisit \citet{amarante2016cash} on cash transfers during pregnancy in Uruguay (eligibility via a predicted-income cutoff): conventional fuzzy RDD estimates are imprecise in preferred specifications, whereas CWLATE-based estimates point to sizable reductions in low birthweight among compliers, illustrating how aligning the estimand with where the design is strongest can ``rescue'' an otherwise weak-appearing fuzzy RDD in settings with heterogeneous compliance.

The rest of the paper proceeds as follows. Section \ref{identification} formalizes the model, states the WLATE identification result, introduces the CWLATE, and discusses policy interpretations. Section \ref{RDDestimation} describes estimation and RBC inference. Section \ref{sec: MC} reports Monte Carlo evidence. Section \ref{application} presents the empirical application. Section \ref{conclusion} concludes. An Online Appendix contains additional results and proofs of the main results.


\section{Identification in RDD with Covariates}
\label{identification}

\newcommand{\Econd}[2]{\mathbb{E}\!\left[\,#1 \,\middle|\, #2 \,\right]}

\subsection{Model Setup}

We consider a fuzzy RDD with covariates. Let $X_i\in\{0,1\}$ denote treatment and $(Y_i(1),Y_i(0))$ the potential outcomes. The observed outcome is

\vspace{-.5cm}

\[
Y_i \;=\; Y_i(0) \;+\; \beta_i X_i,
\]
where $\beta_i = Y_i(1)-Y_i(0)$ may be heterogeneous and correlated with treatment. In addition to $(Y_i,X_i)$, we observe a continuously distributed running variable $Z_i$ with cutoff $z_0$, and a vector of covariates $W_i$ with support $\mathcal W$.

For each $z$, let $X_i(z)$ denote the potential treatment if $Z_i=z$. Define

\vspace{-.5cm}

\[
\Delta_i(e) \;=\; X_i(z_0+e)-X_i(z_0-e), 
\qquad
\Delta_i \;=\; \lim_{e\downarrow0}\Delta_i(e),
\]
whenever the limit exists. Units with $\Delta_i=1$ are compliers, $\Delta_i=-1$ defiers, and $\Delta_i=0$ always- or never-takers. The symbol $\Perp$ denotes independence.

\begin{assumption}[RDD with Covariates]\label{ass:rdd}
(i) $\mathbb{E}[Y_i(0)\mid Z_i=z,W_i]$ is continuous in $z$ at $z_0$ a.s.; 
(ii) there exists $\bar\epsilon>0$ such that, for all $z\in(z_0-\bar\epsilon,z_0+\bar\epsilon)$, $(\beta_i,X_i(z))\Perp Z_i\mid W_i$; 
(iii) $\mathbb{E}[|\beta_i|]<\infty$; 
(iv) (weak monotonicity) for all $w\in\mathcal W$, either $\mathbb{P}(\Delta_i\ge0\mid W_i=w)=1$ or $\mathbb{P}(\Delta_i\le0\mid W_i=w)=1$.
\end{assumption}

Assumption \ref{ass:rdd} is weaker than the assumptions in the RDD with covariates literature. Importantly, the distribution of $W_i|Z_i=z$ may be discontinuous at the cutoff. In fact, $W_i$ may cause $Y_i(0),$ $Y_i(1),$ $X_i$ and $Z_i$, but may not be caused by them near the cutoff. Additionally, we allow the existence of defiers, provided that there aren't also compliers with the same covariate value. 

For any random variable $V_i$, define the conditional limits
\begin{equation}
\label{eq:cond-limits}
\begin{aligned}
\Econd{V_i}{Z_i=z_0^+,\, W_i=w}
&= \lim_{z\downarrow z_0}\, \Econd{V_i}{Z_i=z,\, W_i=w},\\
\Econd{V_i}{Z_i=z_0^-,\, W_i=w}
&= \lim_{z\uparrow z_0}\, \Econd{V_i}{Z_i=z,\, W_i=w},
\end{aligned}
\end{equation}
and the associated conditional discontinuity

\vspace{-.5cm}

\begin{equation}
\delta_V(w)
= \Econd{V_i}{Z_i=z_0^+,\, W_i=w}
 - \Econd{V_i}{Z_i=z_0^-,\, W_i=w}.
\end{equation}

Covariate values observed on both sides arbitrarily close to the cutoff belong to 

\vspace{-1.2cm}

\[\mathcal W_{z_0}:
=\{w\in\mathcal W:\; \forall\,\epsilon>0, \, \mathbb{P}\!\big(W_i=w \,\big|\, Z_i\in(z_0,\, z_0+\epsilon)\big),\mathbb{P}\!\big(W_i=w \,\big|\, Z_i\in(z_0-\epsilon,\, z_0)\big)>0\}.\] 

\subsection{Conditional LATEs and WLATEs}

For $w\in\mathcal W$, define the conditional local average treatment effect (conditional LATE)

\vspace{-.5cm}

\[
\beta(w)
= \E\!\big[\beta_i\mid W_i=w,\,Z_i=z_0,\,\Delta_i\neq0\big].
\]
Let $\omega(W_i)\ge0$ a.s.\ with $\E[\omega(W_i)]=1$, and define the WLATE

\vspace{-.5cm}

\[
\beta_\omega \;=\; \E\!\big[\omega(W_i)\,\beta(W_i)\big].
\]

Henceforth, we assume all moments involved in WLATE definitions are finite.

\begin{theorem}[Identification of Conditional LATEs and WLATEs]\label{thm:WLATE}
(a) Under Assumption \ref{ass:rdd}: for any $w\in\mathcal W$, $\delta_Y(w)=\beta(w)\,\delta_X(w)$. (b) The conditional LATE $\beta(w)$ is point identified iff (i) $\delta_X(w)\neq0$ and (ii) $w\in\mathcal W_{z_0}$, in which case $\beta(w)=\delta_Y(w)/\delta_X(w)$. (c) A WLATE $\beta_\omega=\E[\omega(W_i)\beta(W_i)]$ is point identified iff these two conditions hold for all $w$ with $\omega(w)>0$. (d) Moreover, whenever $\beta_\omega$ is identified there exists a measurable $b(W_i)$ with $b(W_i)\delta_X(W_i)\ge0$ a.s.\ and $\E[b(W_i)\delta_X(W_i)]>0$ such that
\begin{equation}\label{eq:iv-rep}
\beta_\omega=\frac{\E[b(W_i)\,\delta_Y(W_i)]}{\E[b(W_i)\,\delta_X(W_i)]}.
\end{equation}
The weights, then, can be written as $\omega(W_i)={b(W_i)\,\delta_X(W_i)}/{\E[b(W_i)\,\delta_X(W_i)]}$. Conversely, any ratio of the form \eqref{eq:iv-rep} corresponds to an identified WLATE with weights $\omega(W_i)$ as just described.
\end{theorem}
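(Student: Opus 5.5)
We argue part by part, beginning with the local decomposition behind (a). Fix $w$ and $e\in(0,\bar\epsilon)$. Write $Y_i=Y_i(0)+\beta_i X_i(Z_i)$. By Assumption \ref{ass:rdd}(ii), $(\beta_i,X_i(z))\Perp Z_i\mid W_i$ for $z$ near $z_0$. Hence
\[
\E[\beta_i X_i\mid Z_i=z_0\pm e,W_i=w]=\E[\beta_i X_i(z_0\pm e)\mid W_i=w].
\]
Subtracting the two sides and using (i) to cancel the $Y_i(0)$ parts in the limit gives
\[
\delta_Y(w)=\lim_{e\downarrow0}\E[\beta_i\Delta_i(e)\mid W_i=w].
\]
By the same argument, $\delta_X(w)=\lim_{e\downarrow 0}\E[\Delta_i(e)\mid W_i=w]$. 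Since $|\beta_i\Delta_i(e)|\le|\beta_i|$ and $\E|\beta_i|<\infty$ by (iii), dominated convergence lets us pass the limit inside:
\[
\delta_Y(w)=\E[\beta_i\Delta_i\mid W_i=w],\qquad \delta_X(w)=\E[\Delta_i\mid W_i=w].
\]
Weak monotonicity (iv) says $\Delta_i\in\{0,s_w\}$ a.s. given $W_i=w$, with $s_w\in\{1,-1\}$. Therefore
\[
\E[\beta_i\Delta_i\mid w]=s_w\,\E[\beta_i\mid w,\Delta_i\neq0]\,\mathbb P(\Delta_i\ne0\mid w),\qquad \delta_X(w)=s_w\,\mathbb P(\Delta_i\neq0\mid w).
\]
This yields $\delta_Y(w)=\beta(w)\delta_X(w)$. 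We then use the independence in (ii) to identify $\E[\beta_i\mid w,\Delta_i\ne0]$ with the conditioning on $Z_i=z_0$ in the definition of $\beta(w)$. One point needs care here: $\Delta_i$ is a function of the potential-treatment process and $\beta_i$, both independent of $Z_i$ given $W_i$ locally. So conditioning on $Z_i=z_0$ is innocuous.

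For (b), sufficiency is immediate. If $w\in\mathcal W_{z_0}$, both one-sided conditional limits are defined from the observed distribution, so $\delta_Y(w)$ and $\delta_X(w)$ are identified. If in addition $\delta_X(w)\ne0$, we divide to get $\beta(w)=\delta_Y(w)/\delta_X(w)$.

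Necessity is the main obstacle, and we handle it by constructing observationally equivalent structures with different $\beta(w)$. If $\delta_X(w)=0$, then by monotonicity $\mathbb P(\Delta_i\ne0\mid w)=0$. The complier population is null, so we may redefine $\beta_i$ on that set (or equivalently perturb $\beta(w)$) without changing any observable distribution. If $w\notin\mathcal W_{z_0}$, then $w$ has zero probability on at least one side near the cutoff. The conditional distribution of $(Y_i,X_i)$ on that side at $w$ is then unrestricted by the data. We can alter $X_i(z)$ and $\beta_i$ for $z$ on the unobserved side, keeping (i)–(iv), so that $\beta(w)$ changes while the joint law of observables is fixed. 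I would write down an explicit pair of data-generating processes, for example one with all units at $w$ being compliers with $\beta_i=c$ for two different values of $c$, and check that they are compatible with the assumptions.

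For (c), if the conditions hold on $\{\omega>0\}$, then $\beta_\omega=\E[\omega(W_i)\delta_Y(W_i)/\delta_X(W_i)]$ is identified. Conversely, if the conditions fail on a set of positive $\omega$-weight, we apply the perturbation from (b) on that set; it shifts $\beta_\omega$ while leaving the observables unchanged.

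For (d), set $b(w)=\omega(w)/\delta_X(w)$ on $\{\omega>0\}$ and $b(w)=0$ elsewhere. Then $b\,\delta_X=\omega\ge0$ and $\E[b\,\delta_X]=1>0$. By (a), $\E[b\,\delta_Y]=\E[\omega\beta]$. Conversely, given any such $b$, define $\omega=b\,\delta_X/\E[b\,\delta_X]$. These weights are nonnegative and have mean one. On $\{\omega>0\}$ we have $\delta_X\ne0$, and substituting (a) shows the ratio equals $\beta_\omega$.

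The support condition $w\in\mathcal W_{z_0}$ must also be imposed for $\delta$'s to be defined. I would make this explicit by reading $\delta_Y$ and $\delta_X$ in \eqref{eq:iv-rep} as defined only on $\mathcal W_{z_0}$, with $b=0$ off it.
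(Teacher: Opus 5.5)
Your proposal is correct and follows essentially the same route as the paper's proof. For (a) you use continuity of the $Y_i(0)$ term, the independence in Assumption~\ref{ass:rdd}(ii) to write the one-sided difference as an expectation of $\beta_i\Delta_i(e)$, dominated convergence, and the weak-monotonicity factorization; for (b)--(c) you use plug-in sufficiency and observational-equivalence perturbations; and for (d) you take $b=\omega/\delta_X$ on $\{\omega>0\}$ with the normalized converse.

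Two minor points. First, you condition on $W_i$ alone and move to $Z_i=z_0$ only at the end; this, like the paper's use of (iv) conditional on $Z_i=z_0$, tacitly needs the whole process $z\mapsto X_i(z)$ near $z_0$ to be independent of $Z_i$ given $W_i$, which is slightly more than the pointwise statement in (ii). Second, in your explicit one-sided example you should also offset the baseline, e.g.\ $Y_i(0)=Y_i-\beta_iX_i$, so that varying $\beta_i=c$ leaves unchanged the observed outcomes of treated units on the observed side (for instance, compliers just above the cutoff).
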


Equation \eqref{eq:iv-rep} resembles an IV-like Wald ratio. In this analogy, $\delta_X(W_i)$ plays the role of the ``treatment'' variable, $\delta_Y(W_i)$ plays the role of the ``outcome'' variable, and $b(W_i)$ plays the role of the ``instrumental'' variable. This is different from a typical IV setting, however, where the conditions are typically satisfied only for a specific IV. In our case, there is no ``endogeneity'' issue, because the cutoff conditions in Assumption \ref{ass:rdd} already guarantee that $\delta_Y(w)=\beta(w)\delta_X(w)$. Thus, any measurable $b(W_i)$ that satisfies $b(W_i)\delta_X(W_i)\geq 0$ a.s. and $\E[b(W_i)\delta_X(W_i)]>0$ can play the role of the ``IV'' for an identified WLATE.

Note that the WLATEs are averages of treatment effects among compliers at the cutoff. The identification of the $\beta(w)$ relies on Assumption \ref{ass:rdd}, which is equivalent to the standard RDD conditions, but conditional on $W_i$. Interpretations of WLATEs as averages of treatment effects away from the cutoff would require additional assumptions for extrapolation that are beyond the scope of this paper (see e.g. \citealt{angrist2015wanna}).

\subsection{Examples}

We briefly show how common RDD targets fit into Theorem~\ref{thm:WLATE}. Full expressions for weights and ``instruments'' are given in Table \ref{tab: LATEs}. Let $f_W$ denote the density of $W_i$.

\begin{example}[Unconditional LATE]\label{ex:U-simple}
The target of the fuzzy RDD with covariates estimator in \cite{Calonico_Cattaneo_Farrell_Titiunik}, $\beta_U$, is a WLATE under additional conditions provided in that paper (which include strong monotonicity and continuity of the covariate distribution across the cutoff). In our framework, $\beta_U$ can be written as \eqref{eq:iv-rep}, with $b(W_i)$ proportional to the relative likelihood of $W_i$ at $z_0$.
\end{example}

\begin{example}[Average Conditional LATE]\label{ex:A-simple}
The simple average $\beta_A=\E[\beta(W_i)]$ is a WLATE with $\omega(W_i)\equiv1$. By Theorem~\ref{thm:WLATE}, this requires $\delta_X(w)\neq0$ and $w\in\mathcal W_{z_0}$ for all $w$ in the support of $W_i$. In our framework, it corresponds to \eqref{eq:iv-rep} choosing a $b(W_i)$ that underweights high-compliance groups.
\end{example}

\begin{example}[Counterfactual WLATE]\label{ex:C-simple}
Given a counterfactual covariate distribution $f^\ast$ with support $\mathcal W^\ast\subseteq\mathcal W_{z_0}$, the counterfactual target $\beta_C=\int\beta(w)f^\ast(w)\,dw$ is a WLATE with $\omega(w)=f^\ast(w)/f_W(w)$, identified when $\delta_X(w)\neq0$ for all $w\in\mathcal W^\ast$. In our framework, it corresponds to \eqref{eq:iv-rep} with a $b(W_i)$ that switches the measure from $f_W$ to $f^*$ and underweights high-compliance groups.
\end{example}

\begin{example}[Maximal Average Social Welfare]\label{ex:S-simple}
Following the literature on Empirical Welfare
Maximization (see e.g. \citealt{kitagawa2018should}), if a planner treats only cells with nonnegative $\beta(w)$ and strong monotonicity holds, the corresponding WLATE $\beta_S$ is identified when $\delta_X(w)>0$ for all weighted cells. In our framework, it corresponds to \eqref{eq:iv-rep} choosing a $b(W_i)$ that drops groups where $\delta_Y(W_i)<0$ and underweights high-compliance groups.
\end{example}

\begin{table}[H]
\protect\caption{WLATE comparisons,  $\beta_{\omega}=\E[\omega(W_i)\beta(W_i)]$ \label{tab: LATEs}}
\vspace{-0.25in}
\begin{center}
\begin{tabular}
[c]{ccccc}\toprule
\textbf{WLATE} & \textbf{ID.CONDITION} &  $\bm{\omega(w)}$&$\bm{b(w)}$ & \textbf{IDENTIFICATION}   \\
\midrule\midrule
$\beta_{U}$ &\hspace{-1cm}{\footnotesize $\mathbb{P}(\delta_X(W_i)> 0|Z_i=z_0)>0$}&$\frac{\delta_X(w){f_{W|z_{0}}(w)}/{f_W(w)}}{ \E[\delta_X(W_i)|Z_i=z_0]}$&$\frac{f_{W|z_{0}}}{f_{W}(w)}$ &$\frac{\E[\delta_Y(W_i)|Z_i=z_0]}{\E[\delta_X(W_i)|Z_i=z_0]}$
\\\midrule
 $\beta_{A}$ &\hspace{-1.6cm}{\footnotesize $\mathbb{P}(\delta_X(W_i)\neq 0)=1$}& {\footnotesize $1$}& $\frac{1}{\delta_{X}(w)}$ & {\footnotesize $\E$}$\left[\frac{\delta_Y(W_i)}{\delta_X(W_i)}\right]$ 
\\\midrule
$\beta_{C}$ & \hspace{-.1cm}{\footnotesize $\mathbb{P}(\delta_X(W_i)\neq 0|W_i\in \mathcal{W}^*)=1$}& $\frac{f^{\ast
}(w)}{f_{W}(w)}$ &$\frac{f^{\ast}(w)}{f_{W}(w)\delta_{X}(w)}$ & {\footnotesize $\E^*$}$\left[\frac{\delta_Y(W_i)}{\delta_X(W_i)}\right]$ 
\\\midrule
$\beta_{S}$ &{\footnotesize $\mathbb{P}(\delta_X(W_i)\neq0|\beta(W_i)\neq 0)=1$}& $\frac{1(\delta_{Y}(w)\geq 0)}{\mathbb{P}(\delta_Y(W_i)\geq 0)}$ & $\frac{1(\delta_{Y}\geq 0)}{\delta_{X}(w)}$ &{\footnotesize $\E$}$\left[\frac{\delta_Y(W_i)}{\delta_X(W_i)}\Big|\text{\footnotesize  $\delta_Y(W_i)\geq 0$}\right]$
 \\\midrule
$\beta_{CW}$ &\hspace{-1.7cm}{\footnotesize $\mathbb{P}(\delta_X(W_i)\neq 0)>0$}& $\frac{\delta_{X}^{2}(w)}{\E\left[  \delta
_{X}^{2}(W_{i})\right]  }$ & {\footnotesize $\delta_{X}(w)$} & $\frac{\E[\delta_{X}(w)\delta_Y(w)]}{\E[  \delta
_{X}^{2}(W_{i})]  }$ 
\\
\midrule
\bottomrule
\end{tabular}
\end{center}
\vspace{-0.1in}
\footnotesize \begin{singlespace} Note: $\beta_U$: Unconditional LATE (assumes strong monotonicity and continuity of covariate distribution across the cutoff, $f_{W|z_{0}}$ denotes the conditional density of $W_{i}$ given $Z,$ evaluated at $z_0$); $\beta_A$: Average LATE; $\beta_C$: Counterfactual Average LATE ($\mathcal{W}^*$ is the support of $f^*,$ and  $\E^*$ is the expectation taken over $f^*$); $\beta_S$: Maximal Average Social Welfare Gain (assumes strong monotonicity); $\beta_{CW}$: Compliance-Weighted LATE (defined in the next section). \end{singlespace}
\medskip
\end{table}

Example \ref{ex:C-simple} illustrates how large the WLATE class is. Note that one can choose weights such that the $\beta(w)$ are averaged over a distribution outside the cutoff, or even over an artificial distribution that has no relation to the data-generating process at all. All WLATEs must still be interpreted as an average effect at the cutoff $z_0$ because the $\beta(w)$ are local-to-the-cutoff treatment effects, but potentially the WLATE may use a different mix of those observations compared to the mix implied by the actual distribution near the cutoff.

\subsection{Compliance-Weighted Identification}
\label{sec: robust identification}

The identification from the representation in (\ref{eq:iv-rep}) motivates us to maximize the correlation between the ``instrument'' $b(W_i)$ and the ``treatment'' $\delta_X(W_i)$. Specifically, consider measurable $b(\cdot)$ with $0<Var(b(W_i))<\infty$ satisfying $b(W_i)\delta_X(W_i)\ge0$ a.s. Let $\rho(b(W_i),\delta_X(W_i))$ denote the correlation between $b(W_i)$ and $\delta_X(W_i)$. We are interested in $b^{\ast}(\cdot),$ where

\vspace{-1.2cm}

\begin{equation}
\label{eq:opt-b}
b^{\ast}\in \arg\max_{b}\, \big|\rho\!\big(b(W_i),\delta_X(W_i)\big)\big|
\quad\text{s.t.}\quad
b(W_i)\,\delta_X(W_i)\ge 0 \ \text{a.s.},\text{ and }
0<Var\!\big(b(W_i)\big)<\infty.
\end{equation}

\begin{theorem}[Compliance-Weighted ``Instrument'']\label{thm:CW-instrument}
Let Assumption \ref{ass:rdd}, $\mathbb{P}(\delta_X(W_i)\neq0)>0,$ and $\E\!\big[\delta_X^2(W_i)\big]<\infty$ hold. Then $b^\ast$ solves \eqref{eq:opt-b} for all admissible data generating processes if and only if $b^\ast(w)=c\,\delta_X(w)$ a.s. for some $c>0$.
\end{theorem}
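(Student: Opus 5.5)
We first show that $b^\ast=c\,\delta_X$ with $c>0$ is admissible, and then that it is optimal by the Cauchy--Schwarz inequality. Admissibility holds because $b^\ast\delta_X=c\,\delta_X^2\ge0$ a.s. Also $Var(b^\ast(W_i))=c^2Var(\delta_X(W_i))<\infty$ since $\E[\delta_X^2(W_i)]<\infty$. For optimality, for any admissible $b$, Cauchy--Schwarz gives $|\rho(b(W_i),\delta_X(W_i))|\le 1$, and $\rho(b^\ast,\delta_X)=1$ whenever $Var(\delta_X(W_i))>0$. So $b^\ast$ attains the maximum.

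One degenerate case needs separate treatment. If $\delta_X(W_i)$ is a.s. constant and nonzero, then $Var(\delta_X)=0$, the correlation is undefined, and $b^\ast$ violates $Var(b)>0$. I will handle this either by convention or by noting that it is excluded for some admissible DGPs. The phrase ``for all admissible data generating processes'' lets the sufficiency claim be read on DGPs where $Var(\delta_X(W_i))>0$, and I will state that convention explicitly.

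\emph{Necessity.} Suppose $b^\ast$ solves \eqref{eq:opt-b} for every admissible DGP. In particular this covers DGPs with $Var(\delta_X(W_i))>0$, where the maximal value is $1$ because $c\,\delta_X$ attains it. Equality $|\rho|=1$ in Cauchy--Schwarz forces $b^\ast(W_i)=a+c\,\delta_X(W_i)$ a.s. for constants $a$ and $c\neq0$.

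It remains to show $a=0$ and $c>0$ using the sign constraint $b^\ast\delta_X\ge0$, and this is the main obstacle. Take a DGP satisfying Assumption~\ref{ass:rdd} in which $\delta_X(W_i)$ takes both signs. Weak monotonicity allows defiers in some cells and compliers in others, so $\delta_X(w)$ can be negative for some $w$ and positive for others. For example, let $\delta_X$ take the values $-d_1<0<d_2$ with positive probability. The constraint then requires $(a+c\,\delta_X)\delta_X\ge0$ at both values, that is, $a\le c\,d_1$ and $a\ge -c\,d_2$ (when $c>0$), together with the analogous conditions at every other support point.

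A single DGP therefore does not pin down $a$, so I will use the ``for all DGPs'' clause. For the candidate $b^\ast$ to remain admissible across all DGPs, the constraint must hold for values $\delta_X=\pm\varepsilon$ with $\varepsilon$ arbitrarily small and positive probability. This forces $a+c\varepsilon\ge0$ and $a-c\varepsilon\le0$ for all $\varepsilon>0$, hence $a=0$ in the limit.

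Next, a DGP in which $\delta_X$ takes a negative value $-d$ with positive probability gives $c\,d^2\ge0$. Combined with $c\neq0$ and the constraint at positive values, this yields $c>0$. Alternatively, if $c<0$, then $b^\ast\delta_X=c\,\delta_X^2<0$ wherever $\delta_X\neq0$, which is inadmissible.

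One subtlety remains in the necessity argument. Since $b^\ast$ is a function of $w$ while $\delta_X$ varies with the DGP, the proof must treat $b^\ast$ as a rule mapping the DGP (through $\delta_X$) to a function, and read ``$b^\ast(w)=c\,\delta_X(w)$'' in that sense. I will make this formalization explicit before running the argument above.
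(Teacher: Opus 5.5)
Your proposal follows essentially the same route as the paper's proof. The Cauchy--Schwarz equality case gives $b=c_0+c_1\delta_X$ a.s.; the ``for all admissible designs'' clause with small opposite-sign values of $\delta_X$ forces $c_0=0$; the sign restriction then gives $c_1>0$; and conversely $c\,\delta_X$ is admissible and attains $|\rho|=1$. Your explicit handling of the degenerate case $Var(\delta_X(W_i))=0$ is more careful than the paper's.

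One step needs the formalization you promise at the end, and the paper's proof leaves the same step implicit: the limit $\varepsilon\downarrow0$ treats $a$ and $c$ as fixed across the designs. You can make it rigorous by taking $b^\ast=\phi(\delta_X)$ for a single fixed $\phi$. Alternatively, work within one design in which $\delta_X(W_i)$ has positive mass arbitrarily close to $0$ on both sides, so the constants cannot vary with $\varepsilon$.
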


Thus, in the transformed model of Theorem \ref{thm:WLATE}, the admissible ``instrument'' that is maximally aligned with the first-stage is proportional to $\delta_X(W_i)$ itself. The associated WLATE equals
\begin{equation}
\label{eq:CWLATE}
\beta_{CW}
\;=\;
\frac{\E\!\big[\delta_X(W_i)\,\delta_Y(W_i)\big]}
     {\E\!\big[\delta_X^2(W_i)\big]}.
\end{equation}
By Theorem \ref{thm:WLATE}, the CWLATE is identified when (i) all weighted cells appear on both sides of the cutoff and (ii) $\mathbb{P}(\delta_X(W_i)\neq0)>0$.

While the RDD estimand $\beta_U$ (under strong monotonicity) weights conditional LATEs proportionally to the conditional first-stage at the cutoff, the compliance-weighted LATE $\beta_{CW}$ weights conditional LATEs proportionally to the square of the conditional first-stage in the whole distribution. The estimands differ in two ways. First, because the distribution of $W_i$ at the cutoff may be different from the distribution of $W_i$ in the population. Second, because $\beta_{CW}$ intensifies the tilting towards complier-heavy groups. Note that, if strong monotonicity holds and the first difference above is negligible, the ordering of the two estimands is informative of the direction of the selection on gains/costs of treatment. Specifically, $\beta_{CW}>\beta_U$ if, and only if, higher compliance rates are associated with higher treatment effects. Analogously, $\beta_{CW}<\beta_U$ if, and only if, higher compliance rates are associated with lower treatment effects, and $\beta_{CW}=\beta_U$ if, and only if, no association exists (see formal statement and proof in the end of Appendix \ref{app: id proofs}).

\subsection{WLATEs as Expected Policy Effects}\label{subsec:policy-brief}

WLATEs admit a policy interpretation under strong monotonicity. Consider a targeted policy where incentive packets are offered to individuals through a two-step process. First, a value of $w$ is drawn according to a policy-assigned probability function $p$. The policymaker can favor certain values $w$ over others by making $p(w)$ relatively higher. Once $w$ is drawn, the incentive is given randomly to someone with $W_i=w$ and $Z_i=z_0$. 

The effect of this policy is the WLATE with ``instrument'' $b(W_i)=p(W_i)/f_{W}(W_i),$ where recall $f_W$ is the density of $W_i$ in the population. In this setting, $b(w)$ denotes the relative targeting ``tilt'' of the policy in comparison with a policy where no value of $w$ is targeted.

The converse is also true. A WLATE with ``instrument'' $b(W_i)$ corresponds to the policy where $w$ is drawn from the distribution $p(w)=b(w)f_W(w)/\E[b(W_i)]$. In particular, the CWLATE corresponds to a policy with probabilities given by $p(w)=\delta_X(w)f_W(w)/\E[\delta_X(W_i)],$ which ``tilts'' incentive-offers towards complier-heavy groups. Contrast this with the standard RDD estimand $\beta_U$, which corresponds to a policy with $p(w)=(f_{W|z_0}^+(w)+f_{W|z_0}^-(w))/2,$ where incentives are given according to the population likelihood of $w$ at the cutoff.

Details of policy interpretation of the WLATEs and formal results are in Online Appendix~A.

\section{Estimation of the CWLATE}\label{RDDestimation}

We estimate the CWLATE in \eqref{eq:CWLATE} when $W_i$
has finite support (or is coarsened into finitely many cells).
Let $\mathcal W_i=\{w_1,\ldots,w_m\}$, $\pi_j=\mathbb{P}(W_i=w_j)>0$, and
$\{Y_i,X_i,Z_i,W_i\}_{i=1}^n$ be i.i.d. Define, for $V\in\{Y_i,X_i\}$,

\vspace{-.8cm}
\[
\mu_V^\pm(w)=\E[V\mid Z_i=z_0^\pm,W_i=w],\qquad \delta_V(w)=\mu_V^+(w)-\mu_V^-(w).
\]
The CWLATE equals
\begin{equation}\label{eq:CWLATE_discrete}
\beta_{CW}
=\frac{\sum_{j=1}^m \pi_j\,\delta_X(w_j)\,\delta_Y(w_j)}
       {\sum_{j=1}^m \pi_j\,\delta_X^2(w_j)}
\equiv \frac{\tau_Y}{\tau_X}.
\end{equation}
Relative to the standard fuzzy RDD estimand, CWLATE replaces \emph{unconditional}
discontinuities by an \emph{aggregation of cell-specific} discontinuities, with weights
proportional to squared first-stages.

\paragraph{Plug-in estimator.}
Let $\hat\pi_j=n^{-1}\sum_{i=1}^n 1\{W_i=w_j\}$. Estimate $\delta_V(w_j)$ by
local polynomials within each cell and set
\begin{equation}\label{eq:cw_est}
\hat\beta_{CW}
=\frac{\sum_{j=1}^m \hat\pi_j\,\hat\delta_X(w_j)\,\hat\delta_Y(w_j)}
       {\sum_{j=1}^m \hat\pi_j\,\hat\delta_X^2(w_j)}
\equiv\frac{\hat\tau_Y}{\hat\tau_X}.
\end{equation}

\paragraph{Local linear implementation (stacked form).}
For kernel $k$ and bandwidth $h_n$, define $r_1(z)=(1,z)'$,
$\tilde W_i=(1\{W_i=w_1\},\ldots,1\{W_i=w_m\})'$, and
$X_{i,1}=r_1(Z_i)\otimes\tilde W_i$, where $\otimes$ is the Kronecker product.
Let $k_{h_n}^+(z)=h_n^{-1}k(z/h_n)1\{z\ge0\}$ and $k_{h_n}^-(z)=h_n^{-1}k(z/h_n)1\{z<0\}$.
For $s\in\{+,-\}$,
\begin{equation}\label{eq:betahatpm_CW}
\hat\beta_V^{\,s}(h_n)=\arg\min_\beta\sum_{i=1}^n\big(V_i-\beta'X_{i,1}\big)^2 k_{h_n}^{\,s}(Z_i),
\qquad \hat\mu_V^{\,s}(h_n)=e_0'\hat\beta_V^{\,s}(h_n),
\end{equation}
where $e_0=[I_m,0]$ selects the $m$ intercepts. Then
$\hat\delta_V(h_n)=\hat\mu_V^+(h_n)-\hat\mu_V^-(h_n)$, and $\hat\delta_V(w_j)$ is its $j$th component.
Equivalently, one may run separate local linear regressions within each cell $W_i=w_j$.

\paragraph{Delta-method expansion.}
With $\tau_Y,\tau_X$ as in \eqref{eq:CWLATE_discrete}, a delta method yields
\begin{equation}\label{eq:expansion_CW}
\hat\beta_{CW}-\beta_{CW}
=\frac{\hat\tau_Y-\tau_Y}{\tau_X}
-\frac{\tau_Y(\hat\tau_X-\tau_X)}{\tau_X^2}
+\hat{\mathcal R},
\end{equation}
where $\hat{\mathcal R}=o_p\big((nh_n)^{-1/2}+h_n^2\big)$ under the regularity
conditions in Appendix~\ref{AppendixSectionRRD}. CWLATE is less sensitive to weak cells because
$\tau_X=\sum_j \pi_j\delta_X^2(w_j)$ downweights small $|\delta_X(w_j)|$ automatically.

\subsection*{Bias-corrected inference for the CWLATE}

With local-linear estimation of the cell discontinuities, $\hat\beta_{CW}$ has
a leading smoothing bias of order $h_n^2$. Let $b_n$ denote a pilot bandwidth
used to estimate this leading bias term via higher-order local polynomials (as done in 
in \citet{Calonico_Cattaneo_Titiunik}). Define the debiased CWLATE estimator

\vspace{-.5cm}

\begin{equation}\label{eq:cw_debiased}
\hat\beta_{CW}^{\,bc}
\;=\;
\hat\beta_{CW}-h_n^2\,\hat B_{CW,1,2}(h_n,b_n),
\end{equation}
where $\hat B_{CW,1,2}(h_n,b_n)$ is a consistent estimator of the leading bias
constant (its explicit expression is given in Appendix~\ref{AppendixSectionRRD}).

Bias estimation affects the variance. Let $\mathbf V_{CW,1,2}^{bc}(h_n,b_n)$
denote the RBC asymptotic variance of $\hat\beta_{CW}^{\,bc}$ and
$\hat{\mathbf V}_{CW,1,2}^{bc}(h_n,b_n)$ its consistent estimator; both are
reported in Appendix~\ref{AppendixSectionRRD}.

\begin{theorem}[RBC inference for the CWLATE]\label{thm:CW_RBC}
Let the kernel, smoothness, and the RBC bandwidth conditions
for $(h_n,b_n)$ stated in Appendix~\ref{AppendixSectionRRD} hold, then

\vspace{-.5cm}

\[
\big(\mathbf V_{CW,1,2}^{bc}(h_n,b_n)\big)^{-1/2}
\big(\hat\beta_{CW}^{\,bc}-\beta_{CW}\big)
\ \xrightarrow{d}\ N(0,1),
\]
and $\hat{\mathbf V}_{CW,1,2}^{bc}(h_n,b_n)$, given in Appendix~\ref{AppendixSectionRRD}, is consistent for
$\mathbf V_{CW,1,2}^{bc}(h_n,b_n)$. Hence Wald confidence intervals based on
standard normal critical values are asymptotically valid.
\end{theorem}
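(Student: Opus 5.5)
The plan is to reduce the CWLATE statement to the standard RBC theory of \citet{Calonico_Cattaneo_Titiunik}, applied cell by cell and then aggregated through the delta method in \eqref{eq:expansion_CW}. First, for each cell $w_j$, we will apply the RBC results for local linear estimators with a local quadratic bias correction to the stacked vector of one-sided intercepts $(\hat\mu_Y^\pm(h_n),\hat\mu_X^\pm(h_n))$. Since the cells are disjoint and the regression in \eqref{eq:betahatpm_CW} is fully interacted with $\tilde W_i$, it coincides with separate within-cell regressions. Each cell therefore satisfies the univariate conditions of Calonico et al.\ with effective sample size $n\pi_j$. From this we obtain a joint linear representation
\[
\hat\delta_V^{bc}(w_j)-\delta_V(w_j)=\frac{1}{n}\sum_{i=1}^n \psi_{V,j,n}(Y_i,X_i,Z_i,W_i)+o_p\big((nh_n)^{-1/2}\big),
\]
where $\psi_{V,j,n}$ combines the local linear and the bias-estimation kernel weights, so it depends on $(h_n,b_n)$. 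The bias-corrected differences are centered after removing the $h_n^2$ term, with a remainder of order $h_n^2 b_n+h_n^3$, which is negligible relative to $(nh_n)^{-1/2}$ under the bandwidth conditions.

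Second, we will linearize $\hat\tau_Y$ and $\hat\tau_X$. We write
\[
\hat\tau_Y-\tau_Y=\sum_j\big[(\hat\pi_j-\pi_j)\delta_X\delta_Y+\pi_j\delta_Y(\hat\delta_X-\delta_X)+\pi_j\delta_X(\hat\delta_Y-\delta_Y)\big]+\text{products},
\]
and similarly for $\hat\tau_X$ with $2\pi_j\delta_X(\hat\delta_X-\delta_X)$. The $\hat\pi_j$ terms are $O_p(n^{-1/2})=o_p((nh_n)^{-1/2})$, and the products are $O_p((nh_n)^{-1}+h_n^4)$. Combined with \eqref{eq:expansion_CW}, this shows that $\hat\beta_{CW}-\beta_{CW}$ equals a fixed linear combination $\sum_j a_j^\top(\hat\delta_Y(w_j)-\delta_Y(w_j),\hat\delta_X(w_j)-\delta_X(w_j))$ plus a negligible remainder. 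The coefficients are
\[
a_j=\pi_j\big(\delta_X(w_j),\ \delta_Y(w_j)-2\beta_{CW}\delta_X(w_j)\big)/\tau_X.
\]
We define $h_n^2B_{CW,1,2}$ as the same linear combination of the cell leading biases, with $\hat B_{CW,1,2}$ its plug-in version. Then $\hat\beta^{bc}_{CW}-\beta_{CW}$ is the linear combination of the bias-corrected cell estimators plus $o_p((nh_n)^{-1/2})$, using $\hat a_j-a_j=o_p(1)$ and $h_n^2(\hat B-B)=o_p((nh_n)^{-1/2})$.

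Third, asymptotic normality follows from a Lyapunov CLT for the triangular array $\sum_j a_j^\top\psi_{j,n}$, with variance $\mathbf V^{bc}_{CW,1,2}=n^{-1}\mathrm{Var}(\sum_j a_j^\top\psi_{j,n})$ of exact order $(nh_n)^{-1}$. Observations in different cells contribute independent-looking summands, so the variance is a sum of cell-specific RBC variances, weighted by $a_j a_j^\top$ and including the $Y$–$X$ covariances. Consistency of $\hat{\mathbf V}$ will follow from cellwise consistency of the Calonico et al.\ plug-in residual-based variance estimators, together with consistency of $\hat a_j$.

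The main obstacle is ensuring the limiting variance is nondegenerate, i.e.\ $(nh_n)\mathbf V^{bc}$ stays bounded away from zero. This requires some $a_j\neq0$ together with positive conditional variances at the cutoff. Here $\tau_X>0$ by $\mathbb P(\delta_X(W_i)\neq0)>0$, and $a_j$ has nonzero first component whenever $\delta_X(w_j)\neq0$. We will therefore impose positive conditional variances of $Y$ given $(Z,W)$ at $z_0^\pm$ in the identified cells, and we need cells with $\delta_X=0$ to lie in $\mathcal W_{z_0}$ so their local fits are well defined. A secondary obstacle is controlling the ratio remainder $\hat{\mathcal R}$ uniformly, which holds since $\hat\tau_X\to_p\tau_X>0$.
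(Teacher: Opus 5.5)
Your proposal follows essentially the paper's own route. It uses the same delta-method linearization with coefficients $c_1=c/\tau_X$ and $c_2=\pi\circ(\delta_Y-2\beta_{CW}\delta_X)/\tau_X$ (your $a_j$), and the same decomposition $\hat\beta^{bc}_{CW}-\beta_{CW}=\hat l^{bc}_{\omega}+R_n-R_n^{bc}$ with negligible remainders. It then applies a triangular-array CLT to the bias-corrected linear term, whose variance retains the bias-estimation component, and uses plug-in residual-based variance estimation; the paper's stacked regression is equivalent to your cellwise one.

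There is one slip to fix. The term that must be negligible is the coefficient-estimation error $h_n^2\sum_j(\hat a_j-a_j)^\top\hat B_j$, with $\hat B_j$ the estimated cell biases (the paper's $R_n^{bc}$). It is not $h_n^2(\hat B-B)$: that term is not $o_p((nh_n)^{-1/2})$ unless $h_n/b_n\to0$, and it is exactly what $\mathbf V^{bc}_{CW,1,2}$ keeps. Relatedly, under the paper's bandwidth conditions $\mathbf V^{bc}_{CW,1,2}$ can be of larger order than $(nh_n)^{-1}$, namely $h_n^4/(nb_n^5)$ when $h_n/b_n\to\infty$. Remainders should therefore be compared with $(\mathbf V^{bc}_{CW,1,2})^{1/2}$ rather than $(nh_n)^{-1/2}$.
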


\section{Monte Carlo Simulations}\label{sec: MC}

We compare the CWLATE estimator to (i) the standard fuzzy RDD and (ii) fuzzy RDD with covariates, using the Stata command \texttt{RDROBUST} (\citealt{Calonico_Cattaneo_Titiunik}).

\subsection{Simulation Setup}\label{sec: MC setup}

Data are generated through the d.g.p. 

\vspace{-1.3cm}
\begin{align*}
Y_i &= 1 + 2X_i + 0.3Z_i - 0.1X_iZ_i + \beta_{XW} X_i W_i + U_{Y_i},\\
D_i &= 1\{Z_i\ge 0\},\quad 
X_i = 1\{-1 + 0.2Z_i + 0.1D_iZ_i + 1.2D_i + \alpha_{DW}D_iW_i + U_{X_i} > 0\},
\end{align*}

\vspace{-.4cm}

\noindent where $(U_{X_i},U_{Y_i}) \sim \mathcal{N}((0,0)', (1, .5;.5 , 2))$. The variables $Z_i$ and $W_i$ are independent, with $Z \sim \mathcal{N}(0,2)$, and $W_i$ takes two values, $-1$ and $1$, with equal likelihood. We consider two different values of the parameter that influences the heterogeneity in first-stages across different values of $W_i$, $\alpha_{DW}\in\{0,1\}$, and two different values of the analogous parameter that influences the heterogeneity in treatment effects, $\beta_{XW}\in\{0,2\}$. Note that Assumption \ref{ass:rdd} holds in this model. In fact, strong monotonicity holds for the parameter values we consider, so that the standard RDD estimators in the literature, which do not adapt to weak monotonicity, can be used as benchmark in this simulation.

For each $(\alpha_{DW},\beta_{XW})$ and $n\in\{300,\ldots,5000\}$ we run 10{,}000 replications and compute MSE for: the standard RDD $\hat\beta_U$, RDD with covariates $\hat\beta_U^{cov}$, and CWLATE $\hat\beta_{CW}$. We implement the two RDD estimators via \texttt{RDROBUST} with MSE‐optimal bandwidths (\citealt{Calonico_Cattaneo_Titiunik,Calonico_Cattaneo_Farrell_Titiunik}). To isolate \emph{weighting} as the only difference, CWLATE uses the \emph{same} bandwidth chosen for $\hat\beta_U^{cov}$ and restricts to observations within that bandwidth, thus differences in estimands cannot be due to differences in $f_W$ and $f_{W|z_0}$.

Figures~\ref{fig: First Stage Plots}--\ref{fig: heterogeneity plots} illustrate typical first‐stage and reduced‐form plots (bins of width $0.067$) for $n=5000$. The first-stage discontinuities are large, so this d.g.p. does not suffer from a weak-IV problem (note that the d.g.p. does not change with the sample size).

\begin{figure}[H]
\begin{center}
\caption{First-Stage Plots}\label{fig: First Stage Plots}
\begin{subfigure}[b]{0.45\textwidth}\caption{$\alpha_{DW}=0$}
\includegraphics[scale=0.55]{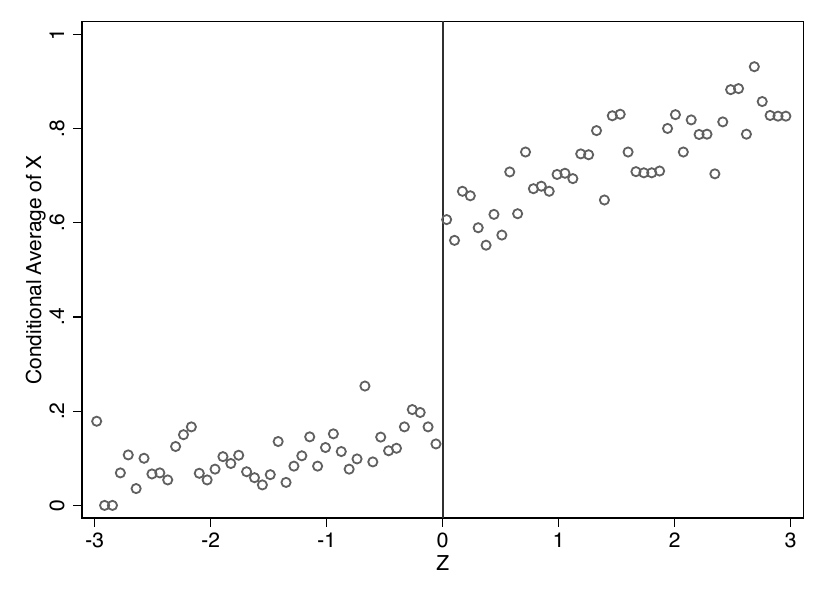}\end{subfigure}\hspace{.1in}
\begin{subfigure}[b]{0.45\textwidth}\caption{$\alpha_{DW}=1$}
\includegraphics[scale=0.55]{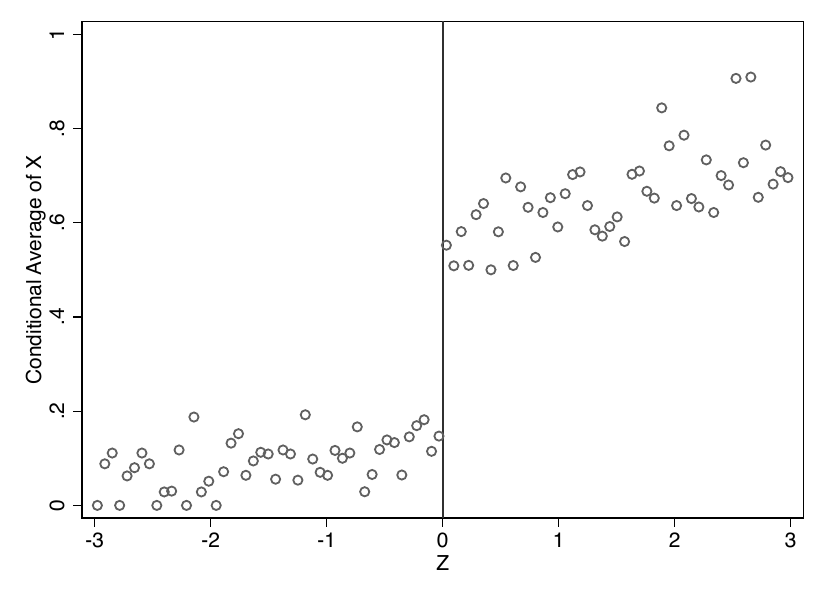}\end{subfigure}
\end{center}
\vspace{-.5in}
{\footnotesize \singlespacing Notes: Averages for $Z_i$ in bins of size $0.067$. Sample size is 5000.}
\end{figure}

\begin{figure}[H]
\begin{center}
\caption{Reduced Form Plots}\label{fig: RF Stage Plots}
\begin{subfigure}[b]{0.45\textwidth}\caption{$\alpha_{DW}=0,\ \beta_{XW}=0$}
\includegraphics[scale=0.55]{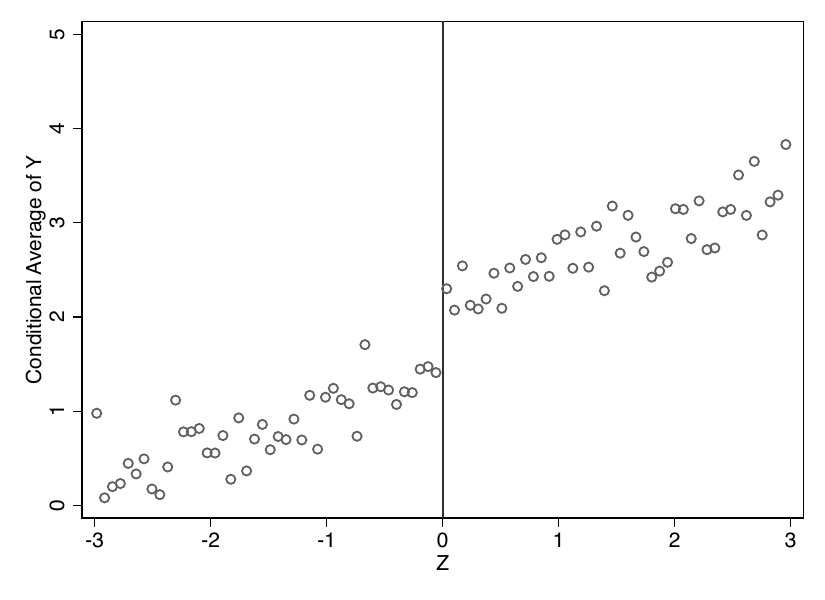}\end{subfigure}\hspace{.1in}
\begin{subfigure}[b]{0.45\textwidth}\caption{$\alpha_{DW}=1,\ \beta_{XW}=0$}
\includegraphics[scale=0.55]{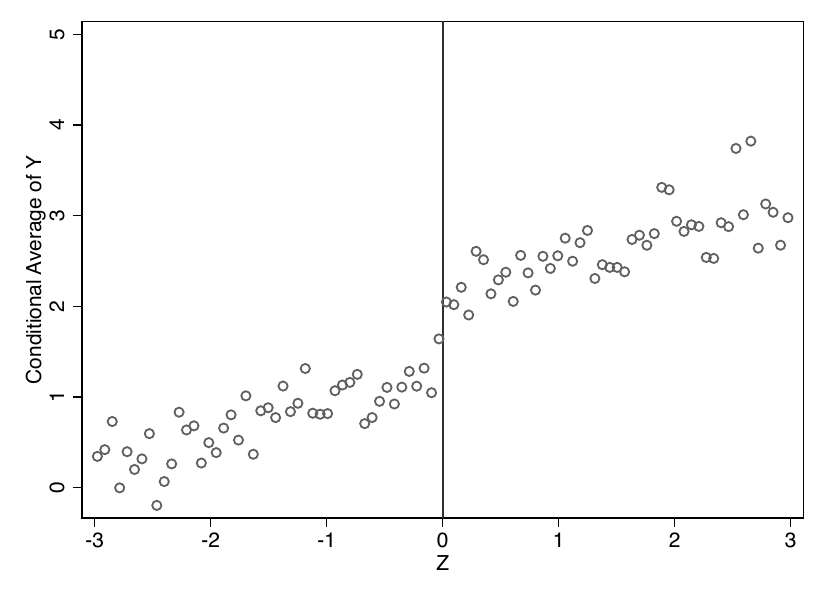}\end{subfigure}\\
\begin{subfigure}[b]{0.45\textwidth}\caption{$\alpha_{DW}=0,\ \beta_{XW}=2$}
\includegraphics[scale=0.55]{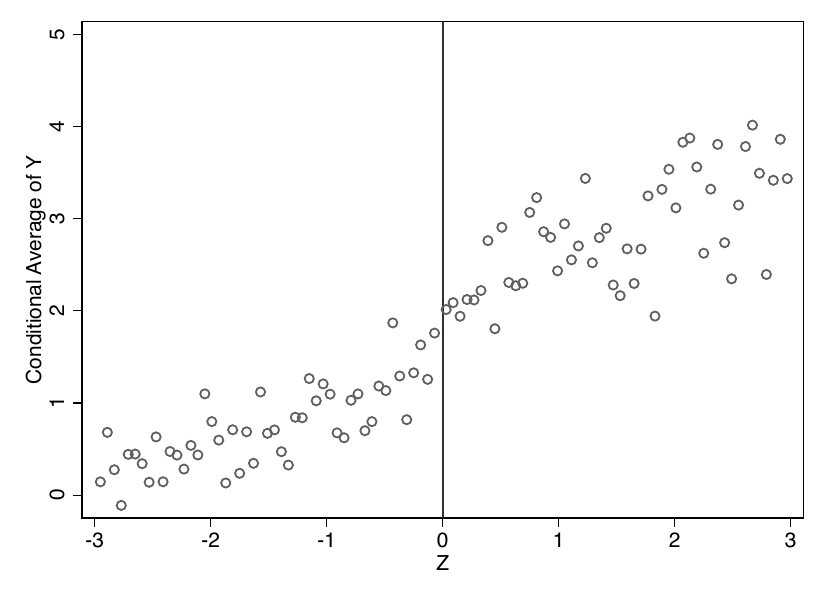}\end{subfigure}\hspace{.1in}
\begin{subfigure}[b]{0.45\textwidth}\caption{$\alpha_{DW}=1,\ \beta_{XW}=2$}
\includegraphics[scale=0.55]{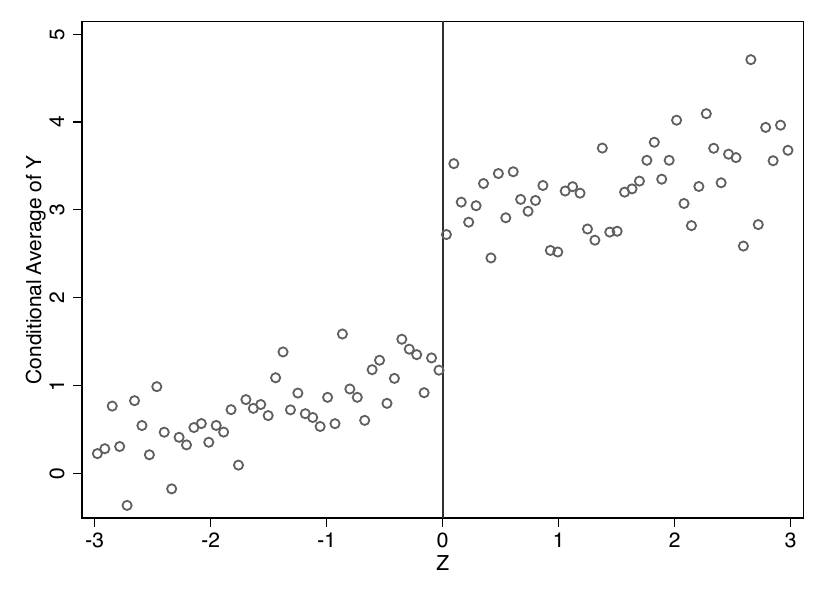}\end{subfigure}
\end{center}
\vspace{-.5in}
{\footnotesize \singlespacing Notes: Same binning as Fig.~\ref{fig: First Stage Plots}. Sample size is 5000.}
\end{figure}

\begin{figure}[H]
\begin{center}
\caption{Conditional Plots ($\alpha_{DW}=1,\ \beta_{XW}=2$)}\label{fig: heterogeneity plots}
\begin{subfigure}[b]{0.45\textwidth}\caption{First-Stage $|\,W_i=1$}
\includegraphics[scale=0.55]{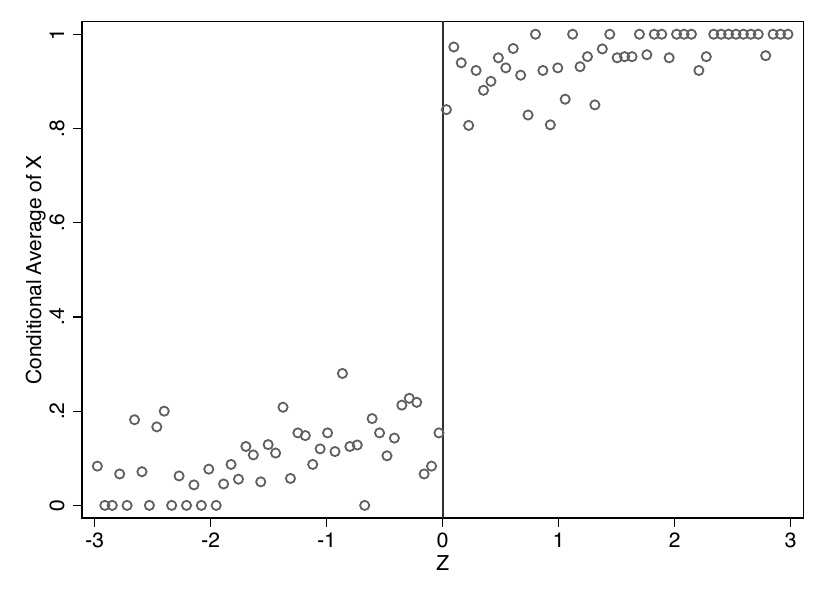}\end{subfigure}\hspace{.1in}
\begin{subfigure}[b]{0.45\textwidth}\caption{First-Stage $|\,W_i=-1$}
\includegraphics[scale=0.55]{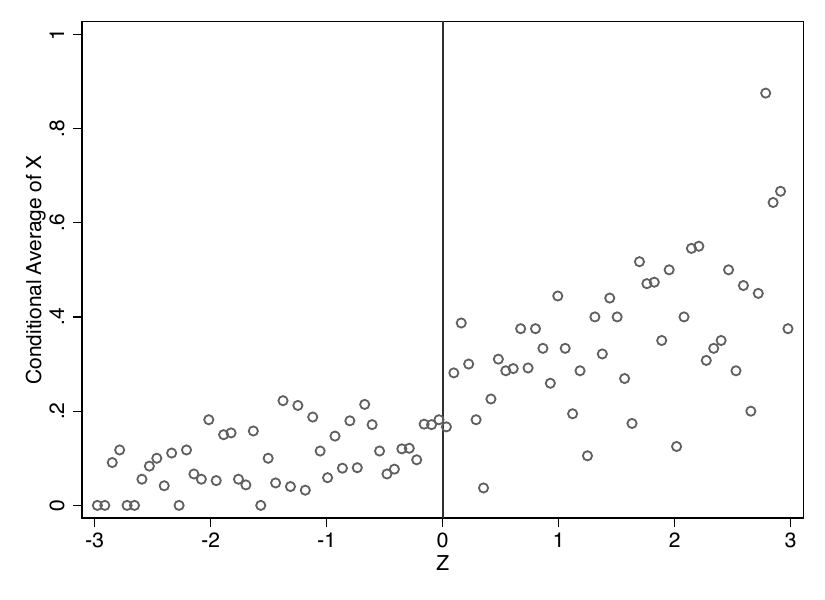}\end{subfigure}
\begin{subfigure}[b]{0.45\textwidth}\caption{Reduced Form $|\,W_i=1$}
\includegraphics[scale=0.55]{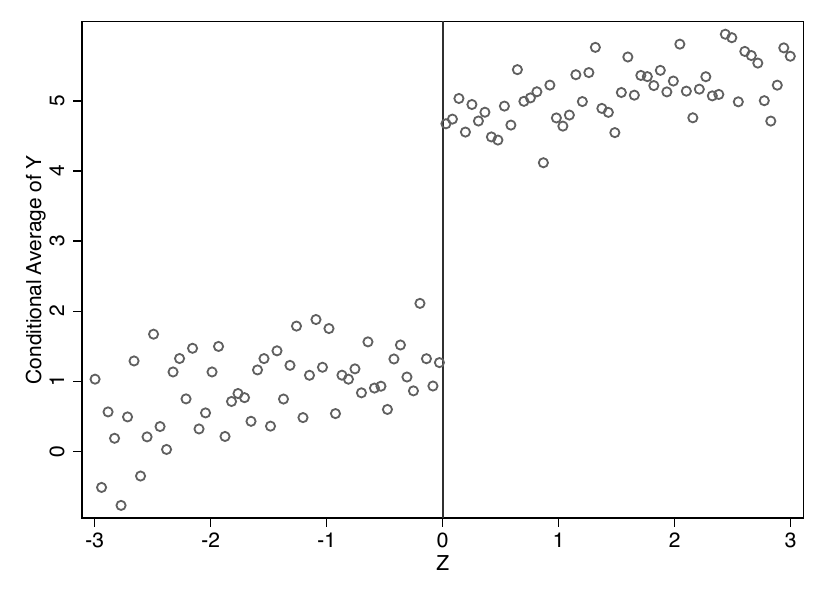}\end{subfigure}\hspace{.1in}
\begin{subfigure}[b]{0.45\textwidth}\caption{Reduced Form $|\,W_i=-1$}
\includegraphics[scale=0.55]{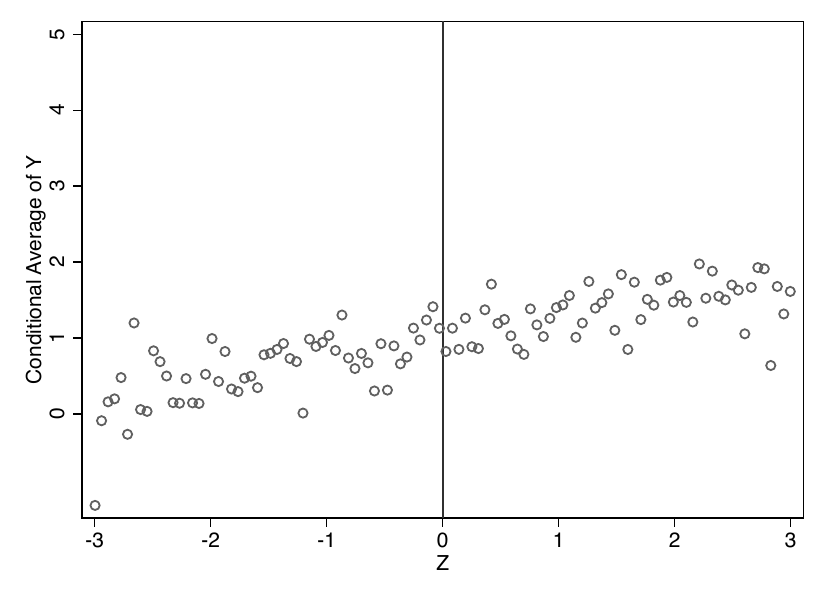}\end{subfigure}
\end{center}
\vspace{-.5in}
{\footnotesize \singlespacing Notes: Same binning as Fig.~\ref{fig: First Stage Plots}. Sample size is 5000.}
\end{figure}

Note that this Monte Carlo was designed to favor the performance of standard RDD estimators for the following reasons: (a) we assumed $W_i\Perp (Z_i,U_{X_i},U_{Y_i})$ and added no extra $Z_i\!\times\!W_i$ interactions; allowing them strengthens CWLATE further; (b) the baseline first-stage is relatively strong; weaker first-stages enlarge CWLATE's advantage, and may even lead to violations of strong monotonicity, which is required for the standard RDD estimators; (c) the bandwidth used for the CWLATE is the optimal one for $\hat{\beta}^{cov}_U$, and thus not necessarily the optimal one for the CWLATE itself.

\subsection{Results}

Table~\ref{tab: RDD} reports MSEs relative to the corresponding estimand (Unconditional LATE $\beta_U$ for $\hat\beta_U$ and $\hat\beta_U^{cov}$; Compliance-Weighted LATE $\beta_{CW}$ for $\hat\beta_{CW}$).

The main findings are as follows.
\begin{enumerate}[(i)]
\item For small samples, the standard RDD estimators display weak‐identification patterns, while CWLATE is stable in all cases.
\item The first two panels show the edge case where there is no first-stage heterogeneity ($\alpha_{DW}=0$).\footnote{In this case, $\delta_X(W_i)$ is constant, and therefore $\rho(b(W_i),\delta_X(W_i))=0$ for all WLATEs, thus the CWLATE has no advantage in theory. Moreover, note that Theorem \ref{thm:CW-instrument} considers only ``instruments'' with a positive variance, which the CWLATE does not satisfy in this case.} In large samples, all estimators perform very similarly when there is no heterogeneity in treatment effects ($\beta_{XW}=0$), with the standard RDD estimators taking the lead when the heterogeneity in treatment effects increases ($\beta_{XW}=2$).

\item In the last two panels, the first-stages are heterogeneous ($\alpha_{DW}=1$). Here, Theorem \ref{thm:CW-instrument} establishes that the CWLATE has the strongest identification. Unsurprisingly, the CWLATE estimator outperforms the RDD estimators in all cases. In fact, although all estimators perform worse as the treatment effects heterogeneity increases, the comparative advantage of the CWLATE estimator remains (compare the third and fourth panels, and also the third and fourth panels in Table \ref{tab:increaseheterogeneity} in Appendix \ref{sec: MC more het}, where $\beta_{XW}=5$ and 10, respectively). This is not surprising, since the result that establishes that the CWLATE has the strongest identification (Theorem \ref{thm:CW-instrument}) does not depend on $\beta(w)$.
\end{enumerate}

\begin{table}[H]
\caption{MSE of Unconditional LATE and Compliance-Weighted LATE Estimators}\label{tab: RDD}
\vspace{-.2in}
\begin{center}
\scalebox{.9}{\small \begin{tabularx}{1.13\linewidth}{c|rrr|rrr|rrr|rrr}

\toprule
 Obs.  & \multicolumn{6}{c}{\(\alpha_{DW}=0\)} & \multicolumn{6}{c}{\(\alpha_{DW}=1\)} \\  & \multicolumn{3}{c}{\(\beta_{XW}=0\)} & \multicolumn{3}{c}{\(\beta_{XW}=2\)} & \multicolumn{3}{c}{\(\beta_{XW}=0\)} & \multicolumn{3}{c}{\(\beta_{XW}=2\)} \\ & \(\hat{\beta}_{U}\) & \(\hat{\beta}^{cov}_{U}\) & \(\hat{\beta}_{CW}\) & \(\hat{\beta}_{U}\) & \(\hat{\beta}^{cov}_{U}\) & \(\hat{\beta}_{CW}\) & \(\hat{\beta}_{U}\) & \(\hat{\beta}^{cov}_{U}\) & \(\hat{\beta}_{CW}\) & \(\hat{\beta}_{U}\) & \(\hat{\beta}^{cov}_{U}\) & \(\hat{\beta}_{CW}\) \tabularnewline
\midrule \addlinespace[\belowrulesep]
300&300.00&1024.76&3.51&805.38&1285.75&4.26&641.67&637.22&1.27&3363.10&6153.34&1.70 \tabularnewline \addlinespace[.05in]
500&3.24&1.82&1.71&6.27&14.81&2.59&8.30&19.44&.64&267.24&18.65&.75 \tabularnewline \addlinespace[.05in]
1000&.46&.47&.44&.84&.71&1.02&2.82&1.21&.28&.89&.85&.31 \tabularnewline \addlinespace[.05in]
2000&.21&.21&.21&.37&.31&.50&.25&.26&.13&.36&.35&.14 \tabularnewline \addlinespace[.05in]
3000&.13&.13&.13&.24&.20&.33&.17&.17&.09&.23&.23&.09 \tabularnewline \addlinespace[.05in]
4000&.10&.10&.10&.17&.14&.24&.12&.12&.07&.17&.17&.07 \tabularnewline \addlinespace[.05in]
5000&.08&.08&.08&.14&.12&.20&.09&.09&.05&.13&.13&.06 \tabularnewline \addlinespace[.05in]
\bottomrule 

\end{tabularx}}
\end{center}
\end{table}

In Appendix \ref{sec: MC coarse}, we consider the scenario where the covariates that modulate the heterogeneity can take many values. We show that the performance of the CWLATE reduces only a little when the researcher uses a coarsened (binarized) version of the covariate, instead of the full covariate.

The results above suggest that standard RDD estimators tend to be more fragile to heterogeneity and to smaller samples than the CWLATE, which is to be expected given Theorem \ref{thm:CW-instrument}. In Appendix \ref{sec:betterestimator} we illustrate the issues of identification of the standard RDD estimand, $\beta_U,$ by considering the behavior of the three estimators when all of them target $\beta_U$. Here, the CWLATE is an inconsistent estimator of $\beta_U$. Yet, the first two panels in Table \ref{tab: RDD2} and Figure \ref{fig: MSE comp} show that the variance and weak identification problems of the standard RDD are serious enough that an inconsistent estimator can perform better even when the inconsistency is substantial. 

This does not mean that researchers should use the CWLATE as an estimator of $\beta_U$. Using an inconsistent estimator to target $\beta_U$ would result in wrongly-sized tests and wrong-coverage confidence intervals, as we show in the last two panels in Table \ref{tab: RDD2}. Rather, our point is that substantive claims about the effects of treatment on compliers might be better established with different weighted averages that can be estimated with considerably more precision. This point is further illustrated in the next section.

\section{Empirical Application}\label{application}

We consider the study of the effect of women receiving cash transfers while pregnant on the likelihood that their newborns have low birthweight (i.e., less than $2{,}500$ grams), as in \cite{amarante2016cash}, henceforth AMMV. A monthly cash-transfer policy of around \$100 was implemented in Uruguay from April 2005 to December 2007. Eligibility to receive cash transfers depended on a continuously distributed predicted income score, which was a linear combination of several household characteristics obtained from an interview. After this score was computed, eligibility was determined based on whether it fell below a pre-determined cutoff.\footnote{Neither the interviewers nor the households were informed about the exact formula to compute the predicted score, nor the exact cutoff, mitigating concerns of manipulation around the cutoff.} AMMV restrict the sample to households with a pregnant woman and ask whether the treatment (defined as the household receiving at least one cash transfer during pregnancy) affected the likelihood that the child was born with low birthweight.

Figure~\ref{fig: application first stage} shows the first-stage plot, where it is clear that families just below the cutoff are discontinuously more likely to have received a cash transfer while pregnant than those just above the cutoff.
\begin{figure}[H]
\caption{Unconditional First-Stage}%
\label{fig: application first stage}%
\vspace{-.25in}
\begin{center}
\includegraphics[scale=0.6]{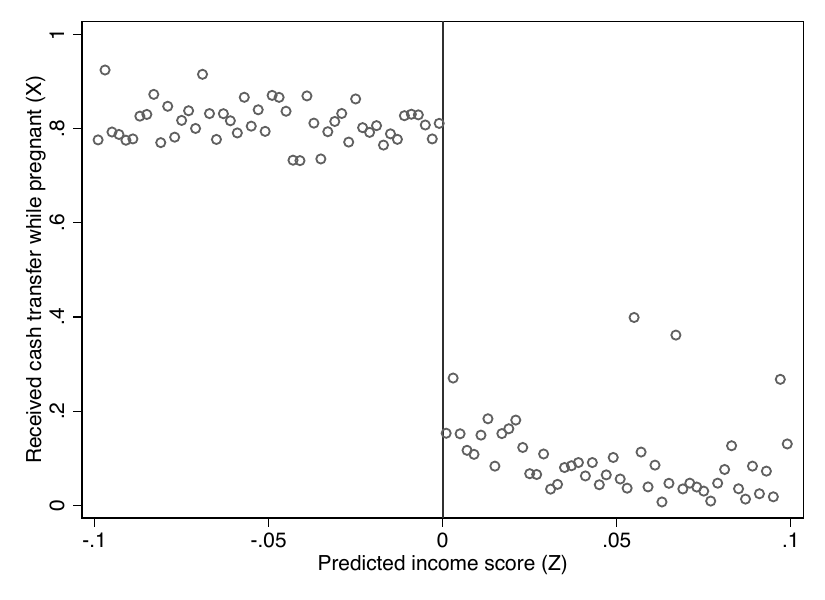}
\end{center}

\vspace{-0.4in}
{\footnotesize \singlespacing \textit{Notes}: The plot shows the average of $X$ for each value of $Z_i$, measured in bins of size  $0.002$.}
\end{figure}

\begin{figure}[!ht]
\caption{First-Stage by $W_i$}%
\label{fig: application first stage by W}%
\vspace{-.25in}
\begin{center}
\includegraphics[scale=0.55]{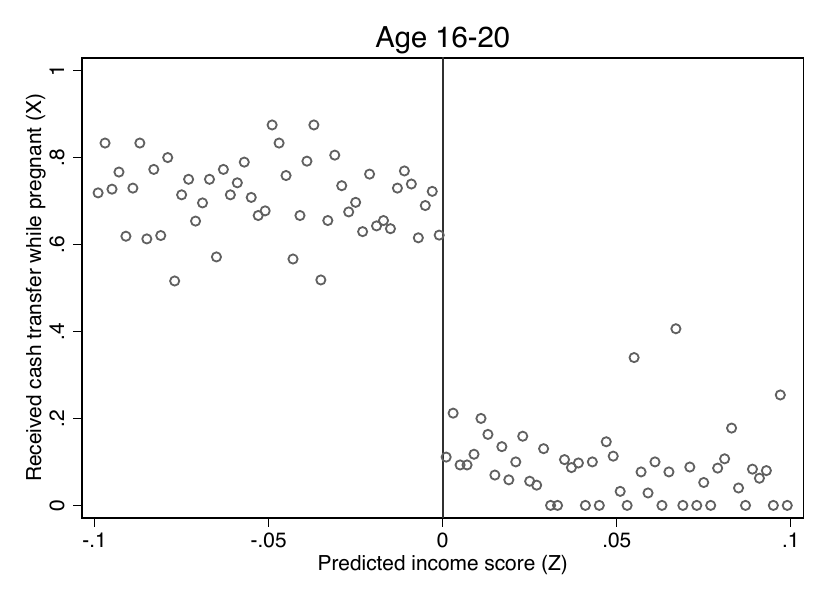}
\medspace\medspace \includegraphics[scale=0.55]{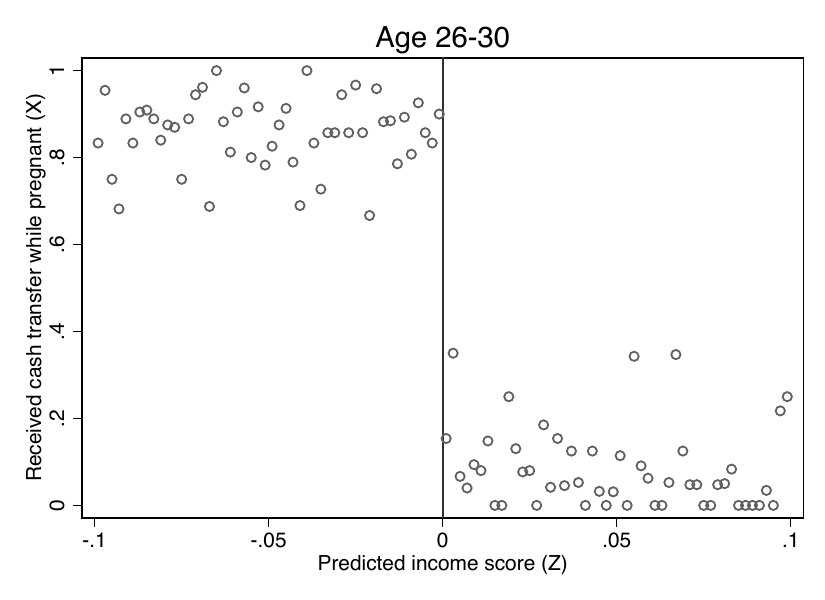}
\end{center}

\vspace{-0.4in}
{\footnotesize \singlespacing \textit{Notes}: The plot shows the average of $X_i$ for each value of $Z_i$, measured in bins of size  $0.002$. Each panel conditions the sample to a different given value of $W,$ represented in the caption.}

\end{figure}

Figure~\ref{fig: application first stage by W} shows first-stages for different values of the covariates $W_i$, specifically for mothers aged 16–20 (left panel) and mothers aged 26–30 (right panel). The first-stages are clearly heterogeneous in age, with younger mothers being less likely to be compliers.

Table \ref{tab: main results application} reports estimated effects of $X_i$
(receiving cash transfers while pregnant) on $Y_i$ (probability of low
birthweight) for a range of bandwidths. Column~1 reports the standard fuzzy
RDD Wald estimator based on unconditional discontinuities; Column~2 reports the
conventional covariate-adjusted fuzzy RDD estimator (implemented via local
regressions with covariates); Column~3 reports the CWLATE estimator, which
aggregates covariate-cell reduced-form and first-stage discontinuities using
compliance-weighting as in \eqref{eq:CWLATE_discrete}--\eqref{eq:cw_est}. To
make the comparison as clean as possible, we align the estimators on the same
local window: for each bandwidth choice, the CWLATE is computed using the same
observations within that bandwidth, and differs from the covariate-adjusted
Wald estimator only through the implied reweighting across covariate cells.
As covariates $W_i$, we include interactions between
the age of the mother, the education of the father, and the period when the
initial visit to determine eligibility occurred.\footnote{Specifically, we
consider interactions of the following variables: indicators of each age of
the mother from age 16 to 46 (bottom coded and top coded for the two extreme
ages), indicators for the education of the father (incomplete primary or less,
incomplete secondary, incomplete tertiary, complete tertiary, or missing,
which includes lack of presence of father), and indicators of the period when
the official visit to the household to obtain the inputs of the income score
happened (03/2005 or before, 04/2005--06/2005, 07/2005--09/2005,
10/2005--12/2005, 01/2006 or after). The findings were similar for different
breakdowns of these variables.}

\begin{table}[H]
\protect\caption{Main Results \label{tab: main results application}}
\vspace{-.2in}
\begin{center}
\begin{tabularx}{\textwidth}{cc|CC|CC|p{2cm}C}

\toprule
 & & \multicolumn{4}{c}{Unconditional LATE} & \multicolumn{2}{c}{CWLATE} \\ \midrule \vspace{-.2in} Bandwidth & N & \multicolumn{2}{c}{Standard RDD} & \multicolumn{2}{c}{RDD w/ Covariates} & \multicolumn{2}{c}{CWLATE Estimator} \tabularnewline
{}&{}&{}&{}&{}&{}&{}&{} \tabularnewline
\midrule\addlinespace[1.5ex]
0.04&4,591&-0.0427&(0.0289)&-0.0471*&(0.0281)&-0.0519**&(0.0232) \tabularnewline \addlinespace[.05in]
0.05&5,849&-0.0415&(0.0255)&-0.0446*&(0.0248)&-0.0478**&(0.0207) \tabularnewline \addlinespace[.05in]
0.06&7,109&-0.0354&(0.0229)&-0.0373*&(0.0224)&-0.0441**&(0.0192) \tabularnewline \addlinespace[.05in]
0.07&8,300&-0.0300&(0.0211)&-0.0313&(0.0207)&-0.0377**&(0.0180) \tabularnewline \addlinespace[.05in]
0.08&9,247&-0.0265&(0.0197)&-0.0285&(0.0195)&-0.0339**&(0.0171) \tabularnewline \addlinespace[.05in]
0.09&10,258&-0.0240&(0.0187)&-0.0259&(0.0184)&-0.0291*&(0.0162) \tabularnewline \addlinespace[.05in]
0.10&11,276&-0.0230&(0.0177)&-0.0244&(0.0175)&-0.0245&(0.0155) \tabularnewline \addlinespace[.05in]
\bottomrule \addlinespace[1.5ex]

\end{tabularx}

\end{center}

\vspace{-.2in}
\footnotesize \begin{singlespace} 
    Note: Heteroskedasticity-robust standard errors in parentheses. **: Significant at 5\%. *: Significant at 10\%.
 \end{singlespace}
\end{table}

The two standard RDD estimates are very similar to those reported in AMMV (e.g. see AMMV's Figure A5), and tend to be insignificant at standard levels. However, because of the low precision of the RDD estimators in this application, AMMV's leading results are obtained using a different identification strategy.\footnote{Specifically, they implement a ``localized difference-in-differences'' approach with one of the differences being the regression discontinuity in the first column with a large bandwidth, equal to 0.10. For context, the optimal MSE-based bandwidths for both the standard RDD and RDD with covariates in this application are around 0.04.} With this approach, they find an effect of around -0.02 and a standard error of around 0.01.

In contrast to the standard RDD estimates, the CWLATE estimates in the last column tend to be significant at standard levels for most bandwidths.\footnote{To isolate \emph{weighting} as the only difference, CWLATE uses the \emph{same} bandwidth chosen for $\hat\beta_U^{cov}$ and restricts to observations within that bandwidth, thus differences in estimands cannot be due to differences in $f_W$ and $f_{W|z_0}$.} The main reason the CWLATE estimates tend to be more significant than the standard RDD estimates is that their standard error tends to be smaller. Nevertheless, because the CWLATE estimates are slightly more negative than the standard RDD estimators with the same bandwidth, there is also some evidence of selection on gains--values of $W_i$ with higher compliance tend to have better (more negative) treatment effects (see Section \ref{sec: robust identification}).\footnote{These results also allow us to indirectly assess the sign of $\beta_U$ from the sign of $\beta_{CW}$. Under strong monotonicity, if the sign of $\beta(W_i)$ does not change across values of $W_i$, then $\beta_U$ and $\beta_{CW}$ must have the same sign. Under this assumption, our results are evidence that $\beta_U$ is indeed negative, even if the estimates are not significant. This is because our estimates of $\beta_{CW}$ are negative and significant.} Therefore, the CWLATE estimator provides evidence based on a cross-sectional RDD identification strategy that the Compliance-Weighted LATE estimand is negative.

The results in Table \ref{tab: main results application} showcase how the CWLATE estimator could be useful in practice. First, it provides a relatively precise estimate of a meaningful weighted average of treatment effects on compliers based on an RDD identification strategy. There is no \emph{ex ante} reason to accept the Unconditional LATE as a satisfactory estimand and, at the same time, reject the CWLATE in this application. Both estimands consider only observations near the cutoff, and weight positively the exact same covariate groups of compliers. The only difference is that the Unconditional LATE weights by the proportion of compliers while the CWLATE weighs by the square of the proportion of compliers, thus tilting towards the higher compliance groups. 

Second, the large standard errors of the standard RDD estimators may lead researchers to make ``second best'' decisions that effectively would take them farther away from the Unconditional LATE than the use of the CWLATE estimator would. For instance, researchers may use larger-than-optimal bandwidths, as done by AMMV. Note that the vertical differences in Table \ref{tab: main results application} (across different bandwidths) are much more pronounced than the horizontal differences (across the different estimators, and different estimands in the case of the CWLATE), suggesting that the choice of bandwidth may be more consequential than the choice between these two estimands in this context. Researchers may go as far as to abandon the RDD identification strategy altogether in favor of an alternative that yields a more precise estimate, where the identification assumptions might be stronger and the estimand may not be the Unconditional LATE at all. For example, this is the case with the difference-in-differences with controls leading estimate chosen by AMMV, which often deviates from desirable target estimands and may not even identify a treatment effect (see \citealt{caetano2022difference}).

Overall, we are able to confirm with an RDD identification strategy the qualitative findings from \cite{amarante2016cash} that the cash-transfers policy led to a reduction in the incidence of low birthweight among recipients. Our estimates are over twice as large as their estimates, suggesting that cash-transfers contributed to a reduction in the incidence of low-birthweight of the complier population from 10\% to about 5–6\%.
\section{Conclusions}\label{conclusion}

This paper studies identification and estimation in fuzzy RDDs with covariates when treatment compliance varies across observed subgroups. We characterize a broad class of WLATEs, defined as weighted averages
of covariate-specific LATEs at the cutoff, and provide necessary and sufficient
conditions for their point identification. Any identified WLATE admits an IV-type representation, where the specific WLATE is determined by the function of the covariates that plays the role of the ``instrument.'' This representation allows us to single out the Compliance-Weighted LATE (CWLATE) as the WLATE with the strongest ``instrument'' among all WLATEs. In comparison with the estimand of the standard RDD estimators, the CWLATE puts relatively more weight on groups with more compliers.

We provide a policy interpretation of the WLATEs. We also propose simple plug-in estimators in the case of discrete covariates, derive their asymptotic distribution, and extend robust bias-corrected inference methods to this setting, allowing researchers to implement CWLATE estimation with standard RDD tools and bandwidth selectors.

Monte Carlo simulations confirm the theoretical strength of the CWLATE target, and expose the practical issues with the standard RDD estimators. An application
to Uruguay's cash transfer program for women during pregnancy illustrates these issues. While the standard RDD estimators yield inconclusive results, the CWLATE estimator shows that transfers substantially reduced the probability of low birthweight among compliers.

Overall, our results support a simple message for applied work with fuzzy RDDs
and covariates: instead of defaulting to the unconditional LATE, researchers
can transparently report WLATEs, and in particular the CWLATE, which are
identified under standard RDD assumptions and explicitly aligned with where the
design is strongest. This perspective preserves the credibility of the RDD
while improving empirical content in applications where first-stage variation
is unevenly distributed across observed covariates.

\vspace{-.5cm}

\singlespace
\bibliographystyle{apalike}
\bibliography{caetano_escanciano.bib}


\clearpage
\setcounter{page}{1}
\pagenumbering{arabic}
\doublespacing
\setcounter{section}{0}
\renewcommand{\thesection}{\Alph{section}}
\setcounter{table}{0}
\renewcommand{\thetable}{\Alph{section}.\arabic{table}}
\setcounter{figure}{0}
\renewcommand{\thefigure}{\Alph{section}.\arabic{figure}}

\begin{center}
\Large Online Appendix
\end{center}
\section{Policy Interpretation of the WLATEs}

We interpret the WLATE estimands using a policy decision framework, and consider the interpretation of the CWLATE in that context. 

Suppose that a policymaker targets the offer of a treatment incentive based on the value of covariates, $W_i.$ Each individual who will receive the incentive is selected in a two-step process where first a value $w$ is drawn from a distribution chosen by the policymaker (so that incentives are more or less likely to reach certain groups of the population according to the policymaker's wishes). The incentive is then given to a random individual among those with $W_i=w$ at the cutoff. We describe the policy precisely in the following assumption.

\begin{assumption}\label{as:policy} 
Let $W_i$ be continuously distributed in the population, with density function $f_W$.\footnote{This assumption is made for simplicity, as we can also allow $W_i$ to not be continuously distributed. For example, if $W_i$ is discrete with support $\mathcal{W}=\{w_1,\dots,w_m\},$ $p$ in Assumption \ref{as:policy} may be defined as a set of probabilities that a given value of $W_i\in \mathcal{W}$ is drawn: $p_1,\dots,p_m,$ with $\sum_{j=1}^m p_j=1$. All quantities defined henceforth are analogous. For example, let $f_j=\mathbb{P}(W_i=w_j),$ $\bm{P_C}=\sum_{j=1}^m \delta_X(w_j)p_j=\sum_{j=1}^m \left(\frac{p_j}{f_j}\delta(w_j)\right) f_j$. Extension of results to the discrete case is straightforward.} The policymaker observes the value of $W_i$ for all individuals in the population. A treatment incentive is offered to some of these individuals. Each individual who receives the incentive is selected as follows.\begin{itemize}
\item[]\label{as:determ}\textbf{Step 1:} A value $w$ is drawn from a distribution $p,$ where $p(w)\geq0,$ $\int p(w)dw=1,$  $\int p(w)^2dw<\infty$ and, if $f_W(w)=0,$ then $p(w)=0$. 
\item[] \label{as:random}\textbf{Step 2:} The incentive is offered to an individual who is randomly drawn among those with $W_i=w$ and $Z_i=z_0$. 
\end{itemize}

\noindent We further assume that
\begin{enumerate}[(i)]
\item \label{as:mon}Nobody offered the incentive decreases the amount of treatment. 
\item \label{as:sample}A sample of vectors $(Y_i,X_i,Z_i,W_i')'$ which is representative of the population is available, where identical incentives to those intended by the policy were offered to those with $Z_i\geq z_0$.
\end{enumerate}
\end{assumption}

Let $A\in \mathcal{W}$ be a $p$-measurable set. The policy described above assures that roughly a proportion $\int_A p(w)dw$ of the incentives are allocated for individuals with $W_i\in A$. If one were to assign incentives randomly among the population, this likelihood would be $\int_A f_W(w)dw.$ Therefore, $p(w)/f_W(w)$ represents the relative ``tilt'' of the policy in targeting certain values of $w$ over others. Note that Step 1 guarantees that no incentives may be allocated outside the support of the distribution of $W_i$ (thus, for completeness, define $p(w)/f_W(w)=0$ whenever $f_W(w)=0$).

Since the incentive units assigned to those with $W_i=w$ and $Z_i=z_0$ are distributed randomly in that group, some of the incentives may be offered to individuals that will be treated anyway (always takers), and some of the incentives may be offered to individuals that will not take the treatment anyway (never takers). These individuals are not affected by the policy, since their treatment status will be the same whether or not they are offered the incentive. The policy only affects those who change their treatment status as a consequence of receiving the incentive. Assumption \ref{as:policy}\eqref{as:mon} rules out the existence of defiers by assuming strong monotonicity. It is feasible, but more cumbersome, to devise a policy interpretation of the WLATEs under weak monotonicity.

Assumption \ref{as:policy}\eqref{as:sample} makes it possible to tie the intended policy to the existing data. Precisely, while the data is generated from an RDD quasi-experiment, the policy would be applied in a population using the exact procedure described in Assumption \ref{as:policy}. Therefore, to identify the effects of the described policy, the data must be representative of the policy population, and the incentives considered by the policy must be identical to the incentives faced by the observations above the cutoff in the data.

If Assumption \ref{ass:rdd} holds, then
per incentive given, the probability it reaches a complier, denoted $\textbf{P}_\textbf{C},$ satisfies
\begin{equation*}
    \textbf{P}_\textbf{C} = \int_{\mathcal{W}} \delta_X(w)p(w)dw=  E\left[\frac{p(W_i)}{f_W(W_i)} \delta_X(W_i)\right].
\end{equation*}
The first equality follows because    $p(w)$ is the probability that $w$ is selected, and $\delta_X(w)$ is the probability that a random individual with $W_i=w$ and $Z_i=z_0$ is a complier. 

Since the incentive has no effect on always takers and never takers, the expected effect of each incentive offered, denoted  $\textbf{APE}$ (for average policy effect per incentive unit) is
\begin{equation*}
     \textbf{APE}= \int_{\mathcal{W}} \beta(w)\delta_X(w)p(w)dw=\E\left[\frac{p(W_i)}{f_W(W_i)} \delta_X(W_i)\beta(W_i)\right].
\end{equation*}
 The expected effect per incentive unit on the compliers that are incentivized is the LAPE, where L stands for ``local'' to compliers. It is given by
\begin{equation*}
      \textbf{LAPE}=\frac{\textbf{APE}}{\textbf{P}_\textbf{C}}
      =\frac{E\left[\frac{p(W_i)}{f_W(W_i)}\delta_X(W_i)\beta(W_i)\right]}{E\left[\frac{p(W_i)}{f_W(W_i)}\delta_X(W_i)\right]}.
\end{equation*}

The following result is an immediate consequence of Theorem \ref{thm:WLATE}.
\begin{theorem}\label{thm:policy}
If Assumptions \ref{ass:rdd} and \ref{as:policy} hold, then the APE and the LAPE are identified if, and only if, $p(W_i)$ is identified and the conditions of Theorem \ref{thm:WLATE} hold a.s.\ when $p(W_i)\delta_X(W_i)>0$. The LAPE is a WLATE with weights
\begin{equation*}
    \omega(W_i)=\frac{p(W_i)\delta_X(W_i)/f_W(W_i)}{E\left[p(W_i)\delta_X(W_i)/f_W(W_i)\right]},
\end{equation*}
and ``instrument''
\begin{equation*}
        b(W_i)= c\cdot p(W_i)/f_W(W_i),
\end{equation*} 
where $c>0$ is a constant. Conversely, a WLATE with ``instrument'' $W_i$ is a LAPE with $p(w)=b(w)f_W(w)/\E[b(W_i)]$.
\end{theorem}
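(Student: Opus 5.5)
We will obtain Theorem \ref{thm:policy} directly from Theorem \ref{thm:WLATE} by setting the ``instrument'' equal to the policy tilt, $b(W_i)=p(W_i)/f_W(W_i)$ (with $b=0$ where $f_W=0$), and checking that it meets the admissibility conditions of part (d). Under Assumption \ref{as:policy}\eqref{as:mon} (strong monotonicity), $\delta_X(w)\ge 0$ for every $w$, which follows from Theorem \ref{thm:WLATE}(a)'s underlying argument that $\delta_X(w)=\mathbb{P}(\Delta_i=1\mid W_i=w,Z_i=z_0)$. Since $p\ge 0$, we then have $b(W_i)\delta_X(W_i)\ge 0$ a.s. Moreover $\textbf{P}_\textbf{C}=\E[b(W_i)\delta_X(W_i)]$ and $\textbf{APE}=\E[b(W_i)\delta_X(W_i)\beta(W_i)]$. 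Finiteness of these moments follows from $\int p^2<\infty$ together with the standing finite-moment assumption, via Cauchy--Schwarz.

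For the identification claim, we use Theorem \ref{thm:WLATE}(a) to rewrite $\textbf{APE}=\E[b(W_i)\delta_Y(W_i)]$. This quantity is a functional of the identified objects $\delta_Y$ and $f_W$ together with $p$. For the ``if'' direction, assume $p$ is identified and that $\delta_X(w)\neq 0$ and $w\in\mathcal W_{z_0}$ hold whenever $p(w)\delta_X(w)>0$. On that set, $\delta_Y(w)$ and $\delta_X(w)$ are identified by Theorem \ref{thm:WLATE}(b). Off that set, $b\,\delta_X=0$, so $b\,\delta_Y=b\,\beta\,\delta_X=0$ as well. Hence $\textbf{APE}$ and $\textbf{P}_\textbf{C}$ are identified, and so is $\textbf{LAPE}$, as long as $\textbf{P}_\textbf{C}>0$.

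For the ``only if'' direction, note that $\textbf{LAPE}=\beta_\omega$ with
\[
\omega=\frac{b\,\delta_X}{\E[b\,\delta_X]}=\frac{p\,\delta_X/f_W}{\E[p\,\delta_X/f_W]},
\]
so Theorem \ref{thm:WLATE}(c) gives necessity of the cell conditions on $\{\omega>0\}=\{p\,\delta_X>0\}$. For $\textbf{APE}$, necessity follows the same way, since on positive-weight cells it is a positive multiple of $\textbf{LAPE}$. Necessity of identifying $p$ itself is immediate, because the estimands depend on $p$ through $b$. This step is the main obstacle: one must ensure that different $p$ consistent with the data produce different values. I would handle it by reading ``$p$ identified'' as part of the policy specification, and by noting that scaling $b$ by any $c>0$ leaves the ratio unchanged, which explains the constant $c$ in the statement.

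The converse is proved by construction. Given an admissible ``instrument'' $b$ with $b\ge 0$ (under strong monotonicity, $b\,\delta_X\ge 0$ with $\E[b\,\delta_X]>0$, and $b$ can be taken nonnegative on $\{\delta_X=0\}$ without changing $\omega$), define $p(w)=b(w)f_W(w)/\E[b(W_i)]$. This $p$ is nonnegative, integrates to one, and vanishes where $f_W=0$; square-integrability holds under the maintained moment conditions. Then $p/f_W=b/\E[b]$, so the LAPE for this $p$ equals the ratio \eqref{eq:iv-rep}, i.e. the given WLATE.
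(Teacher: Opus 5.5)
Your proposal is correct and is essentially the paper's argument. The paper gives no separate proof, calling the result an immediate consequence of Theorem \ref{thm:WLATE}: set $b=p/f_W$ in part (d), and let parts (b)--(c) supply the identification conditions on $\{p\,\delta_X>0\}$. You simply fill in the checks the paper leaves implicit, namely the sign of $b\,\delta_X$ under strong monotonicity and taking $b\ge 0$ on $\{\delta_X=0\}$ in the converse.

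Your hesitation about the necessity of identifying $p$ is justified, since it cannot hold literally for every design: with constant $\beta(\cdot)$ the LAPE does not depend on $p$. Reading identification of $p$ as part of the policy specification, as you propose and as the paper implicitly does, is the right interpretation.
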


This result informs us that every identifiable WLATE with ``instrument'' $b$ can be interpreted as the expected per-incentive unit effect of a policy of the type described in Assumption \ref{as:policy} among those who are affected by it. For each incentive unit offered, the policy consists on drawing a value $w$ from the distribution function $p(w)=b(w)f_W(w)/E[b(W_i)],$ and giving the incentive to a random individual with $W_i=w$ and $Z_i=z_0$. 

By this reasoning, the estimands discussed in the examples in Section \ref{identification} can be interpreted as the expected effect among affected compliers per unit of incentive, where the incentives are distributed according to specific targeted policies. The corresponding targeting distribution $p$ is as follows.  

\begin{example}\label{exaa} [Unconditional LATE, cont.] $\beta_U$ (Example \ref{ex:U-simple}) is the LAPE of a policy with 
\begin{equation*}
    p(w)=f_{W|z_{0}}(w).
\end{equation*}
\end{example}
The Unconditional LATE estimand $\beta_U$ is well known to correspond to the average effect among compliers of a policy where incentives are distributed entirely randomly among those at the cutoff. This ``random'' policy can be understood equivalently as a ``targeted'' policy such as the one described in Assumption \ref{as:policy} where the values of $w$ are drawn from the population distribution of $W_i$ at the cutoff. 

\begin{example}\label{hh} [Average LATE, cont.] $\beta_A$ (Example \ref{ex:A-simple}) is the LAPE of a policy with 
\begin{equation*}
    p(w)=\frac{f_W(w)}{\delta_X(w) E[\delta_X(W_i)^{-1}]}.
\end{equation*}
\end{example}

\begin{example}\label{hjh} [Counterfactual Average LATE, cont.] $\beta_C$ (Example \ref{ex:C-simple}) is the LAPE of a policy with
\begin{equation*}
    p(w)=\frac{f^*(w)}{\delta_X(w) E[\delta_X(W_i)^{-1}|W_i\in \mathcal{W}^*]}.
\end{equation*}
\end{example}

\begin{example}\label{hsh}[Maximal Average Social Welfare Gain, cont.] $\beta_S$ (Example \ref{ex:S-simple}) is the LAPE of a policy with 
\begin{equation*}
    p(w)=\frac{f_W(w)}{\delta_X(W_i) E[\delta_X(W_i)^{-1}|\delta_Y(W_i)\geq 0]}\cdot\frac{1(\delta_Y(W_i)\geq0)}{\mathbb{P}(\delta_Y(W_i)\geq 0)}.
\end{equation*} 
\end{example}

The policies implied by the estimands $\beta_A,$ $\beta_C$ and $\beta_S$ tilt away from a random distribution in that they are less likely to give incentives to the groups with more compliers (i.e. when $\delta_X(W_i)$ is relatively large). $\beta_C$ also tilts in favor of  $W_i$ values that are proportionally more likely in the counterfactual distribution than in the actual distribution. Note that $\beta_S$ only draws from groups with non-negative conditional LATEs.

\begin{example}\label{iij}[Compliance-Weighted LATE] $\beta_{CW}$ (Section \ref{sec: robust identification}) is the LAPE of a policy with 
\begin{equation*}
        p(w)=\frac{\delta_X(w)f_W(w)}{ E[\delta_X(W_i)]}.
\end{equation*}
\end{example}

The policy implied by the CWLATE estimand, $\beta_{CW},$ tilts from a random distribution in giving more incentives to groups with a larger proportion of compliers. 

Policymakers might have a variety of different goals when distributing incentive units to individuals. For instance, a policymaker who wants to reduce income inequality might want to tilt an incentive policy towards values of $W_i$ that are associated with poor families. Moreover, the policy decision may depend on contrasting expected effects per incentive unit (APEs) or expected effects on affected compliers per incentive unit (LAPEs) of several policies at the same time, taking into account targeting parameters (e.g., the income cutoff to receive the treatment) and other criteria (e.g., the policymaker may want to take into account the policy cost in the decision). Theorem \ref{thm:policy} determines the range of policies for which the effect can be identified, and gives the formula of these effects.

\newpage 
\section{Identification Proofs}\label{app: id proofs}

\begin{proof}[Proof of Theorem \ref{thm:WLATE}]

\noindent\textbf{(a)}
Using $Y_i = Y_i(0)+\beta_i X_i$,
\begin{align*}
\delta_Y(w)
&= \E[Y_i\mid Z_i=z_0^+,W_i=w]-\E[Y_i\mid Z_i=z_0^-,W_i=w]\\
&= \underbrace{\E[Y_i(0)\mid Z_i=z_0^+,W_i=w]-\E[Y_i(0)\mid Z_i=z_0^-,W_i=w]}_{=0}
\\[-0.2em]
&\quad
+ \lim_{e\downarrow0}\Big\{
\E[\beta_i X_i\mid Z_i=z_0+e,W_i=w]
-\E[\beta_i X_i\mid Z_i=z_0-e,W_i=w]
\Big\}.
\end{align*}
By Assumption \ref{ass:rdd}(i) the $Y_i(0)$ term is continuous at $z_0$.
For $e>0$,
\[
\E[\beta_i X_i\mid Z_i=z_0+e,W_i=w]
-\E[\beta_i X_i\mid Z_i=z_0-e,W_i=w]
= \E[\beta_i \Delta_i(e)\mid Z_i=z_0,W_i=w],
\]
where $\Delta_i(e)=X_i(z_0+e)-X_i(z_0-e)$.
Assumption \ref{ass:rdd}(ii) and (iii), plus dominated convergence, imply
$\lim_{e\downarrow0}\E[\beta_i \Delta_i(e)\mid Z_i=z_0,W_i=w]
= \E[\beta_i\Delta_i\mid Z_i=z_0,W_i=w]$,
where $\Delta_i=\lim_{e\downarrow0}\Delta_i(e)$.
By weak monotonicity (iv), at each $w$ compliers and defiers do not coexist, so
\[
\E[\beta_i\Delta_i\mid Z_i=z_0,W_i=w]
= \E[\beta_i\mid Z_i=z_0,W_i=w,\Delta_i\neq0]\,
  \E[\Delta_i\mid Z_i=z_0,W_i=w].
\]
By definition,
$\beta(w)=\E[\beta_i\mid Z_i=z_0,W_i=w,\Delta_i\neq0]$ and
$\delta_X(w)=\E[\Delta_i\mid Z_i=z_0,W_i=w]$,
hence $\delta_Y(w)=\beta(w)\delta_X(w)$.

\medskip
\noindent\textbf{(b)}
\emph{Sufficiency.}
If $\delta_X(w)\neq0$ and $w\in\mathcal W_{z_0}$, then
$\E[Y_i\mid Z_i=z,W_i=w]$ and $\E[X_i\mid Z_i=z,W_i=w]$ are identified for
$z\to z_0^\pm$, so $\delta_Y(w)$ and $\delta_X(w)$ are identified and $\beta(w)=\delta_Y(w)/\delta_X(w)$.

\emph{Necessity.}
If $\delta_X(w)=0$, then for any $\phi$,
$\tilde\beta(w)=\beta(w)+\phi$ satisfies
$\tilde\beta(w)\delta_X(w)=\beta(w)\delta_X(w)=\delta_Y(w)$, so $\beta(w)$ is
not unique.
If $w\notin\mathcal W_{z_0}$, then $w$ has no support arbitrarily close to one
side of the cutoff, so at least one of $\delta_Y(w),\delta_X(w)$ is not
identified, and any $\beta(w)$ consistent with $\delta_Y(w)=\beta(w)\delta_X(w)$
is observationally equivalent.
Hence conditions (i)–(ii) are necessary and sufficient.

\medskip
\noindent\textbf{(c)}
Let $\omega\ge0$ a.s.\ with $\E[\omega]=1$ and
$\beta_\omega=\E[\omega(W_i)\beta(W_i)]$.

\emph{Sufficiency.}
If (i)–(ii) in (b) hold for all $w$ with $\omega(w)>0$, then each such
$\beta(w)$ is identified, so $\beta_\omega$ is identified.

\emph{Necessity.}
Suppose there exists $w^\ast$ with $\omega(w^\ast)>0$ where either
$\delta_X(w^\ast)=0$ or $w^\ast\notin\mathcal W_{z_0}$.
By (b) we can change $\beta(w^\ast)$ without affecting observables, obtaining
$\tilde\beta$ such that
\[
\tilde\beta_\omega
= \E[\omega(W_i)\tilde\beta(W_i)]
= \beta_\omega + \phi\,\omega(w^\ast)
\]
for arbitrary $\phi$, so $\beta_\omega$ is not point identified.
Thus the stated condition is also necessary.

\medskip
\noindent\textbf{(d)}
\emph{From WLATE to IV form.}
If $\beta_\omega$ is identified, then by (c) for all $w$ with $\omega(w)>0$,
$\delta_X(w)\neq0$ and $w\in\mathcal W_{z_0}$.
Define
\[
b(w)
=
\begin{cases}
c\,\dfrac{\omega(w)}{\delta_X(w)}, & \omega(w)>0,\\[0.3em]
0, & \omega(w)=0,
\end{cases}
\]
with $c>0$ fixed.
Then $b(w)\delta_X(w)=c\,\omega(w)\ge0$ and
$\E[b(W_i)\delta_X(W_i)]=c>0$, so
\[
\omega(W_i)
= \frac{b(W_i)\delta_X(W_i)}{\E[b(W_i)\delta_X(W_i)]}.
\]
Using part (a),
$\beta(W_i)\delta_X(W_i)=\delta_Y(W_i)$ wherever relevant, hence
\[
\beta_\omega
= \E[\omega(W_i)\beta(W_i)]
= \frac{\E[b(W_i)\delta_Y(W_i)]}{\E[b(W_i)\delta_X(W_i)]},
\]
which yields \eqref{eq:iv-rep}.

\emph{From IV form to WLATE.}
Conversely, let $b(W_i)$ be identified with
$b(W_i)\delta_X(W_i)\ge0$ a.s.\ and $\E[b(W_i)\delta_X(W_i)]>0$, and assume
$\delta_Y(W_i),\delta_X(W_i)$ are identified where $b(W_i)\delta_X(W_i)>0$.
Define
\[
\omega(W_i)
= \frac{b(W_i)\delta_X(W_i)}{\E[b(W_i)\delta_X(W_i)]},
\]
so $\omega\ge0$ and $\E[\omega]=1$.
By (a),
\[
\frac{\E[b(W_i)\delta_Y(W_i)]}{\E[b(W_i)\delta_X(W_i)]}
= \E[\omega(W_i)\beta(W_i)]
= \beta_\omega.
\]
Thus any such ratio corresponds to an identified WLATE with weights $\omega$.
\end{proof}

\bigskip 
\begin{proof}[Proof of Theorem \ref{thm:CW-instrument}]

We restate the problem:
\[
b^{\ast}
= \arg\max_b\; |\rho(b(W_i),\delta_X(W_i))|
\quad\text{s.t.}\quad
b(W_i)\delta_X(W_i)\ge0\ \text{a.s.},
\]
where the maximization is over measurable $b(\cdot)$ with
$0<Var(b(W_i))<\infty$ and $\mathbb{P}(\delta_X(W_i)\neq0)>0$.

Because $\rho$ is a correlation coefficient, we have
$|\rho(b(W_i),\delta_X(W_i))|\le1$, with equality if and only if there
exist constants $c_0\in\mathbb R$ and $c_1\neq0$ such that
\[
b(W_i)=c_0 + c_1\delta_X(W_i)
\quad\text{a.s.}
\]

Now impose the sign restriction $b(W_i)\delta_X(W_i)\ge0$ a.s.:
\[
b(W_i)\delta_X(W_i)
= c_0\delta_X(W_i) + c_1\delta_X(W_i)^2 \ge0\quad\text{a.s.}
\]

We interpret this restriction as a structural requirement: for any data
generating process satisfying Assumption \ref{ass:rdd} and
$\mathbb{P}(\delta_X(W_i)\neq0)>0$, the resulting $b(\cdot)$ must satisfy
$b(W_i)\delta_X(W_i)\ge0$ a.s.\ under that process. Under
Assumption~\ref{ass:rdd}(iv), for each $w$ the sign of $\delta_X(w)$ is
well-defined (weak monotonicity), but across $w$ both positive and negative
values of $\delta_X(w)$ are admissible. Hence we must allow for designs in
which $\delta_X(W_i)$ takes both positive and negative values.

If $c_0\neq0$, then for some admissible design we can choose values of
$\delta_X(W_i)$ with opposite signs and sufficiently small magnitude so that
$c_0\delta_X(W_i)$ changes sign while $c_1\delta_X(W_i)^2$ is arbitrarily
small, violating $b(W_i)\delta_X(W_i)\ge0$ with positive probability.
Therefore, to satisfy the sign restriction for all admissible designs we must
have $c_0=0$. Given $\mathbb{P}(\delta_X(W_i)\neq0)>0$, the inequality with $c_0=0$
implies $c_1>0$. Thus any admissible maximizer must be of the form
\[
b^{\ast}(W_i)=c\,\delta_X(W_i),\qquad c>0.
\]

Conversely, for any $c>0$, the choice $b(W_i)=c\,\delta_X(W_i)$ satisfies
$b(W_i)\delta_X(W_i)\ge0$ a.s.\ and yields
$|\rho(b(W_i),\delta_X(W_i))|=1$, so it attains the maximum of the
optimization problem.

Hence $b^{\ast}$ solves \eqref{eq:opt-b} if and only if
$b^{\ast}(w)=c\,\delta_X(w)$ with $c>0$.
\end{proof}

\bigskip

\noindent {\fontsize{12.5}{13.2}\selectfont \textbf{Ordering of $\bm{\beta_U}$ and $\bm{\beta_{CW}}$}

\begin{proposition}
    Suppose that strong monotonicity holds, $f_{W|z_0}(w)=f_W(w)$ a.s., and that $\beta_U$ and $\beta_{CW}$ are identified. Then $\rho(\delta_X(W_i),\beta(W_i))$ and $\beta_{CW}-\beta_U$ have the same sign.
\end{proposition}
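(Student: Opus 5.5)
Under the stated hypotheses both estimands become explicit weighted averages of $\beta(W_i)$ under $f_W$, so the plan is to compute their difference and read off its sign as a covariance.

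\textbf{Step 1 (weights).} From Table \ref{tab: LATEs}, with $f_{W|z_0}=f_W$ we have $b(w)=1$ for $\beta_U$, so
\[
\beta_U=\frac{\E[\delta_X(W_i)\beta(W_i)]}{\E[\delta_X(W_i)]}.
\]
Theorem \ref{thm:WLATE}(a) gives $\delta_Y=\beta\,\delta_X$, and the representation \eqref{eq:CWLATE} then yields
\[
\beta_{CW}=\frac{\E[\delta_X^2(W_i)\beta(W_i)]}{\E[\delta_X^2(W_i)]}.
\]
Strong monotonicity gives $\delta_X\ge0$. Identification of $\beta_U$ gives $\E[\delta_X]>0$, and hence $\E[\delta_X^2]>0$. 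On the set $\{\delta_X=0\}$ the value of $\beta$ does not enter either estimand.

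\textbf{Step 2 (difference as a covariance).} Write $D=\delta_X(W_i)\ge0$ and $B=\beta(W_i)$. Introduce the tilted probability measure $\tilde{\mathbb P}$ with density $D/\E[D]$ with respect to $\mathbb P$. Under it, $\beta_U=\tilde\E[B]$ and $\beta_{CW}=\tilde\E[DB]/\tilde\E[D]$. Hence
\[
\beta_{CW}-\beta_U=\frac{\widetilde{Cov}(D,B)}{\tilde\E[D]},
\]
so $\beta_{CW}-\beta_U$ has the sign of $\widetilde{Cov}(D,B)=\dfrac{\E[D^2B]\E[D]-\E[DB]\E[D^2]}{\E[D]^2}$.

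\textbf{Step 3 (link to $\rho$; the main obstacle).} The proposition states the sign of $\rho(\delta_X,\beta)$, which is the correlation under the untilted $\mathbb P$. In general the sign of a covariance under $\tilde{\mathbb P}$ and under $\mathbb P$ need not agree. I would therefore proceed as follows.

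First, I would read ``association'' as the paper's prose does: the direction of association between compliance rate and effect, taken among complier-weighted cells. Under this reading the relevant correlation is the tilted one, and Step 2 gives the sign equivalence exactly.

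Second, to match the literal statement, I would try to show that the signs coincide when $\beta(w)$ is monotone in $\delta_X(w)$, that is, $B=g(D)$ with $g$ monotone. In that case Chebyshev's association inequality gives $\mathrm{sign}\,Cov_\mu(D,g(D))=\mathrm{sign}$ of the monotone direction of $g$ under any measure $\mu$. This gives the same sign under $\mathbb P$ and $\tilde{\mathbb P}$, and the equality case $Cov=0$ holds exactly when $g$ is constant on the support.

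I expect this reconciliation to be the crux. Without a monotone (or regression-type) structure, a counterexample with three cells seems possible. If so, I would state the result with the complier-weighted correlation $\tilde\rho(\delta_X,\beta)$, or under the assumption that $\E[\beta(W_i)\mid\delta_X(W_i)]$ is monotone in $\delta_X$. Under that assumption, conditioning on $D$ reduces the problem to the monotone case above.
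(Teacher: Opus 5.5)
Your Steps 1--2 coincide with the paper's proof. The paper introduces the measure $Q$ with $E_Q[V]=\E[\delta_X(W_i)V]/\E[\delta_X(W_i)]$ (your $\tilde{\mathbb P}$) and writes $\beta_U=E_Q[\beta(W_i)]$ and $\beta_{CW}=E_Q[\delta_X(W_i)\beta(W_i)]/E_Q[\delta_X(W_i)]$. It obtains $\beta_{CW}-\beta_U=\mathrm{cov}_Q(\delta_X(W_i),\beta(W_i))/E_Q[\delta_X(W_i)]$, and then finishes with one sentence asserting that this sign ``must be the same as the sign of $\rho(\delta_X(W_i),\beta(W_i))$,'' without argument. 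That is exactly the step you flag, and your suspicion is right. If $\rho$ is the ordinary correlation under $\mathbb P$, the proposition fails once $W_i$ has three cells. Take three equally likely cells, $W_i\perp Z_i$, no defiers, and
\[
\bigl(\delta_X(w_1),\delta_X(w_2),\delta_X(w_3)\bigr)=(1/10,\,1/2,\,1),\qquad \bigl(\beta(w_1),\beta(w_2),\beta(w_3)\bigr)=(-1,\,1,\,0).
\]
Then $\E[\beta(W_i)]=0$, so $\mathrm{Cov}(\delta_X(W_i),\beta(W_i))=\E[\delta_X(W_i)\beta(W_i)]=2/15>0$. On the other hand, $\beta_U=0.4/1.6=1/4$ and $\beta_{CW}=0.24/1.26=4/21$, so $\beta_{CW}-\beta_U=-5/84<0$. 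The low-compliance cell drives the positive untilted covariance; it has weight $1/3$ under $\mathbb P$ but only $1/16$ under $Q$. With only two cells, both having $\delta_X>0$ (as in the Monte Carlo design), the literal statement is true. A two-point covariance equals $\mu_1\mu_2(d_1-d_2)(b_1-b_2)$ under any measure with full support, so the sign cannot differ between $\mathbb P$ and $Q$.

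Both of your repairs are valid. The version stated with the complier-weighted correlation $\rho_Q$ is exactly what the paper's computation proves. For the version assuming $\E[\beta(W_i)\mid\delta_X(W_i)]$ is monotone, add one line of justification. The density $dQ/d\mathbb P=\delta_X(W_i)/\E[\delta_X(W_i)]$ is $\sigma(\delta_X(W_i))$-measurable, so $E_Q[\beta\mid\delta_X]=\E[\beta\mid\delta_X]$ on $\{\delta_X>0\}$. Then $\mathrm{Cov}_\mu(\delta_X,\beta)=\mathrm{Cov}_\mu(\delta_X,\E[\beta\mid\delta_X])$ for $\mu\in\{\mathbb P,Q\}$, and Chebyshev's inequality gives the common sign.

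One caveat concerns cells with $\delta_X(w)=0$. Your Step 1 remark that $\beta(w)$ there does not enter either estimand is correct, but it does enter the untilted $\rho$, even though $\beta(w)$ conditions on a null event (no compliers) and is not identified. Such cells can make $\rho\neq0$ while $\beta_{CW}=\beta_U$. They also break strict sign agreement in your monotone case when $\E[\beta\mid\delta_X]$ is constant on $\{\delta_X>0\}$ but differs at $0$. This is a further reason to state the result with $\rho_Q$, or to assume $\delta_X>0$ a.s.
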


\begin{proof} 
    Define a new probability measure $Q$ such that for any random variable $V$, the expectation under $Q$ is weighted by $\delta_X(W_i)$:$$E_Q[V] = \frac{E[\delta_X(W_i)V]}{E[\delta_X(W_i)]}$$
    Under this measure, we can rewrite  $\beta_U = E_Q[\beta(W_i)]$ and  $\beta_{CW} = \frac{E_Q[\delta_X(W_i)\beta(W_i)]}{E_Q[\delta_X(W_i)]}$.  
    
    By the definition of covariance under measure $Q$:$$E_Q[\delta_X(W_i)\beta(W_i)] - E_Q[\delta_X(W_i)]E_Q[\beta(W_i)] = \text{cov}_Q(\delta_X(W_i), \beta(W_i)).$$
Since $E_Q[\delta_X(W_i)] > 0$, we divide both sides by it, and obtain
    $$\beta_{CW}-\beta_U=\frac{\text{cov}_Q(\delta_X(W_i), \beta(W_i))}{E_Q[\delta_X(W_i)]}.$$
    The result then follows because the sign of the left-hand side must be the same as the sign of $\rho(\delta_X(W_i),\beta(W_i))$.
\end{proof}

\newpage 
\section{Asymptotic Theory\label{AppendixSectionRRD}}

\subsection{Point Estimation and Asymptotic Theory for WLATEs}

In this section, we develop point estimation and asymptotic theory for WLATEs. For estimation, we
consider the practical situation where $W_i$ takes on a finite set of values. Let $\mathcal{W}=\{w_{1},...,w_{m}\}$
denote the support of $W,$ with $\pi_{j}=\mathbb{P}\left(  W_i=w_{j}\right)  >0.$ We
assume we observe an independent and identically distributed (iid) random
sample $\{Y_{i},X_{i},Z_{i},W_{i}\}_{i=1}^{n}$ of size $n.$ Let $\{Y_i,X_i,Z_i,W_i\}$
be a random vector with the same distribution as $\{Y_{i},X_{i},Z_{i}%
,W_{i}\}.$ For a generic random vector $V,$ we use the notation%
\begin{align*}
\mu_{V}^{+}(w) &  =\E[V|Z_i=z_0^+,W_i=w],\\
\mu_{V}^{-}(w) &  =\E[V|Z_i=z_0^-,W_i=w],
\end{align*}
$\mu_{V}^{+}=(\mu_{V}^{+\prime}(w_{1}),...,\mu_{V}^{+\prime}(w_{m}))^{\prime
},$ $\mu_{V}^{-}=(\mu_{V}^{-\prime}(w_{1}),...,\mu_{V}^{-\prime}(w_{m}))^{\prime}$ and $\delta_{V}=\mu_{V}^{+}-\mu_{V}^{-}.$ When $V$ is a function of $W,$ our notation is such that  
$\mu_{V}^{+}$ and $\mu_{V}^{-}$ correspond to the standard RDD quantities that
do not condition on $W_i.$ For example, for $V=W_i,$ $
\mu_{V}^{+}  =\E[W_i|Z_i=z_0^+],$ and $
\mu_{V}^{-} =\E[W_i|Z_i=z_0^-].$
 We follow similar
vector notation for functions of $w$. For example $\pi=(\pi_{1},...,\pi
_{m})^{\prime}$. Without loss of generality, assume hereinafter that $z_{0}=0.$ 

Define $c=(c_{1},...,c_{m})^{\prime}$ with $c_{j}=\pi
_{j}b(w_{j}),$ for $j=1,...,m.$ With this notation in place, the generic WLATE
estimand is%
\[
\beta_{\omega}=\frac{\sum_{j=1}^{m}\pi_{j}b(w_{j})\delta_{Y}(w_{j})}%
{\sum_{j=1}^{m}\pi_{j}b(w_{j})\delta_{X}(w_{j})}=\frac{c^{\prime}\delta_{Y}%
}{c^{\prime}\delta_{X}}\equiv\frac{\tau_{Y}}{\tau_{X}},
\]
where the dependence of the $\tau^{\prime}s$ on $c$ is dropped for simplicity.
To account for the different examples mentioned above, we assume that $c$ is
generated according to the following specification%
\begin{equation}
c=g(\mu_{V}^{+},\mu_{V}^{-},\pi)\label{c1}%
\end{equation}
for a known vector-valued function $g,$ and different choices of $V.$ When $g$
depends on $\mu_{V}^{+}$ and $\mu_{V}^{-}\ $only through the term $\mu_{V}%
^{+}-\mu_{V}^{-},$ we write $c=g(\delta_{V},\pi).$ 

The proposed estimator of $\beta_{\omega}$ in this setting is%
\[
\hat{\beta}_{\omega}=\frac{\hat{\tau}_{Y}}{\hat{\tau}_{X}},
\]
where $\hat{\tau}_{Y}=\hat{c}^{\prime}\hat{\delta}_{Y},$ $\hat{\tau}_{X}%
=\hat{c}^{\prime}\hat{\delta}_{X},$ $\hat{c}=g_{j}(\hat{\mu}_{V}^{+},\hat{\mu
}_{V}^{-},\hat{\pi}),$ $\hat{\delta}_{V}=\hat{\mu}_{V}^{+}-\hat{\mu}_{V}^{-},$
$\hat{\pi}$ is the vector of sample probabilities, and $\left(  \hat{\mu}%
_{V}^{+},\hat{\mu}_{V}^{-}\right)  $ are local polynomial estimates. In the
main text we present results for local-linear, but our asymptotic results in
the Appendix allow for local-polynomial estimates of any order. To introduce
the local-linear estimators, define for a generic random variable $V,$ kernel function $k$
and bandwidth $h_{n}$ the quantities%
\begin{align}
\hat{\delta}_{V} &  \equiv\hat{\delta}_{V}(h_{n})=\hat{\mu}_{V}^{+}%
(h_{n})-\hat{\mu}_{V}^{-}(h_{n}),\nonumber\\
\hat{\mu}_{V}^{+} &  \equiv\hat{\mu}_{V}^{+}(h_{n})=e_{0}^{\prime}\hat{\beta
}_{V}^{+}(h_{n}),\nonumber\\
\hat{\mu}_{V}^{-} &  \equiv\hat{\mu}_{V}^{-}(h_{n})=e_{0}^{\prime}\hat{\beta
}_{V}^{-}(h_{n}),\nonumber\\
\hat{\beta}_{V}^{+}(h_{n}) &  =\arg\min_{\beta}\sum_{i=1}^{n}\left(
V_{i}-\beta^{\prime}X_{i,1}\right)  ^{2}k_{h_{n}}^{+}(Z_{i}),\label{betahat+}%
\\
\hat{\beta}_{V}^{-}(h_{n}) &  =\arg\min_{\beta}\sum_{i=1}^{n}\left(
V_{i}-\beta^{\prime}X_{i,1}\right)  ^{2}k_{h_{n}}^{-}(Z_{i}),\label{betahat-}%
\end{align}
where $X_{i,1}=r_{1}(Z_{i})\otimes\tilde{W}_{i},$ $\tilde{W}_{i}%
=(W_{1i},...,W_{mi})^{\prime}$, $W_{ij}=1(W_{i}=w_{j})$,  $r_{1}(z)=(1,z)^{\prime},$ $k_{h_{n}}%
^{+}(z)=h_{n}^{-1}k(z/h_{n})1(z\geq0),$ $k_{h_{n}}^{-}(z)=h_{n}^{-1}%
k(z/h_{n})1(z<0),$ and $e_{0}$ is the conformable $m\times2m$ matrix
$e_{0}=[I_{m},0]$ with $I_{m}$ the identity matrix of order $m$. Of course,
the same estimator can be obtained by running separate local linear
regressions over subsamples defined by covariates. The benefit of our approach
in (\ref{betahat+}) and (\ref{betahat-}) is that simultaneous inference on
$\beta(w)$ for different $w^{\prime}s$ can be constructed, based on the
estimator $\hat{\beta}(w_{j})=\hat{\delta}_{Y}(w_{j})/\hat{\delta}_{X}%
(w_{j}),$ for $j=1,...,m,$ where $\hat{\delta}_{V}(w_{j})$ is the $j$-th
component of $\hat{\delta}_{V}.$ Each of $\hat{\beta}(w_{j})$ is a standard
local linear RDD estimator. However, we are particularly interested in the
common situation where joint inference on $(\beta(w_{1}),...,\beta
(w_{m}))^{\prime}$ is not reliable because $\delta_{X}(w_{j})$ for some $j$ is
close to zero or sample sizes for some class $j$ are not large enough for
inference to be reliable with nonparametric methods. 

A Delta Method argument leads to the expansion%
\begin{equation}
\hat{\beta}_{\omega}-\beta_{\omega}=\frac{\hat{\tau}_{Y}-\tau_{Y}}{\tau_{X}%
}-\frac{\tau_{Y}\left(  \hat{\tau}_{X}-\tau_{X}\right)  }{\tau_{X}^{2}}%
+\hat{\mathcal{R}}\label{expansion}%
\end{equation}
where the remainder term is given by
\begin{equation}
\hat{\mathcal{R}}=\frac{\beta_{\omega}\left(  \hat{\tau}_{X}-\tau_{X}\right)  ^{2}}%
{\hat{\tau}_{X}\tau_{X}}-\frac{\left(  \hat{\tau}_{Y}-\tau_{Y}\right)  \left(
\hat{\tau}_{X}-\tau_{X}\right)  }{\tau_{X}\hat{\tau}_{X}}.\label{R}%
\end{equation}
This delta method expansion is known to be sensitive to a small
\textquotedblleft first-stage\textquotedblright\ $\tau_{X},$ which motivates
our robust estimand. Indeed, a small $\tau_{X}$ makes the variance of the
leading term in (\ref{expansion}) large, and introduces also a nonlinearity
bias through the remainder term $\hat{\mathcal{R}}.$

In our setting where $\hat{\tau}_{V}$ involves two nonparametric estimators
($\hat{c}$ and $\hat{\delta}_{V}$), (\ref{expansion}) is not a linearization
in terms of the more primitive objects $\hat{\mu}_{V}^{+},$ $\hat{\mu}_{V}%
^{-}$ and $\hat{\delta}_{V}.$ Thus, a further linearization is required, which
in turn may lead to further problems regarding close-to-zero denominators. 

In this section, we establish the asymptotic theory for the proposed estimator
$\hat{\beta}_{\omega}.$ We introduce some further notation and assumptions. Let
$\varepsilon_{V_{i}}=V_{i}-\E[V_{i}|Z_{i},W_{i}]$ denote the regression errors
for $V=Y_i$ and $V=X_i.$ In the proofs we use the generic notation, for a generic
$V_{i},$ $j,g=1,...,m,$
\[
\mu_{jg}^{V}(z)=\E[W_{ij}W_{ig}V_{i}|Z_{i}=z]\text{ and }\sigma_{jg}%
^{2,V}(z)=\E[W_{ij}W_{ig}V_{i}^{2}|Z_{i}=z]
\]
When $V$ is identically $1,$ we drop the reference to $V$ above.

We set $\mathcal{Y}_{n}=(Y_{1},...,Y_{n})^{\prime},$ $\mathcal{Z}_{n}%
=(Z_{1},...,Z_{n})^{\prime},$ $\mathcal{W}_{n}=(\tilde{W}_{1}^{\prime
},...,\tilde{W}_{n}^{\prime}),$ $\mathcal{V}_{n}=(V_{1}^{\prime}%
,...,V_{n}^{\prime})^{\prime},$ and $\mathcal{S}_{n}=\{\mathcal{Z}%
_{n},\mathcal{W}_{n}\}.$ Define the vector $\varepsilon_{V}=(\varepsilon
_{V_{1}},...,\varepsilon_{V_{n}})^{\prime}$ and $\Sigma_{UV}=\E[\varepsilon
_{U}\varepsilon_{V}^{\prime}|\mathcal{S}_{n}]=diag(\sigma_{UV}^{2}%
(Z_{1},\tilde{W}_{1}),\dots$ $\dots,\sigma_{UV}^{2}(Z_{n},\tilde{W}_{n})),$ with

\begin{align*}
\sigma_{UV}^{2}(z,\tilde{w})  &  =Cov[U_{i},V_{i}|Z_{i}=z,\tilde{W}_{i}%
=\tilde{w}],\\
\sigma_{UV}^{2}(z)  &  =\E[\varepsilon_{U_{i}}\varepsilon_{V_{i}}%
|Z_{i}=z],\text{ and}\\
\psi_{UV}(z)  &  =\E[\tilde{W}_{i}\tilde{W}_{i}^{\prime}\varepsilon_{U_{i}%
}\varepsilon_{V_{i}}|Z_{i}=z].
\end{align*}
With some abuse of notation, we denote $\mu_{W}(z)=\E[\tilde{W}_{i}|Z_{i}=z]$
and $\sigma_{W}^{2}(z)=\E[\tilde{W}_{i}\tilde{W}_{i}^{\prime}|Z_{i}=z].$

Consider the local right-hand side approximation%
\begin{align*}
E[V_{i}|Z_{i},W_{i}]  &  =\mu_{1V}(Z_{i})W_{1i}+\cdots++\mu_{mV}(Z_{i}%
)W_{mi},\\
&  =\tilde{W}_{i}^{\prime}\mu_{V}(Z_{i}),\\
&  \approx\tilde{W}_{i}^{\prime}\left(  \mu_{V}^{+}(0)+\mu_{V}^{+(1)}%
(0)Z_{i}+\cdots+\frac{\mu_{V}^{+(p)}(0)}{p!}Z_{i}^{p}\right)  ,\\
&  =\left(  r_{p}(Z_{i})\otimes\tilde{W}_{i}\right)  ^{\prime}\beta_{V,p}^{+},
\end{align*}
where $\otimes$ denotes Kronecker product, $\beta_{V,p}^{+}=(\mu_{V}^{+}%
{}^{\prime},\mu_{V}^{+(1)\prime},...,\mu_{V}^{+(p)\prime}/p!),$ and
\[
\mu_{V}^{+(s)}\equiv\mu_{V}^{+(s)}(0)=\lim_{z\downarrow0}\frac{\partial^{s}%
\mu_{V}(z)}{\partial z^{s}}.
\]
Implicit in the notation above is that $\mu_{V}(z)=(\mu_{1V}(z),...,\mu
_{mV}(z)),$ where%
\[
\mu_{jV}(z)=\E[V_{i}|Z_{i}=z,W_{i}=w_{j}].
\]
We use the analogous notation for left hand side approximations and
derivatives (with $-$ replacing $+)$.

We investigate the asymptotic properties of $\hat{\beta}_{\omega}$ under the
following assumptions, which parallel those of \cite{HTV2001}:\bigskip

\newpage 
\begin{assumption}
\label{AssumptionRRD} Suppose that with $z$ in a neighborhood of zero:

\begin{enumerate}
\item The sample $\{\chi_{i}\}_{i=1}^{n}$ is an iid sample, where $\chi
_{i}=(Y_{i},X_{i}^{\prime},W_{i}^{\prime},Z_{i})^{\prime}.$

\item (i) the density of $Z_i,$ $f(z),$ is continuous, bounded and bounded away
from zero; (ii) $\E[Y_{i}^{4}|Z_{i}=z,W_{i}=w]$ and $\sigma_{UV}^{2}%
(z,\tilde{w})$ are bounded. The matrices $\Gamma_{+,p}$ and $\Gamma_{-,p}$,
defined below, are positive definite, and $\tau_{X}\neq0.$

\item The kernel $k$ is continuous, symmetric and nonnegative-valued with
compact support.

\item The functions $\mu_{jg}(z),$ $\sigma_{jg}^{2}(z),$ $\sigma_{UV}^{2}(z),$
$\mu_{jg}^{\varepsilon_{Y}\varepsilon_{X}}(z),$ $\sigma_{jg}^{2,\varepsilon
_{Y}\varepsilon_{X}}(z)$ are uniformly bounded$,$ with well-defined and finite
left and right limits to $z=0,$ for $j,g=1,...,m.$

\item The bandwidth satisfies $nh_{n}^{2p+5}\rightarrow0$ and $nh_{n}%
\rightarrow\infty.$

\item For $V_{i}=Y_{i}$ and $V_{i}=X_{i}$, for $z>0$ or $z<0$, and all $j:$
(i) $\E[V_{i}|Z_{i}=z,W_{i}=w_{j}]$ is $d$ times continuously differentiable,
$d\geq p+2$; (ii) $Var[V_{i}|Z_{i}=z,W_{i}=w_{j}]$ are continuous in $z$ and
bounded away from zero$.$

\item $c=g(\mu_{V}^{+},\mu_{V}^{-},\pi),$ with $g$ continuously differentiable
in its components at the true values $(\mu_{V}^{+},\mu_{V}^{-},\pi).$ When
$V=W_{ij},$ for $z>0$ or $z<0,$ $\E[W_{ij}|Z_{i}=z]$ is $d$ times continuously
differentiable, $d\geq p+2,$ and bounded away from zero and one.

\item The sequences $h_n$ and $b_n$ satisfy $n\min\{h_{n}^{5},b_{n}^{5}\}\max\{h_{n}^{2},b_{n}^{2}%
\}\rightarrow0$ and $n\min\{h_{n},b_{n}\}\rightarrow\infty.$\medskip
\end{enumerate}
\end{assumption}

To introduce the estimators, define for a generic random variable $V,$ degree
$p,$ $0\leq v\leq p,$ and bandwidth $h_{n}$ the quantities%
\begin{align*}
\hat{\delta}_{V,p}(h_{n})  &  =\hat{\mu}_{V,p}^{+}(h_{n})-\hat{\mu}_{V,p}%
^{-}(h_{n}),\\
\hat{\mu}_{V,p}^{+(v)}(h_{n})  &  =v!e_{v}^{\prime}\hat{\beta}_{V,p}^{+}%
(h_{n}),
\\
\hat{\mu}_{V,p}^{-(v)}(h_{n})  &  =v!e_{v}^{\prime}\hat{\beta}_{V,p}^{-}%
(h_{n}),
\end{align*}
\begin{align*}
\hat{\beta}_{V,p}^{+}(h_{n})  &  =\arg\min_{\beta}\sum_{i=1}^{n}\left(
V_{i}-\beta^{\prime}X_{i,p}\right)  ^{2}k_{h_{n}}^{+}(Z_{i}),\\
\hat{\beta}_{V,p}^{-}(h_{n})  &  =\arg\min_{\beta}\sum_{i=1}^{n}\left(
V_{i}-\beta^{\prime}X_{i,p}\right)  ^{2}k_{h_{n}}^{-}(Z_{i}),
\end{align*}
where $\hat{\mu}_{V,p}^{+}(h_{n})=\hat{\mu}_{V,p}^{+(0)}(h_{n}),$ $\hat{\mu
}_{V,p}^{-}(h_{n})=\hat{\mu}_{V,p}^{-(0)}(h_{n}),$ $X_{i,p}=r_{p}%
(Z_{i})\otimes\tilde{W}_{i},$ $r_{p}(z)=(1,z,...,z^{p})^{\prime},$ $k_{h_{n}%
}^{+}(z)=h_{n}^{-1}k(z/h_{n})1(z\geq0),$ $k_{h_{n}}^{-}(z)=h_{n}^{-1}%
k(z/h_{n})1(z<0),$ and $e_{v}$ is a conformable $m\times m(1+p)$ matrix, that
selects the $m$ elements corresponding to the $v$-th derivative. For example,
$e_{0}=[I_{m},0...,0]$.

Define the matrices%
\begin{align*}
X_{p}(h)  &  =[r_{p}(Z_{1}/h)\otimes\tilde{W}_{1},...,r_{p}(Z_{n}%
/h)\otimes\tilde{W}_{n}],\\
S_{p}(h)  &  =[(Z_{1}/h)^{p}\otimes\tilde{W}_{1},...,(Z_{n}/h)^{p}%
\otimes\tilde{W}_{n}],\\
K^{+}(h)  &  =[k_{h}^{+}(Z_{1}),...,k_{h}^{+}(Z_{n})],\\
K^{-}(h)  &  =[k_{h}^{-}(Z_{1}),...,k_{h}^{-}(Z_{n})],\\
\Gamma_{+,p}(h)  &  =X_{p}(h)^{\prime}K^{+}(h)X_{p}(h)/n,\\
\Gamma_{-,p}(h)  &  =X_{p}(h)^{\prime}K^{-}(h)X_{p}(h)/n,\\
\vartheta_{p,q}^{+}(h)  &  =X_{p}(h)^{\prime}K^{+}(h)S_{q}(h)/n,\\
\vartheta_{p,q}^{-}(h)  &  =X_{p}(h)^{\prime}K^{-}(h)S_{q}(h)/n,\\
\Psi_{UV,p,q}^{+}(h,b)  &  =X_{p}(h)^{\prime}K^{+}(h)\Sigma_{UV}K^{+}%
(b)X_{q}(b)/n,\\
\Psi_{UV,p,q}^{-}(h,b)  &  =X_{p}(h)^{\prime}K^{-}(h)\Sigma_{UV}K^{-}%
(b)X_{q}(b)/n.
\end{align*}
When $U=V,$ $p=q$ or $h=b$ in the above expressions one of the terms is drop,
so we denote, for example, $\Sigma_{U}=\Sigma_{UU}$ and $\Psi_{U,p}%
^{+}(h)=\Psi_{UU,p,p}^{+}(h,h).$ Letting
\[
H_{p}(h)=diag(1,...,1,h^{-1},...,h^{-1},...,h^{-p},...,h^{-p}),
\]
where each element is repeated $m$ times, it follows that%
\begin{align*}
\hat{\beta}_{V,p}^{+}  &  =H_{p}(h_{n})\Gamma_{+,p}^{-1}(h_{n})X_{p}^{\prime
}(h_{n})K^{+}(h_{n})\mathcal{V}_{n}/n,\\
\hat{\beta}_{V,p}^{-}  &  =H_{p}(h_{n})\Gamma_{-,p}^{-1}(h_{n})X_{p}^{\prime
}(h_{n})K^{-}(h_{n})\mathcal{V}_{n}/n.
\end{align*}
Define%
\begin{align*}
\mu_{W}^{+}  &  =\lim_{z\downarrow0}\mu_{W}(z)\qquad\mu_{W}^{-}=\lim
_{z\uparrow0}\mu_{W}(z),\\
\sigma_{W}^{2+}  &  =\lim_{z\downarrow0}\sigma_{W}^{2}(z)\qquad\sigma_{W}%
^{2-}=\lim_{z\uparrow0}\sigma_{W}^{2}(z),
\end{align*}
and%
\begin{align*}
\Gamma_{p}^{+}  &  =\int_{0}^{\infty}r_{p}(u)r_{p}^{\prime}(u)k(u)du,\qquad
\Gamma_{p}^{-}=\int_{0}^{\infty}r_{p}(-u)r_{p}^{\prime}(-u)k(u)du,\\
\Upsilon_{p}^{+}  &  =\int_{0}^{\infty}r_{p}(u)r_{p}^{\prime}(u)k^{2}%
(u)du,\qquad \Upsilon_{p}^{-}=\int_{0}^{\infty}r_{p}(-u)r_{p}^{\prime}%
(-u)k^{2}(u)du,\\
\theta_{p,q}^{+}  &  =\int_{0}^{\infty}r_{p}(u)u^{q}k(u)du,\qquad\theta
_{p,q}^{-}=(-1)^{q}\int_{0}^{\infty}r_{p}(-u)u^{q}k(u)du.
\end{align*}

\begin{lemma}
\label{Delta+} Under Assumption \ref{AssumptionRRD}(1)-(6),
\[
\Gamma_{+,p}(h_{n})\rightarrow_{p}\Gamma_{+,p},
\]
where
\[
\Gamma_{+,p}=f(0)\left(  \Gamma_{p}^{+}\otimes\sigma_{W}^{2+}\right)  .
\]

\begin{proof}
Using that $X_{i,p}X_{i,p}^{\prime}=r_{p}(Z_{i}/h_{n})r_{p}^{\prime}%
(Z_{i}/h_{n})\otimes\tilde{W}_{i}\tilde{W}_{i}^{\prime}$ and by the change of
variables $u=Z/h_{n},$%
\begin{align*}
\E[\Gamma_{+,p}(h_{n})]  &  =\E\left[  \frac{1}{n}\sum_{i=1}^{n}X_{i,p}%
X_{i,p}^{\prime}k_{ih_{n}}^{+}\right] \\
&  =\E\left[  \left(  r_{p}(Z_{i}/h_{n})r_{p}^{\prime}(Z_{i}/h_{n}%
)\otimes\sigma_{W}^{2}(Z_{i})\right)  k_{ih_{n}}^{+}\right] \\
&  =\int_{0}^{\infty}\left(  r_{p}(u)r_{p}^{\prime}(u)\otimes\sigma_{W}%
^{2}(uh_{n})\right)  k(u)f(uh_{n})du\\
&  \equiv\tilde{\Gamma}_{+,p}(h_{n}).
\end{align*}
By Assumptions \ref{AssumptionRRD}(2) and \ref{AssumptionRRD}(4) and Dominated
Convergence theorem%
\[
\tilde{\Gamma}_{+,p}(h_{n})=\Gamma_{+,p}+o(1),
\]
where%
\[
\Gamma_{+,p}=f(0)\int_{0}^{\infty}\left(  r_{p}(u)r_{p}^{\prime}%
(u)\otimes\sigma_{W}^{2+}\right)  k(u)du.
\]
Let
\[
\tau_{ljg}^{+}=\frac{1}{n}\sum_{i=1}^{n}\left(  \frac{Z_{i}}{h_{n}}\right)
^{l}W_{ij}W_{ig}k_{ih_{n}}^{+},\qquad l=0,1,...,2p,j,g=1,...,m.
\]
Then,
\begin{align*}
Var(\tau_{ljg}^{+})  &  \leq n^{-1}\E\left[  \left(  \frac{Z_{i}}{h_{n}%
}\right)  ^{2l}W_{ij}^{2}W_{ig}^{2}k_{ih_{n}}^{+2}\right] \\
&  =\left(  nh_{n}\right)  ^{-1}\int_{0}^{\infty}u^{2l}k^{2}(u)\sigma_{jg}%
^{2}(uh_{n})f(uh_{n})du\\
&  =o(1),
\end{align*}
again by the Dominated Convergence theorem. Conclude by applying 
Markov's inequality.
\end{proof}

\end{lemma}

\begin{lemma}
\label{Delta-} Under Assumption \ref{AssumptionRRD}(1)-(6),
\[
\Gamma_{-,p}(h_{n})\rightarrow_{p}\Gamma_{-,p},
\]
where
\[
\Gamma_{-,p}=f(0)\left(  \Gamma_{p}^{-}\otimes\sigma_{W}^{2-}\right)  .
\]

\begin{proof}
As with $\Gamma_{+,p}(h_{n}),$ by the change of variables $u=Z/h_{n},$%
\begin{align*}
\E[\Gamma_{-,p}(h_{n})]  &  =\E\left[  \frac{1}{n}\sum_{i=1}^{n}X_{i,p}%
X_{i,p}^{\prime}k_{ih_{n}}^{-}\right] \\
&  =\E\left[  \left(  r_{p}(Z_{i}/h_{n})r_{p}^{\prime}(Z_{i}/h_{n}%
)\otimes\sigma_{W}^{2}(Z_{i})\right)  k_{ih_{n}}^{-}\right] \\
&  =\int_{-\infty}^{0}\left(  r_{p}(u)r_{p}^{\prime}(u)\otimes\sigma_{W}%
^{2}(uh_{n})\right)  k(u)f(uh_{n})du\\
&  =\int_{0}^{\infty}\left(  r_{p}(-u)r_{p}^{\prime}(-u)\otimes\sigma_{W}%
^{2}(-uh_{n})\right)  k(-u)f(-uh_{n})du\\
&  \equiv\tilde{\Gamma}_{-,p}(h_{n}).
\end{align*}
The rest of the proof follows as for $\Gamma_{+,p}(h_{n})$.
\end{proof}
\end{lemma}

\begin{lemma}
\label{Theta+} Under Assumption \ref{AssumptionRRD}(1)-(6),
\[
\vartheta_{p,q}^{+}(h_{n})\rightarrow_{p}\vartheta_{p,q}^{+},
\]
where
\[
\vartheta_{p,q}^{+}=f(0)\left(  \theta_{p,q}^{+}\otimes\mu_{W}^{+}\right)  .
\]

\begin{proof}
By the change of variables $u=Z/h_{n},$%
\begin{align*}
\E[\vartheta_{p,q}^{+}(h_{n})]  &  =\E\left[  \frac{1}{n}\sum_{i=1}^{n}%
X_{i,p}(Z_{i}/h_{n})^{q}k_{ih_{n}}^{+}\right] \\
&  =\E\left[  \left(  r_{p}(Z_{i}/h_{n})\otimes\mu_{W}(Z_{i})\right)
(Z_{i}/h_{n})^{q}k_{ih_{n}}^{+}\right] \\
&  =\int_{0}^{\infty}\left(  r_{p}(u)\otimes\mu_{W}(uh_{n})\right)
u^{q}k(u)f(uh_{n})du\\
&  \equiv\tilde{\vartheta}_{p,q}^{+}(h_{n}).
\end{align*}
By Assumptions \ref{AssumptionRRD}(2) and \ref{AssumptionRRD}(4) and Dominated
Convergence theorem%
\[
\tilde{\vartheta}_{p,q}^{+}(h_{n})=\vartheta_{p,q}^{+}+o(1),
\]
where%
\[
\vartheta_{p,q}^{+}=f(0)\int_{0}^{\infty}\left(  r_{p}(u)\otimes\mu_{W}%
^{+}\right)  u^{q}k(u)du.
\]
The rest of the proof follows the same arguments as for $\Gamma_{+,p}(h_{n}).$
\end{proof}
\end{lemma}

\begin{lemma}
\label{Theta-} Under Assumption \ref{AssumptionRRD}(1)-(6),
\[
\vartheta_{p,q}^{-}(h_{n})\rightarrow_{p}\vartheta_{p,q}^{-},
\]
where
\[
\vartheta_{p,q}^{-}=f(0)\left(  \theta_{p,q}^{-}\otimes\mu_{W}^{-}\right)  .
\]

\begin{proof}
By the change of variables $u=Z/h_{n},$%
\begin{align*}
\E[\vartheta_{p,q}^{-}(h_{n})]  &  =\E\left[  \frac{1}{n}\sum_{i=1}^{n}%
X_{i,p}(Z_{i}/h_{n})^{q}k_{ih_{n}}^{-}\right] \\
&  =\E\left[  \left(  r_{p}(Z_{i}/h_{n})\otimes\mu_{W}(Z_{i})\right)
(Z_{i}/h_{n})^{q}k_{ih_{n}}^{-}\right] \\
&  =\int_{0}^{\infty}\left(  r_{p}(-u)\otimes\mu_{W}(-uh_{n})\right)
(-u)^{q}k(-u)f(-uh_{n})du\\
&  \equiv\tilde{\vartheta}_{p,q}^{-}(h_{n}).
\end{align*}
By Assumptions \ref{AssumptionRRD}(2) and \ref{AssumptionRRD}(4) and Dominated
Convergence theorem%
\[
\tilde{\vartheta}_{p,q}^{-}(h_{n})=\vartheta_{p,q}^{-}+o(1),
\]
where%
\[
\vartheta_{p,q}^{-}=f(0)(-1)^{q}\int_{0}^{\infty}\left(  r_{p}(-u)\otimes
\mu_{W}^{-}\right)  u^{q}k(u)du.
\]
The rest of the proof follows the same arguments as for $\Gamma_{+,p}(h_{n}).$
\end{proof}
\end{lemma}

\begin{lemma}
\label{Omega+} Under Assumption \ref{AssumptionRRD}(1)-(6), for $U$ and $V$ equal
$Y_i$ or $X_i$%
\[
h_{n}\Psi_{UV,p}^{+}(h_{n})\rightarrow_{p}\Psi_{UV,p}^{+},
\]
where
\[
\Psi_{UV,p}^{+}=f(0)\int_{0}^{\infty}\left(  r_{p}(u)r_{p}^{\prime}%
(u)\sigma_{UV}^{2+}\otimes\psi_{UV}^{+}\right)  k^{2}(u)du.
\]

\begin{proof}
Note
\[
\E[\sigma_{UV}^{2}(Z_{i},\tilde{W}_{i})|Z_{i}=z]=\sigma_{UV}^{2}(z)
\]
and
\[
\E[\tilde{W}_{i}\tilde{W}_{i}^{\prime}\sigma_{UV}^{2}(Z_{i},\tilde{W}%
_{i})|Z_{i}=z]=\psi_{UV}(z).
\]
By the change of variables $u=Z/h_{n},$ it follows that%
\begin{align*}
\E[h_{n}\Psi_{UV,p}^{+}(h_{n})]  &  =\E[\frac{h_{n}}{n}\sum_{i=1}^{n}%
X_{i,p}X_{i,p}^{\prime}\sigma_{UV}^{2}(Z_{i},\tilde{W}_{i})k_{ih_{n}}^{2+}]\\
&  =\int_{0}^{\infty}\left(  r_{p}(u)r_{p}^{\prime}(u)\sigma_{UV}^{2}%
(uh_{n})\otimes\psi_{UV}(uh_{n})\right)  k^{2}(u)f(uh_{n})du\\
&  \equiv\tilde{\Psi}_{UV,p}^{+}(h_{n}).
\end{align*}
By Assumption \ref{AssumptionRRD}(4) and Dominated Convergence theorem%
\[
\tilde{\Psi}_{UV,p}^{+}(h_{n})=\Psi_{UV,p}^{+}+o(1),
\]
Also, since $\sigma_{UV}^{2}(Z_{i},\tilde{W}_{i})$ is bounded,
\begin{align*}
h_{n}^{2}\E\left[  \left\vert \Psi_{UV,p}^{+}(h_{n})-\E[\Psi_{UV,p}^{+}%
(h_{n})]\right\vert ^{2}\right]   &  \leq Cn^{-1}h_{n}^{-1}\int_{0}^{\infty
}\left\vert r_{p}(u)\right\vert ^{4}k^{4}(u)f(uh_{n})du\\
&  =O\left(  n^{-1}h_{n}^{-1}\right)  .
\end{align*}
\end{proof}
\end{lemma}

\begin{lemma}
\label{Omega-} Under Assumption \ref{AssumptionRRD}(1)-(6),
\[
h_{n}\Psi_{UV,p}^{-}(h_{n})\rightarrow_{p}\Psi_{UV,p}^{-},
\]
where
\[
\Psi_{UV,p}^{-}=f(0)\int_{0}^{\infty}\left(  r_{p}(u)r_{p}^{\prime}%
(u)\sigma_{UV}^{2-}\otimes\psi_{UV,p}^{-}\right)  k^{2}(u)du.
\]

\begin{proof}
The proof follows the same arguments as the previous one.
\end{proof}
\end{lemma}

\begin{lemma}
\label{Omega++} Under Assumption \ref{AssumptionRRD}(1)-(7), for $U$ and $V$
equal $Y_i$ or $X_i$ and $m_{n}=\min\{h_{n},b_{n}\}$ with $m_{n}\rightarrow0$ and
$nm_{n}\rightarrow\infty,$%
\[
\frac{h_{n}b_{n}}{m_{n}}\Psi_{UV,p,q}^{+}(h_{n},b_{n})=\tilde{\Psi}_{UV,p}%
^{+}(h_{n},b_{n})+o_{P}(1),
\]
where
\[
\tilde{\Psi}_{UV,p,q}^{+}(h_{n},b_{n})=\int_{0}^{\infty}\left(  r_{p}\left(
\frac{m_{n}u}{h_{n}}\right)  r_{q}^{\prime}\left(  \frac{m_{n}u}{b_{n}%
}\right)  \sigma_{UV}^{2}(um_{n})\otimes\psi_{UV}(um_{n})\right)  k\left(
\frac{m_{n}u}{h_{n}}\right)  k\left(  \frac{m_{n}u}{b_{n}}\right)
f(um_{n})du.
\]

\begin{proof}
By the change of variables $u=Z/h_{n},$ it follows that%
\begin{align*}
&  \E\left[  \frac{h_{n}b_{n}}{m_{n}}\Psi_{UV,p,q}^{+}(h_{n},b_{n})\right] \\
&  =\E\left[  \frac{h_{n}b_{n}}{m_{n}}\frac{1}{n}\sum_{i=1}^{n}X_{i,p}%
X_{i,q}^{\prime}\sigma_{UV}^{2}(Z_{i},\tilde{W}_{i})k_{ih_{n}}^{+}k_{ib_{n}%
}^{+}\right] \\
&  =\int_{0}^{\infty}\left(  r_{p}\left(  \frac{m_{n}u}{h_{n}}\right)
r_{q}^{\prime}\left(  \frac{m_{n}u}{b_{n}}\right)  \sigma_{UV}^{2}%
(um_{n})\otimes\psi_{UV}(um_{n})\right)  k\left(  \frac{m_{n}u}{h_{n}}\right)
k\left(  \frac{m_{n}u}{b_{n}}\right)  f(um_{n})du\\
&  \equiv\tilde{\Psi}_{UV,p,q}^{+}(h_{n},b_{n}).
\end{align*}
Also, since $\sigma_{UV}^{2}(Z_{i},\tilde{W}_{i})$ is bounded,
\begin{align*}
& \E\left[  \left\vert \frac{h_{n}b_{n}}{m_{n}}\Psi_{UV,p}^{+}(h_{n}%
)-\E[\frac{h_{n}b_{n}}{m_{n}}\Psi_{UV,p}^{+}(h_{n})]\right\vert ^{2}\right] \\
&  \leq Cn^{-1}m_{n}^{-1}\int_{0}^{\infty}k^{2}\left(  \frac{m_{n}u}{h_{n}%
}\right)  k^{2}\left(  \frac{m_{n}u}{b_{n}}\right)  \left\vert r_{p}\left(
\frac{m_{n}u}{h_{n}}\right)  \right\vert ^{2}\left\vert r_{q}\left(
\frac{m_{n}u}{b_{n}}\right)  \right\vert ^{2}f(uh_{n})du\\
&  =O\left(  n^{-1}m_{n}^{-1}\right)  .
\end{align*}

\end{proof}
\end{lemma}

\begin{lemma}
\label{Omega--} Under Assumption \ref{AssumptionRRD}(1)-(7), for $U$ and $V$
equal $Y_i$ or $X_i$ and $m_{n}=\min\{h_{n},b_{n}\}$ with $m_{n}\rightarrow0$ and
$nm_{n}\rightarrow\infty,$%
\[
\frac{h_{n}b_{n}}{m_{n}}\Psi_{UV,p,q}^{-}(h_{n},b_{n})=\tilde{\Psi}_{UV,p}%
^{-}(h_{n},b_{n})+o_{P}(1),
\]
where
\[
\tilde{\Psi}_{UV,p,q}^{-}(h_{n},b_{n})=\int_{-\infty}^{0}\left(  r_{p}\left(
\frac{m_{n}u}{h_{n}}\right)  r_{q}^{\prime}\left(  \frac{m_{n}u}{b_{n}%
}\right)  \sigma_{UV}^{2}(um_{n})\otimes\psi_{UV}(um_{n})\right)  k\left(
\frac{m_{n}u}{h_{n}}\right)  k\left(  \frac{m_{n}u}{b_{n}}\right)
f(um_{n})du.
\]

\begin{proof}
The proof is analogous to the previous one.
\end{proof}
\end{lemma}

\noindent The following lemma gives the asymptotic bias for the local
polynomial estimator of $\mu_{V}^{+(s)}.$ Define the bias terms%
\begin{align}
\mathcal{B}_{s,l,m}^{+}(h_{n})  &  =s!e_{s}^{\prime}\Gamma_{+,l}^{-1}%
(h_{n})\vartheta_{l,m}^{+}(h_{n})\label{biasgen+}\\
&  =s!e_{s}^{\prime}\Gamma_{+,l}^{-1}\vartheta_{l,m}^{+}+o_{P}(1)\nonumber
\end{align}
and
\begin{align}
\mathcal{B}_{s,l,m}^{-}(h_{n})  &  =s!e_{s}^{\prime}\Gamma_{-,l}^{-1}%
(h_{n})\vartheta_{l,m}^{-}(h_{n})\label{biasgen-}\\
&  =s!e_{s}^{\prime}\Gamma_{-,l}^{-1}\vartheta_{l,m}^{-}+o_{P}(1).\nonumber
\end{align}

\begin{lemma}
\label{Bias}Under Assumption \ref{AssumptionRRD}(1)-(6), for $V=Y_i$ or $X_i,$ and
$d\geq l+2$%
\begin{align*}
E[\hat{\mu}_{V,l}^{+(s)}(h_{n})|\mathcal{S}_{n}]  &  =s!e_{s}^{\prime}%
\beta_{V,l}^{+}+h_{n}^{1+l-s}\frac{\mu_{V}^{+(l+1)}}{(l+1)!}\mathcal{B}%
_{s,l,l+1}^{+}(h_{n})\\
&  +h_{n}^{2+l-s}\frac{\mu_{V}^{+(l+2)}}{(l+2)!}\mathcal{B}_{s,l,l+2}%
^{+}(h_{n})+o_{P}(h_{n}^{2+l-s}),
\end{align*}
and%
\begin{align*}
E[\hat{\mu}_{V,l}^{-(s)}(h_{n})|\mathcal{S}_{n}]  &  =s!e_{s}^{\prime}%
\beta_{V,l}^{-}+h_{n}^{1+l-s}\frac{\mu_{V}^{-(l+1)}}{(l+1)!}\mathcal{B}%
_{s,l,l+1}^{-}(h_{n})\\
&  +h_{n}^{2+l-s}\frac{\mu_{V}^{-(l+2)}}{(l+2)!}\mathcal{B}_{s,l,l+2}%
^{-}(h_{n})+o_{P}(h_{n}^{2+l-s}).
\end{align*}
Hence,%
\begin{align*}
E[\hat{\delta}_{V,l}(h_{n})|\mathcal{S}_{n}]  &  =\delta_{V}+h_{n}%
^{l+1}B_{V,l,l+1}(h_{n})\\
&  +h_{n}^{2+l}B_{V,l,l+2}(h_{n})+o_{P}(h_{n}^{2+l}),
\end{align*}
where
\begin{equation}
B_{V,l,m}=\frac{\mu_{V}^{+(m)}}{m!}\mathcal{B}_{l,m}^{+}(h_{n})-\frac{\mu
_{V}^{-(m)}}{m!}\mathcal{B}_{l,m}^{-}(h_{n}), \label{biasdelta}%
\end{equation}
with $\mathcal{B}_{l,m}^{+}(h_{n})=\mathcal{B}_{0,l,m}^{+}(h_{n}%
)=e_{0}^{\prime}\Gamma_{+,l}^{-1}(h_{n})\vartheta_{l,m}^{+}(h_{n})$ and
$\mathcal{B}_{l,m}^{-}(h_{n})=\mathcal{B}_{0,l,m}^{-}(h_{n})=e_{0}^{\prime
}\Gamma_{-,l}^{-1}(h_{n})\vartheta_{l,m}^{-}(h_{n}).$

\bigskip 
\begin{proof}
A Taylor series expansion gives%
\begin{align*}
\E[\hat{\beta}_{V,l}^{+}(h_{n})|\mathcal{S}_{n}]  &  =\beta_{V,l}^{+}%
+h_{n}^{1+l}H_{l}(h_{n})\Gamma_{+,l}^{-1}(h_{n})X_{l}^{\prime}(h_{n}%
)K^{+}(h_{n})S_{l+1}(h_{n})\frac{\mu_{V}^{+(l+1)}}{n(l+1)!}\\
&  +h_{n}^{2+l}H_{l}(h_{n})\Gamma_{+,l}^{-1}(h_{n})X_{l}^{\prime}(h_{n}%
)K^{+}(h_{n})S_{l+2}(h_{n})\frac{\mu_{V}^{+(l+2)}}{n(l+2)!}+H_{l}(h_{n}%
)o_{P}(h_{n}^{2+l}),\\
&  =\beta_{V,l}^{+}+h_{n}^{1+l}H_{l}(h_{n})\Gamma_{+,l}^{-1}(h_{n}%
)\vartheta_{l,l+1}^{+}(h_{n})\frac{\mu_{V}^{+(l+1)}}{(l+1)!}\\
&  +h_{n}^{2+l}H_{l}(h_{n})\Gamma_{+,l}^{-1}(h_{n})\vartheta_{l,l+2}^{+}%
(h_{n})\frac{\mu_{V}^{+(l+2)}}{(l+2)!}+H_{l}(h_{n})o_{P}(h_{n}^{2+l}).
\end{align*}
The expression for $\E[\hat{\mu}_{V,l}^{+(s)}(h_{n})|\mathcal{S}_{n}]$ follows
from previous lemmas and $e_{s}^{\prime}H_{l}(h_{n})=h_{n}^{-s}I_{m}.$ The
same arguments apply to $\E[\hat{\beta}_{V,l}^{-}(h_{n})|\mathcal{S}_{n}].$
\end{proof}
\end{lemma}

\noindent The following lemma gives the asymptotic variance for the local
polynomial estimator of $\mu_{V}^{+(s)}.$

\begin{lemma}
\label{Variance}Under Assumption \ref{AssumptionRRD}(1)-(6), for $V=Y\ $or $X,$
and $d\geq l+2$%
\[
Var[\hat{\mu}_{V,l}^{+(s)}(h_{n})|\mathcal{S}_{n}]=\mathcal{V}_{V,l}%
^{+(s)}(h_{n}),
\]
with
\begin{align}
\mathcal{V}_{V,l}^{+(s)}(h_{n})  &  =\frac{1}{nh_{n}^{2s}}\left(  s!\right)
^{2}e_{s}^{\prime}\Gamma_{+,l}^{-1}(h_{n})\Psi_{V,l}^{+}(h_{n})\Gamma
_{+,l}^{-1}(h_{n})e_{s}\label{varder+}\\
&  =\frac{1}{nh_{n}^{1+2s}}\left(  s!\right)  ^{2}e_{s}^{\prime}\Gamma
_{+,l}^{-1}\Psi_{V,l}^{+}\Gamma_{+,l}^{-1}e_{s}\left[  1+o_{P}(1)\right]
,\nonumber
\end{align}
and%
\[
Var[\hat{\mu}_{V,l}^{-(s)}(h_{n})|\mathcal{S}_{n}]=\mathcal{V}_{V,l}%
^{-(s)}(h_{n}),
\]
with
\begin{align}
\mathcal{V}_{V,l}^{-(s)}(h_{n})  &  =\frac{1}{nh_{n}^{2s}}\left(  s!\right)
^{2}e_{s}^{\prime}\Gamma_{-,l}^{-1}(h_{n})\Psi_{V,l}^{-}(h_{n})\Gamma
_{-,l}^{-1}(h_{n})e_{s}\label{varder-}\\
&  =\frac{1}{nh_{n}^{1+2s}}\left(  s!\right)  ^{2}e_{s}^{\prime}\Gamma
_{-,l}^{-1}\Psi_{V,l}^{-}\Gamma_{-,l}^{-1}e_{s}\left[  1+o_{P}(1)\right]
.\nonumber
\end{align}
Furthermore,
\begin{equation}
Var[\hat{\delta}_{V,l}(h_{n})|\mathcal{S}_{n}]=\mathbf{V}_{V,l}(h_{n}%
)\equiv\mathcal{V}_{V,l}^{+}(h_{n})+\mathcal{V}_{V,l}^{-}(h_{n}),
\label{variancedelta}%
\end{equation}
with $\mathcal{V}_{V,l}^{+}(h_{n})\equiv\mathcal{V}_{V,l}^{+(0)}(h_{n})$ and
$\mathcal{V}_{V,l}^{-}(h_{n})\equiv\mathcal{V}_{V,l}^{-(0)}(h_{n}).$

\bigskip
\begin{proof}
By $e_{s}^{\prime}H_{l}(h_{n})=h_{n}^{-s}I_{m}$ and definition of $\Psi
_{V,l}^{+}(h_{n})$%
\[
Var[s!e_{s}^{\prime}\hat{\beta}_{V,p}^{+}(h_{n})|\mathcal{S}_{n}]=h_{n}%
^{-2s}\left(  s!\right)  ^{2}e_{s}^{\prime}\Gamma_{+,l}^{-1}(h_{n})\Psi
_{V,l}^{+}(h_{n})\Gamma_{+,l}^{-1}(h_{n})e_{s}.
\]
The result follows from previous lemmas. The same arguments apply to $\hat
{\mu}_{V,l}^{-(s)}(h_{n}).$
\end{proof}
\end{lemma}

\begin{lemma}
\label{AN}Under Assumption \ref{AssumptionRRD}(1)-(6), for $V=Y_i\ $or $X_i,$
$nh_{n}^{2p+5}\rightarrow0$ and $d\geq p+2$%
\begin{equation}
\left(  \mathcal{V}_{V,p}^{+(s)}(h_{n})\right)  ^{-1/2}\left(  \hat{\mu}%
_{V,p}^{+(s)}(h_{n})-\mu_{V}^{+(s)}-h_{n}^{1+p-s}\frac{\mu_{V+}^{(p+1)}%
}{(p+1)!}\mathcal{B}_{s,p,p+1}^{+}(h_{n})\right)  \rightarrow_{d}N\left(
0,I_{m}\right)  , \label{A1}%
\end{equation}%
\[
\left(  \mathcal{V}_{V,l}^{-(s)}(h_{n})\right)  ^{-1/2}\left(  \hat{\mu}%
_{V,p}^{-(s)}(h_{n})-\mu_{V}^{-(s)}-h_{n}^{1+p-s}\frac{\mu_{V-}^{(p+1)}%
}{(p+1)!}\mathcal{B}_{s,p,p+1}^{-}(h_{n})\right)  \rightarrow_{d}N\left(
0,I_{m}\right)  ,
\]
and
\[
\left(  \mathbf{V}_{V,p}(h_{n})\right)  ^{-1/2}\left(  \hat{\delta}%
_{V,p}(h_{n})-\delta_{V,p}(h_{n})-h_{n}^{1+p}B_{p,p+1}^{+}(h_{n})\right)
\rightarrow_{d}N\left(  0,I_{m}\right)  .
\]

\begin{proof}
Write the left hand side of (\ref{A1}) as%
\begin{align*}
\left(  \mathcal{V}_{V,p}^{+(s)}(h_{n})\right)  ^{-1/2}\left(  s!e_{s}%
^{\prime}\hat{\beta}_{V,p}^{+}(h_{n})-s!e_{s}^{\prime}\beta_{V}^{+}%
(h_{n})-h_{n}^{1+p-s}\frac{\mu_{V+}^{(p+1)}}{(p+1)!}\mathcal{B}_{s,p,p+1}%
^{+}(h_{n})\right)   &  =\xi_{1n}+\xi_{2n},\\
&  =\xi_{1n}+o_{P}(1),
\end{align*}
where%
\begin{align*}
\xi_{1n}  &  =\left(  \mathcal{V}_{V,p}^{+(s)}(h_{n})\right)  ^{-1/2}\left(
s!e_{s}^{\prime}\hat{\beta}_{V,p}^{+}(h_{n})-\E[s!e_{s}^{\prime}\hat{\beta
}_{V,p}^{+}(h_{n})|\mathcal{S}_{n}]\right) \\
&  =\left(  \mathcal{V}_{V,p}^{+(s)}(h_{n})\right)  ^{-1/2}s!e_{s}^{\prime
}H_{p}(h_{n})\Gamma_{+,p}^{-1}(h_{n})X_{p}^{\prime}(h_{n})K^{+}(h_{n}%
)\varepsilon_{V},
\end{align*}%
\begin{align*}
\xi_{2n}  &  =\left(  \mathcal{V}_{V,p}^{+(s)}(h_{n})\right)  ^{-1/2}\left(
E[s!e_{s}^{\prime}\hat{\beta}_{V,p}^{+}(h_{n})|\mathcal{S}_{n}]-s!e_{s}%
^{\prime}\beta_{V}^{+}(h_{n})-h_{n}^{1+p-s}\frac{\mu_{V+}^{(p+1)}}%
{(p+1)!}\mathcal{B}_{s,p,p+1}^{+}(h_{n})\right) \\
&  =O_{P}\left(  \sqrt{nh_{n}^{1+2s}}\right)  O_{P}\left(  h_{n}%
^{2+p-s}\right)  =O_{P}\left(  \sqrt{nh_{n}^{5+2p}}\right)  =o_{P}(1).
\end{align*}
Write
\begin{align*}
\xi_{1n}  &  =\sum_{i=1}^{n}\varpi_{n,i}\varepsilon_{V_{i}}+o_{P}(1)\\
&  \equiv\tilde{\xi}_{1n}+o_{P}(1),
\end{align*}
where%
\[
\varpi_{n,i}=\left(  n^{-1}h_{n}^{-1-2s}e_{l}^{\prime}\Gamma_{+,l}^{-1}%
\Psi_{V,l}^{+}\Gamma_{+,l}^{-1}e_{l}\right)  ^{-1/2}e_{p}^{\prime}\Gamma
_{+,p}^{-1}(h_{n})X_{i,p}k_{ih_{n}}^{+}/n.
\]
For any $\lambda\in\mathbb{R}^{m}$ with unit norm, $\{\lambda^{\prime}%
\varpi_{n,i}\varepsilon_{V_{i}}\}_{i=1}^{n}$ is a triangular array of
independent random variables where $\E[\tilde{\xi}_{1n}]=0$ and $Var\left(
\tilde{\xi}_{1n}\right)  \rightarrow1.$ Thus, $\tilde{\xi}_{1n}\rightarrow
_{d}N\left(  0,1\right)  $ by Linderberg-Feller central limit theorem for
triangular arrays because%
\begin{align*}
\sum_{i=1}^{n}\E\left[  \left\vert \lambda^{\prime}\varpi_{n,i}\varepsilon
_{V_{i}}\right\vert ^{4}\right]   &  \lesssim n^{2}h_{n}^{2+4s}h_{n}^{-4s}%
\sum_{i=1}^{n}\E\left[  \left\vert e_{s}^{\prime}\Gamma_{+,p}^{-1}%
(h_{n})X_{i,p}k_{ih_{n}}^{+}\right\vert ^{4}/n^{4}\right] \\
&  =O(n^{-1}h_{n}^{-1})\\
&  =o(1).
\end{align*}
The result for $\hat{\mu}_{V,p}^{-(s)}$ and $\hat{\delta}_{V,p}$ follows the
same arguments.
\end{proof}
\end{lemma}

\begin{lemma}
\label{Rsmall}Under Assumption \ref{AssumptionRRD}(1)-(7), for $d\geq p+2,$ $V=Y$
and $X,$
\begin{align*}
\hat{c}-c &  =O_{P}\left(  \frac{1}{\sqrt{nh_{n}}}+h_{n}^{(p+1)}\right)  ,\\
\hat{\delta}_{V}-\delta_{V} &  =O_{P}\left(  \frac{1}{\sqrt{nh_{n}}}%
+h_{n}^{(p+1)}\right)  ,
\end{align*}%
\[
\hat{\mathcal{R}}=O_{P}\left(  \frac{1}{nh_{n}}+h_{n}^{2(p+1)}\right)  ,
\]

\begin{proof}
A Taylor expansion, $\hat{\pi}-\pi=O_{P}(n^{-1/2})$ and previous results yield%
\[
(\hat{c}-c)=G_{+}(\hat{\mu}_{V}^{+}-\mu_{V}^{+})+G_{-}\left(  \hat{\mu}%
_{V}^{-}-\mu_{V}^{-}\right)  +o_{P}(n^{-1/2}h_{n}^{-1/2}+h_{n}^{(p+1)}),
\]
where $G_{+}$ and $G_{-}$ are the derivatives of $g$ with respect to $\mu
_{V}^{+}$ and $\mu_{V}^{-},$ respectively. Previous lemmas imply for $V=Y$ and
$X_i$%
\[
\left(  \hat{\delta}_{V}-\delta_{V}\right)  ^{2}=O_{P}\left(  \frac{1}{nh_{n}%
}+h_{n}^{2(p+1)}\right)  .
\]
Write
\begin{equation}
\hat{\tau}_{Y}-\tau_{Y}\equiv c^{\prime}\left(  \hat{\delta}_{Y}-\delta
_{Y}\right)  +(\hat{c}-c)^{\prime}\delta_{Y}+(\hat{c}-c)^{\prime}\left(
\hat{\delta}_{Y}-\delta_{Y}\right)  \label{A2}%
\end{equation}
Thus, by previous lemmas
\[
\left(  \hat{\tau}_{Y}-\tau_{Y}\right)  ^{2}=O_{P}\left(  \frac{1}{nh_{n}%
}+h_{n}^{2(p+1)}\right)  .
\]
The same results hold for $\left(  \hat{\tau}_{X}-\tau_{X}\right)  ^{2}.$
Furthermore,
\[
\left(  \hat{\tau}_{Y}-\tau_{Y}\right)  \left(  \hat{\tau}_{X}-\tau
_{X}\right)  =O_{P}\left(  \sqrt{\frac{1}{nh_{n}}+h_{n}^{2(p+1)}}\right)
O_{P}\left(  \sqrt{\frac{1}{nh_{n}}+h_{n}^{2(p+1)}}\right)  .
\]
The results follows from the expression for $\hat{\mathcal{R}}$ in (\ref{R}).
\end{proof}
\end{lemma}

\noindent Define the linear term in the expansion (\ref{expansion})%
\[
\hat{l}_{\omega}=\frac{\hat{\tau}_{Y}-\tau_{Y}}{\tau_{X}}-\frac{\tau
_{Y}\left(  \hat{\tau}_{X}-\tau_{X}\right)  }{\tau_{X}^{2}},
\]
where each term $\hat{\tau}_{V}$ has been estimated by a $p-th$ order local polynomial.

\begin{lemma}
\label{Bias2}Under Assumption \ref{AssumptionRRD}(1)-(7), for $d\geq p+2,$%
\[
\E[\hat{l}_{\omega}|\mathcal{S}_{n}]=h_{n}^{p+1}\mathbf{B}_{\omega,p,p+1}%
(h_{n})+h_{n}^{p+2}\mathbf{B}_{\omega,p,p+2}(h_{n})+o_{P}(h_{n}^{2+p}),
\]
where
\begin{equation}
\mathbf{B}_{\omega,p,m}=\frac{B_{\tau_{Y},p,m}(h_{n})}{\tau_{X}}-\frac
{\tau_{Y}B_{\tau_{X},p,m}(h_{n})}{\tau_{X}^{2}}, \label{biasw}%
\end{equation}%
\[
B_{\tau_{Y},p,m}(h_{n})=c^{\prime}B_{Y,p,m}(h_{n})+B_{C,p,m}^{\prime}%
(h_{n})\delta_{Y},
\]%
\[
B_{\tau_{X},p,m}(h_{n})=c^{\prime}B_{X,p,m}(h_{n})+B_{C,p,m}^{\prime}%
(h_{n})\delta_{X},
\]
with $B_{V,l,m}$ defined (\ref{biasdelta}) and%
\[
B_{C,p,m}(h_{n})=G_{+}\frac{\mu_{V}^{+(m)}}{m!}\mathcal{B}_{p,m}^{+}%
(h_{n})+G_{-}\frac{\mu_{V}^{-(m)}}{m!}\mathcal{B}_{p,m}^{-}(h_{n}).
\]

\begin{proof}
By the delta method%
\[
\E[\hat{c}-c|\mathcal{S}_{n}]=h_{n}^{1+p}B_{C,p,p+1}(h_{n})+h_{n}%
^{2+p}B_{C,p,p+2}(h_{n})+o_{P}(h_{n}^{2+p}).
\]
From (\ref{A2}) and the bias expression for $\E[\hat{\delta}_{V,l}%
(h_{n})|\mathcal{S}_{n}]$ it holds%
\[
\E[\hat{\tau}_{Y}-\tau_{Y}|\mathcal{S}_{n}]=h_{n}^{1+p}B_{\tau_{Y},p,p+1}%
(h_{n})+h_{n}^{2+p}B_{\tau_{Y},p,p+2}(h_{n})+o_{P}(h_{n}^{2+p}),
\]
The same expression holds for $\E[\hat{\tau}_{X}-\tau_{X}|\mathcal{S}_{n}].$
The result follows from the expression for $\hat{l}_{\omega}.$
\end{proof}
\end{lemma}

\begin{lemma}
\label{Variance2}Under Assumption \ref{AssumptionRRD}(1)-(7), for $d\geq p+2,$%
\begin{equation}
\mathbf{V}_{\omega,p}(h_{n})=Var[\hat{l}_{\omega}|\mathcal{S}_{n}]=\frac
{1}{\tau_{X}^{2}}\mathbf{V}_{\tau,Y,p}(h_{n})-\frac{2\tau_{Y}}{\tau_{X}^{3}%
}\mathbf{C}_{\tau,p}(h_{n})+\frac{\tau_{Y}^{2}}{\tau_{X}^{4}}\mathbf{V}%
_{\tau,X,p}(h_{n}), \label{variancew}%
\end{equation}
where

\vspace{-1.5cm}
\begin{align*}
\mathbf{V}_{\tau,Y,p}(h_{n})  &  =c^{\prime}\mathbf{V}_{Y,p}(h_{n}%
)c+\delta_{Y}^{\prime}G_{+}\mathcal{V}_{V,p}^{+}(h_{n})G_{+}^{\prime}%
\delta_{Y}+\delta_{Y}^{\prime}G_{-}\mathcal{V}_{V,p}^{-}(h_{n})G_{-}^{\prime
}\delta_{Y},\\
\mathbf{C}_{\tau,p}(h_{n})  &  =c^{\prime}\mathbf{C}_{p}(h_{n})c+c^{\prime
}\mathbf{C}_{YC,p}(h_{n})\delta_{X}+\delta_{Y}^{\prime}\mathbf{C}_{XC,p}%
(h_{n})c+\delta_{Y}^{\prime}\mathbf{V}_{C,p}(h_{n})\delta_{X},
\\
\mathbf{C}_{p}(h_{n})  &  =\mathcal{C}_{YX,p}^{+}(h_{n})+\mathcal{C}%
_{YX,p}^{-}(h_{n}),\\
\mathbf{C}_{YC,p}(h_{n})  &  =G_{+}\mathcal{C}_{YV,p}^{+}(h_{n})-G_{-}%
\mathcal{C}_{YV,p}^{-}(h_{n}),\\
\mathbf{C}_{XC,p}(h_{n})  &  =G_{+}\mathcal{C}_{XV,p}^{+}(h_{n})-G_{-}%
\mathcal{C}_{XV,p}^{-}(h_{n}),
\\
\mathbf{V}_{C,p}(h_{n})  &  =G_{+}\mathcal{V}_{V,p}^{+}(h_{n})G_{+}^{\prime
}+G_{-}\mathcal{V}_{V,p}^{-}(h_{n})G_{-}^{\prime},\\
\mathcal{C}_{UV,p}^{+}(h_{n})  &  =n^{-1}e_{0}^{\prime}\Gamma_{+,p}^{-1}%
(h_{n})\Psi_{UV,p}^{+}(h_{n})\Gamma_{+,p}^{-1}(h_{n})e_{0},
\\
\mathcal{C}_{UV,p}^{-}(h_{n})  &  =n^{-1}e_{0}^{\prime}\Gamma_{-,p}^{-1}%
(h_{n})\Psi_{UV,p}^{-}(h_{n})\Gamma_{-,p}^{-1}(h_{n})e_{0}.
\end{align*}

\begin{proof}
From the expression of $\hat{l}_{\omega},$
\[
Var[\hat{l}_{\omega}|\mathcal{S}_{n}]=\frac{1}{\tau_{X}^{2}}Var[\hat{\tau}%
_{Y}-\tau_{Y}|\mathcal{S}_{n}]-\frac{2\tau_{Y}}{\tau_{X}^{3}}Cov[\hat{\tau
}_{Y}-\tau_{Y},\hat{\tau}_{X}-\tau_{X}|\mathcal{S}_{n}]+\frac{\tau_{Y}^{2}%
}{\tau_{X}^{4}}Var[\hat{\tau}_{X}-\tau_{X}|\mathcal{S}_{n}].
\]
Fromt the expansion of $\hat{\tau}_{V}-\tau_{V},$
\begin{align*}
Var[\hat{\tau}_{Y}-\tau_{Y}|\mathcal{S}_{n}]  &  =c^{\prime}Var[\hat{\delta
}_{Y}-\delta_{Y}|\mathcal{S}_{n}]c+\delta_{Y}^{\prime}G_{+}Var[\hat{\mu}%
_{Y}^{+}-\mu_{Y}^{+}|\mathcal{S}_{n}]G_{+}^{\prime}\delta_{Y}\\
&  +\delta_{Y}^{\prime}G_{-}Var[\hat{\mu}_{Y}^{+}-\mu_{Y}^{+}|\mathcal{S}%
_{n}]G_{-}^{\prime}\delta_{Y},+o_{P}(1),\\
Cov[\hat{\tau}_{Y}-\tau_{Y},\hat{\tau}_{X}-\tau_{X}|\mathcal{S}_{n}]  &
=c^{\prime}Cov[\hat{\delta}_{Y}-\delta_{Y},\hat{\delta}_{X}-\delta
_{X}|\mathcal{S}_{n}]c\\
&  +c^{\prime}Cov[\hat{\delta}_{Y}-\delta_{Y},\hat{c}-c|\mathcal{S}_{n}%
]\delta_{X}\\
&  +\delta_{Y}^{\prime}Cov[\hat{c}-c,\hat{\delta}_{Y}-\delta_{Y}%
|\mathcal{S}_{n}]c\\
&  +\delta_{Y}^{\prime}Var[\hat{c}-c|\mathcal{S}_{n}]\delta_{X}+o_{P}(1)\\
Var[\hat{\tau}_{X}-\tau_{X}|\mathcal{S}_{n}]  &  =c^{\prime}Var[\hat{\delta
}_{X}-\delta_{X}|\mathcal{S}_{n}]c+\delta_{X}^{\prime}G_{+}Var[\hat{\mu}%
_{X}^{+}-\mu_{X}^{+}|\mathcal{S}_{n}]G_{+}^{\prime}\delta_{X}\\
&  +\delta_{X}^{\prime}G_{-}Var[\hat{\mu}_{X}^{+}-\mu_{X}^{+}|\mathcal{S}%
_{n}]G_{-}^{\prime}\delta_{X},+o_{P}(1).
\end{align*}
Similarly to our previous results, it is shown that%
\begin{align*}
Cov[\hat{\delta}_{Y}-\delta_{Y},\hat{\delta}_{X}-\delta_{X}|\mathcal{S}_{n}]
&  =Cov[\hat{\mu}_{Y}^{+}-\mu_{Y}^{+},\hat{\mu}_{X}^{+}-\mu_{X}^{+}%
|\mathcal{S}_{n}]+Cov[\hat{\mu}_{Y}^{-}-\mu_{Y}^{-},\hat{\mu}_{X}^{-}-\mu
_{X}^{-}|\mathcal{S}_{n}]\\
&  =\mathcal{C}_{YX,p}^{+}(h_{n})+\mathcal{C}_{YX,p}^{-}(h_{n}),\\
Cov[\hat{\delta}_{Y}-\delta_{Y},\hat{c}-c|\mathcal{S}_{n}]  &  =G_{+}%
\mathcal{C}_{YV,p}^{+}(h_{n})-G_{-}\mathcal{C}_{YV,p}^{-}(h_{n})+o_{P}(1),\\
Var[\hat{c}-c|\mathcal{S}_{n}]  &  =G_{+}\mathcal{V}_{V,p}^{+}(h_{n}%
)G_{+}^{\prime}+G_{-}\mathcal{V}_{V,p}^{-}(h_{n})G_{-}^{\prime}+o_{P}(1).
\end{align*}

\end{proof}
\end{lemma}

\begin{lemma}
\label{MainAN}Under Assumption \ref{AssumptionRRD}(1)-(7), $nh_{n}^{2p+5}%
\rightarrow0$ and $d\geq p+2$%
\[
\left(  \mathbf{V}_{\omega,p}(h_{n})\right)  ^{-1/2}\left(  \hat{\beta
}_{\omega}-\beta_{\omega}-h_{n}^{p+1}\mathbf{B}_{\omega,p,p+1}(h_{n})\right)
\rightarrow_{d}N\left(  0,1\right)  ,
\]

\begin{proof}
Write the left hand side of the last display as%
\begin{align*}
\left(  \mathbf{V}_{\omega,p}(h_{n})\right)  ^{-1/2}\left(  \hat{l}_{\omega
}-h_{n}^{p+1}\mathbf{B}_{\omega,p,p+1}(h_{n})\right)  +\left(  \mathbf{V}%
_{\omega,p}(h_{n})\right)  ^{-1/2}\hat{\mathcal{R}} &  =\xi_{1n}+\xi_{2n},\\
&  =\xi_{1n}+o_{P}(1),
\end{align*}
where%
\begin{align*}
\xi_{2n} &  =O_{P}\left(  \frac{\sqrt{nh_{n}}}{nh_{n}}+\sqrt{nh_{n}}%
h_{n}^{2(p+1)}\right)  \\
&  =o_{P}(1)
\end{align*}
and%
\begin{align*}
\xi_{1n} &  =\left(  \mathbf{V}_{\omega,p}(h_{n})\right)  ^{-1/2}\left(
\hat{l}_{\omega}-\E[\hat{l}_{\omega}|\mathcal{S}_{n}]\right)  +\left(
\mathbf{V}_{\omega,p}(h_{n})\right)  ^{-1/2}\left(  h_{n}^{p+2}\mathbf{B}%
_{\omega,p,p+2}(h_{n})+o_{P}(h_{n}^{2+p})\right)  \\
&  \rightarrow_{d}N\left(  0,1\right)  ,
\end{align*}
by Lemma \ref{AN}.
\end{proof}
\end{lemma}

\begin{theorem}
\label{TheoremATEs} Let Assumption \ref{ass:rdd} hold.
Suppose also that Assumption \ref{AssumptionRRD}(1)-(7) in Section
\ref{AppendixSectionRRD} in the Appendix hold with $p=1$. Then
\[
\left(  \mathbf{V}_{\omega,1}(h_{n})\right)  ^{-1/2}\left(  \hat{\beta
}_{\omega}-\beta_{\omega}-h_{n}^{2}\mathbf{B}_{\omega,1,2}(h_{n})\right)
\rightarrow_{d}N\left(  0,1\right)  ,
\]
where $\mathbf{B}_{\omega,1,2}(h_{n})$ and  $\mathbf{V}_{\omega,1}(h_{n})$ are
given in \eqref{biasw} and  \eqref{variancew}, respectively,  in Section
\ref{AppendixSectionRRD} in the Appendix.
\end{theorem}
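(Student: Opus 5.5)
The proof specializes Lemma \ref{MainAN} to local-linear estimation ($p=1$). Most of the work is already done in the preceding lemmas, so the plan is mainly to check that their hypotheses hold under the stated conditions. That CLT is stated with the \emph{true} identified estimand $\beta_\omega=\tau_Y/\tau_X$ as centering, so Assumption \ref{ass:rdd} enters first.

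\textbf{Step 1 (centering).} By Theorem \ref{thm:WLATE}(a),(d), under Assumption \ref{ass:rdd} with every positively weighted cell $w_j$ lying in $\mathcal W_{z_0}$ and having $\delta_X(w_j)\neq0$, we have $\beta_\omega=c'\delta_Y/c'\delta_X$. This uses that $\pi_j>0$ and that Assumption \ref{AssumptionRRD}(7) keeps $\E[W_{ij}\mid Z_i=z]$ bounded away from zero on both sides. The ratio is well defined because $\tau_X\neq0$ by Assumption \ref{AssumptionRRD}(2). Hence the estimand targeted by $\hat\beta_\omega=\hat\tau_Y/\hat\tau_X$ is exactly the WLATE.

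\textbf{Step 2 (decomposition).} Use the exact identity (\ref{expansion}) with remainder (\ref{R}). This splits $\hat\beta_\omega-\beta_\omega-h_n^2\mathbf B_{\omega,1,2}(h_n)$ into two pieces:
\begin{itemize}
\item $\hat l_\omega-\E[\hat l_\omega\mid\mathcal S_n]$, the stochastic part;
\item $\E[\hat l_\omega\mid\mathcal S_n]-h_n^2\mathbf B_{\omega,1,2}(h_n)$ plus $\hat{\mathcal R}$, the residual part.
\end{itemize}
Lemma \ref{Bias2} with $p=1$ gives $\E[\hat l_\omega\mid\mathcal S_n]=h_n^2\mathbf B_{\omega,1,2}+O_P(h_n^3)$. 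Lemma \ref{Rsmall} gives $\hat{\mathcal R}=O_P((nh_n)^{-1}+h_n^4)$. Lemma \ref{Variance2} together with Lemma \ref{Variance} shows $\mathbf V_{\omega,1}(h_n)\asymp(nh_n)^{-1}$. Multiplying the residual part by $\mathbf V_{\omega,1}^{-1/2}\asymp\sqrt{nh_n}$ gives
\[
O_P\big(\sqrt{nh_n^7}+(nh_n)^{-1/2}+\sqrt{nh_n^9}\big)=o_P(1)
\]
under $nh_n^{7}\to0$ and $nh_n\to\infty$, which is Assumption \ref{AssumptionRRD}(5) with $p=1$.

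\textbf{Step 3 (CLT for the linear term).} The stochastic part is a fixed linear combination of the local-linear estimators $\hat\mu^\pm_Y$, $\hat\mu^\pm_X$, and, through $G_\pm$, $\hat\mu^\pm_V$ from $\hat c$. The contribution of $\hat\pi-\pi=O_P(n^{-1/2})$ is negligible relative to $(nh_n)^{-1/2}$. Each local-linear estimator is of the form $\sum_i\varpi_{n,i}\varepsilon_{V_i}$. Stack them and apply the Cramér–Wold device with the triangular-array Lindeberg (Lyapunov fourth-moment) argument from Lemma \ref{AN}. This yields joint asymptotic normality. Normalizing by the exact conditional variance $\mathbf V_{\omega,1}(h_n)$ from Lemma \ref{Variance2} gives $N(0,1)$ by Slutsky.

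\textbf{Main obstacle.} The delicate point is that $\mathbf V_{\omega,1}(h_n)$ must be nondegenerate, of exact order $(nh_n)^{-1}$, so that the normalization is valid. The linear combination could in principle cancel. I would rule this out using Assumption \ref{AssumptionRRD}(6)(ii), which bounds the conditional variances away from zero, together with positive definiteness of $\Gamma_{\pm,1}$. With these, the "$+$" and "$-$" side contributions are independent and at least one carries a nonzero coefficient vector because $\tau_X\neq0$. The remaining issue is the joint handling of the estimated weights $\hat c$; this is routine once $g$ is $C^1$, as Assumption \ref{AssumptionRRD}(7) requires.
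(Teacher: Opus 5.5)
Your route is the paper's. The paper proves this theorem in one line by invoking Lemma~\ref{MainAN} at $p=1$, and that lemma's proof is exactly your split: the centered linear term is normal by the Lindeberg argument of Lemma~\ref{AN}, while the bias remainder from Lemma~\ref{Bias2} and $\hat{\mathcal R}$ from Lemma~\ref{Rsmall} are negligible under $nh_n^7\to0$ and $nh_n\to\infty$.

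The step that fails is your derivation of nondegeneracy in the ``main obstacle'' paragraph. Separate lower bounds on $\mathrm{Var}[Y_i\mid Z_i,W_i]$ and $\mathrm{Var}[X_i\mid Z_i,W_i]$, positive definiteness of $\Gamma_{\pm,1}$, and $c\neq0$ do not prevent cancellation in $c_1'\varepsilon_Y+c_2'\varepsilon_X$, because $\varepsilon_{Y_i}$ and $\varepsilon_{X_i}$ may be perfectly correlated within a cell.

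For a counterexample, let $Y_i=\beta X_i$ with $\beta\neq0$ and $\mathbb{P}(X_i=1\mid Z_i,W_i)$ bounded away from $0$ and $1$. Assumptions~\ref{ass:rdd} and \ref{AssumptionRRD}(1)--(7) can all hold. Yet local linear fits are linear in the outcome, so $\hat\delta_Y=\beta\hat\delta_X$ identically and $\hat\beta_\omega\equiv\beta_\omega=\beta$ for any weights. For the CWLATE, $c_{1j}=\pi_j\delta_X(w_j)/\tau_X$ and $c_{2j}=\pi_j\{\delta_Y(w_j)-2\beta_{CW}\delta_X(w_j)\}/\tau_X=-\beta c_{1j}$. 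Hence $\mathbf V_{\omega,1}(h_n)=0$ and the normalization in the statement is undefined.

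What is needed is that $\mathrm{Var}(c_{1j}Y_i+c_{2j}X_i\mid Z_i=0^{\pm},W_i=w_j)>0$ for some cell and side. A sufficient condition is a positive definite conditional covariance matrix of $(Y_i,X_i)$ in some cell with $\delta_X(w_j)\neq0$. For general $g$, you can instead assume directly that $(nh_n\mathbf V_{\omega,1}(h_n))^{-1}=O_P(1)$.

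This condition does not follow from the listed assumptions. The paper takes it for granted: the statement normalizes by $\mathbf V_{\omega,1}^{-1/2}(h_n)$, and Lemma~\ref{MainAN} uses $\mathbf V_{\omega,p}^{-1/2}(h_n)=O_P(\sqrt{nh_n})$ to dispose of $\xi_{2n}$. State it as an explicit hypothesis rather than trying to derive it from Assumption~\ref{AssumptionRRD}(6)(ii); with that change, the rest of your argument goes through.
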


\noindent\textbf{Proof of Theorem \ref{TheoremATEs}}: It follows from Lemma
\ref{MainAN} for $p=1.$ $\blacksquare$\bigskip

\noindent When $g$ depends on $\delta_{X}$ the previous formulas simplify. In
this case and with an abuse of notation, we define the linear term as a linear
combination of $\hat{\delta}_{Y}-\delta_{Y}$ and $\hat{\delta}_{X}-\delta_{X}$
as%
\[
\hat{l}_{\omega}=c_{1}^{\prime}(\hat{\delta}_{Y}-\delta_{Y})+c_{2}^{\prime
}\left(  \hat{\delta}_{X}-\delta_{X}\right)  ,
\]
where $c_{1}=c/\tau_{X}$, $c_{2}^{\prime}=\left[  \delta_{Y}^{\prime}%
G_{\delta}-\beta_{w}(c^{\prime}+\delta_{X}^{\prime}G_{\delta})\right]
/\tau_{X}$ and $G_{\delta}$ is the derivative of $g$ with respect to
$\delta_{X}$.

\begin{lemma}
\label{Biasc2}Under Assumption \ref{AssumptionRRD}(1)-(7), for $d\geq p+2,$%
\[
\E[\hat{l}_{\omega}|\mathcal{S}_{n}]=h_{n}^{p+1}\mathbf{B}_{\omega,p,p+1}%
(h_{n})+h_{n}^{p+2}\mathbf{B}_{\omega,p,p+2}(h_{n})+o_{P}(h_{n}^{2+p}),
\]
where now
\begin{equation}
\mathbf{B}_{\omega,p,m}=c_{1}^{\prime}B_{Y,p,m}(h_{n})+c_{2}^{\prime}%
B_{X,p,m}(h_{n}),\label{BiasCase2}%
\end{equation}
and $B_{V,p,m}(h_{n})$ is defined in (\ref{biasdelta}).

\bigskip
\begin{proof}
It follows by a simple Taylor expansion and our previous results.
\end{proof}
\end{lemma}

\noindent Recall the definitions for conditional covariances and variances of
$(\hat{\mu}_{U}^{+}-\mu_{U}^{+})$ and $\left(  \hat{\mu}_{V}^{+}-\mu_{V}%
^{+}\right)  $ (and similarly for $-$ parts)%
\begin{align*}
\mathcal{C}_{UV,p}^{+}(h_{n})  &  =n^{-1}e_{0}^{\prime}\Gamma_{+,p}^{-1}%
(h_{n})\Psi_{UV,p}^{+}(h_{n})\Gamma_{+,p}^{-1}(h_{n})e_{0},\\
\mathcal{C}_{UV,p}^{-}(h_{n})  &  =n^{-1}e_{0}^{\prime}\Gamma_{-,p}^{-1}%
(h_{n})\Psi_{UV,p}^{-}(h_{n})\Gamma_{-,p}^{-1}(h_{n})e_{0},\\
\mathcal{V}_{V,p}^{+}(h_{n})  &  =\mathcal{C}_{VV,p}^{+}(h_{n})\text{ and
}\mathcal{V}_{V,p}^{-}(h_{n})=\mathcal{C}_{VV,p}^{-}(h_{n}).
\end{align*}

\begin{lemma}
\label{Variancec2}Under Assumption \ref{AssumptionRRD}(1)-(7), for $d\geq p+2,$%
\begin{equation}
\mathbf{V}_{\omega,p}(h_{n})=Var[\hat{l}_{\omega}|\mathcal{S}_{n}%
]=V_{+1}+V_{-1}, \label{VarianceCase2}%
\end{equation}
where%
\[
V_{+1}=c_{1}^{\prime}\mathcal{V}_{Y,p}^{+}(h_{n})c_{1}+2c_{1}^{\prime
}\mathcal{C}_{YX,p}^{+}(h_{n})c_{2}+c_{2}^{\prime}\mathcal{V}_{X,p}^{+}%
(h_{n})c_{2},
\]
and
\[
V_{-1}=c_{1}^{\prime}\mathcal{V}_{Y,p}^{-}(h_{n})c_{1}+2c_{1}^{\prime
}\mathcal{C}_{YX,p}^{-}(h_{n})c_{2}+c_{2}^{\prime}\mathcal{V}_{X,p}^{-}%
(h_{n})c_{2}.
\]
An equivalent representation is%
\[
\mathbf{V}_{\omega,p}(h_{n})=c_{1}^{\prime}\mathbf{V}_{Y,p}(h_{n})c_{1}%
+2c_{1}^{\prime}\mathbf{C}_{p}(h_{n})c_{2}+c_{2}^{\prime}\mathbf{V}%
_{X,p}(h_{n})c_{2},
\]
where $\mathbf{V}_{V,p}(h_{n})$ is defined in (\ref{variancedelta}), and
\[
\mathbf{C}_{p}(h_{n})=\mathcal{C}_{YX,p}^{+}(h_{n})+\mathcal{C}_{YX,p}%
^{-}(h_{n}).
\]

\begin{proof}
Write $\hat{l}_{\omega}=\hat{l}_{\omega}^{+}-\hat{l}_{\omega}^{+},$ where
\begin{align*}
\hat{l}_{\omega}^{+}  &  =c_{1}^{\prime}(\hat{\mu}_{Y}^{+}-\mu_{Y}^{+}%
)+c_{2}^{\prime}\left(  \hat{\mu}_{X}^{+}-\mu_{X}^{+}\right)  ,\\
\hat{l}_{\omega}^{-}  &  =c_{1}^{\prime}(\hat{\mu}_{Y}^{-}-\mu_{Y}^{-}%
)+c_{2}^{\prime}\left(  \hat{\mu}_{X}^{-}-\mu_{X}^{-}\right)  .
\end{align*}
The proof follows from previous results.
\end{proof}
\end{lemma}

The following theory is for the bias-correction. Assume we use a $q$-th order
polynomial for estimating the bias term (the $p+1$ derivative), $q\geq p+1$.
Linearizing the bias-corrected estimator, we obtain
\begin{align*}
\hat{\beta}_{w}^{bc}-\beta_{w} &  =\hat{\beta}_{w}-\beta_{w}-h_{n}^{p+1}%
\hat{B}_{w,p,q}(h_{n},b_{n})\\
&  =\hat{l}_{\omega}-h_{n}^{p+1}\tilde{B}_{w,p,q}(h_{n},b_{n})+R_{n}%
-h_{n}^{p+1}\left(  \hat{B}_{w,p,q}(h_{n},b_{n})-\tilde{B}_{w,p,q}(h_{n}%
,b_{n})\right)  \\
&  \equiv\hat{l}_{\omega}^{bc}+R_{n}-R_{n}^{bc},
\end{align*}
where%
\begin{align*}
\hat{B}_{w,p,q}(h_{n},b_{n}) &  =\hat{c}_{1}^{\prime}\hat{B}_{Y,p,q}%
(h_{n},b_{n})+\hat{c}_{2}^{\prime}\hat{B}_{X,p,q}(h_{n},b_{n}),\\
\tilde{B}_{w,p,q}(h_{n},b_{n}) &  =c_{1}^{\prime}\hat{B}_{Y,p,q}(h_{n}%
,b_{n})+c_{2}^{\prime}\hat{B}_{X,p,q}(h_{n},b_{n}),\\
\hat{l}_{\omega}^{bc} &  =\hat{l}_{\omega}-h_{n}^{p+1}\tilde{B}_{w,p,q}%
(h_{n},b_{n}),\\
R_{n}^{bc} &  =h_{n}^{p+1}\left(  \hat{B}_{w,p,q}(h_{n},b_{n})-\tilde
{B}_{w,p,q}(h_{n},b_{n})\right)  ,\\
\hat{B}_{V,p,q}(h_{n},b_{n}) &  =\frac{\hat{\mu}_{V,q}^{+(p+1)}(b_{n}%
)}{(p+1)!}\mathcal{B}_{p,p+1}^{+}(h_{n})-\frac{\hat{\mu}_{V,q}^{-(p+1)}%
(b_{n})}{(p+1)!}\mathcal{B}_{p,p+1}^{-}(h_{n})
\end{align*}
and $\mathcal{B}_{l,m}^{+}(h_{n})=e_{0}^{\prime}\Gamma_{+,l}^{-1}%
(h_{n})\vartheta_{l,m}^{+}(h_{n})$ and $\mathcal{B}_{l,m}^{-}(h_{n}%
)=e_{0}^{\prime}\Gamma_{-,l}^{-1}(h_{n})\vartheta_{l,m}^{-}(h_{n}).$

\begin{lemma}
\label{Rbcsmall}Under Assumption \ref{AssumptionRRD}(1)-(7), for $d\geq p+2,$
$n\min\{h_{n},b_{n}\}\rightarrow\infty,$ $h_{n}\rightarrow0,$ and
$nh_{n}\rightarrow\infty,$ then%
\[
R_{n}=O_{P}\left(  \frac{1}{nh_{n}}+h_{n}^{2(p+1)}\right)  ,
\]%
\[
R_{n}^{bc}=O_{P}\left(  \frac{h_{n}^{p+1}}{\sqrt{nh_{n}}}+h_{n}^{2(p+1)}%
\right)  O_{P}\left(  \frac{1}{\sqrt{nb_{n}^{3+2p}}}+1\right)  .
\]

\begin{proof}
Note%
\begin{equation}
R_{n}=\frac{R_{\tau X}}{\tau_{X}}-\frac{\tau_{Y}R_{\tau Y}}{\tau_{X}^{2}}%
+\hat{\mathcal{R}},\label{Rn}%
\end{equation}
where $R_{\tau X}$ and $R_{\tau Y}$ are implicitly defined from
\[
\hat{\tau}_{Y}-\tau_{Y}\equiv c^{\prime}\left(  \hat{\delta}_{Y}-\delta
_{Y}\right)  +\delta_{Y}^{\prime}G_{\delta}(\hat{\delta}_{X}-\delta
_{X})+R_{\tau Y}%
\]
and%
\[
\hat{\tau}_{X}-\tau_{X}\equiv\left(  c^{\prime}+\delta_{X}^{\prime}G_{\delta
}\right)  (\hat{\delta}_{X}-\delta_{X})+R_{\tau X}.
\]
By continuous differentiability of $g$ and Lemmas \ref{Bias} and
\ref{Variance}
\[
R_{\tau V}=O_{P}\left(  \frac{1}{nh_{n}}+h_{n}^{2(p+1)}\right)  .
\]
The rate for $R_{n}$ then follows from (\ref{Rn}), the last display and Lemma
\ref{Rsmall}. On the other hand, by previous results
\begin{align*}
\left\vert R_{n}^{bc}\right\vert  &  \lesssim h_{n}^{p+1}\left\vert \hat
{c}_{1}-c_{1}\right\vert \left\{  \left\vert e_{p+1}^{\prime}\hat{\beta}%
_{Y,q}^{+}(b_{n})\right\vert +\left\vert e_{p+1}^{\prime}\hat{\beta}_{Y,q}%
^{-}(b_{n})\right\vert \right\}  \\
&  +h_{n}^{p+1}\left\vert \hat{c}_{2}-c_{2}\right\vert \left\{  \left\vert
e_{p+1}^{\prime}\hat{\beta}_{X,q}^{+}(b_{n})\right\vert +\left\vert
e_{p+1}^{\prime}\hat{\beta}_{X,q}^{-}(b_{n})\right\vert \right\}  \\
&  =O_{P}\left(  \frac{h_{n}^{p+1}}{\sqrt{nh_{n}}}+h_{n}^{2(p+1)}\right)
O_{P}\left(  \frac{1}{\sqrt{nb_{n}^{3+2p}}}+1\right)  ,
\end{align*}
By\ Lemmas \ref{Rsmall}, \ref{Bias} and \ref{Variance} .
\end{proof}
\end{lemma}

\begin{lemma}
\label{biasbc}Under Assumption \ref{AssumptionRRD}(1)-(7), for $d\geq p+2,$
$n\min\{h_{n},b_{n}\}\rightarrow\infty,$ $\max\{h_{n},b_{n}\}\rightarrow0,$
then%
\begin{align*}
\E[\hat{l}_{\omega}^{bc}|\mathcal{S}_{n}] &  =h_{n}^{p+2}\mathbf{B}%
_{\omega,p,p+2}(h_{n})[1+o_{P}(1)]\\
&  +h_{n}^{p+1}b_{n}^{p-q}B_{\omega,p,q}^{bc}(h_{n},b_{n})[1+o_{P}(1)],
\end{align*}
where
\[
B_{\omega,p,q}^{bc}(h_{n},b_{n})=c_{1}^{\prime}B_{Y,p,q}^{bc}(h_{n}%
,b_{n})+c_{2}^{\prime}B_{X,p,q}^{bc}(h_{n},b_{n})
\]%
\[
B_{V,p,q}^{bc}(h_{n},b_{n})=\frac{\mu_{V}^{+(q+1)}}{(q+2)!}\mathcal{B}%
_{p+1,q,q+1}^{+}(b_{n})\frac{\mathcal{B}_{p,p+1}^{+}(h_{n})}{(p+1)!}-\frac
{\mu_{V}^{-(q+1)}}{(q+1)!}\mathcal{B}_{p+1,q,q+1}^{-}(b_{n})\frac
{\mathcal{B}_{p,p+1}^{-}(h_{n})}{(p+1)!}.
\]

\begin{proof}
Note with
\begin{align*}
\E[\hat{l}_{\omega}^{bc}|\mathcal{S}_{n}] &  =\E[\hat{l}_{\omega}-h_{n}%
^{p+1}\mathbf{B}_{\omega,p,p+1}(h_{n})|\mathcal{S}_{n}]\\
&  -h_{n}^{p+1}\E[\tilde{B}_{\omega,p,q}(h_{n},b_{n})-\mathbf{B}_{\omega
,p,p+1}(h_{n})|\mathcal{S}_{n}]\\
&  \equiv B_{1}-h_{n}^{p+1}B_{2}.
\end{align*}
By\ Lemma \ref{Bias}
\[
B_{1}=h_{n}^{p+2}\mathbf{B}_{\omega,p,p+2}(h_{n})[1+o_{P}(1)]
\]
and
\begin{align*}
B_{2} &  =c_{1}^{\prime}\E[\hat{B}_{Y,p,q}(h_{n},b_{n})-B_{Y,p,p+1}%
(h_{n})|\mathcal{S}_{n}]+c_{2}^{\prime}\E[\hat{B}_{X,p,q}(h_{n},b_{n}%
)-B_{X,p,p+1}(h_{n})|\mathcal{S}_{n}]\\
&  =c_{1}^{\prime}\E[e_{p+1}^{\prime}\hat{\beta}_{Y,q}^{+}(b_{n})-e_{p+1}%
^{\prime}\beta_{Y}^{+}|\mathcal{S}_{n}]\mathcal{B}_{p,p+1}^{+}(h_{n}%
)-c_{1}^{\prime}\E[e_{p+1}^{\prime}\hat{\beta}_{Y,q}^{-}(b_{n})-e_{p+1}%
^{\prime}\beta_{Y}^{-}|\mathcal{S}_{n}]\mathcal{B}_{p,p+1}^{-}(h_{n})\\
&  +c_{2}^{\prime}\E[e_{p+1}^{\prime}\hat{\beta}_{X,q}^{+}(b_{n})-e_{p+1}%
^{\prime}\beta_{X}^{+}|\mathcal{S}_{n}]\mathcal{B}_{p,p+1}^{+}(h_{n}%
)-c_{2}^{\prime}\E[e_{p+1}^{\prime}\hat{\beta}_{X,q}^{-}(b_{n})-e_{p+1}%
^{\prime}\beta_{X}^{-}|\mathcal{S}_{n}]\mathcal{B}_{p,p+1}^{-}(h_{n})\\
&  =c_{1}^{\prime}b_{n}^{p-q}\left[  \frac{\mu_{Y}^{+(q+1)}}{(q+1)!}%
\mathcal{B}_{p+1,q,q+1}^{+}(b_{n})\frac{\mathcal{B}_{p,p+1}^{+}(h_{n}%
)}{(p+1)!}-\frac{\mu_{Y}^{-(q+1)}}{(q+1)!}\mathcal{B}_{p+1,q,q+1}^{-}%
(b_{n})\frac{\mathcal{B}_{p,p+1}^{-}(h_{n})}{(p+1)!}\right]  \\
&  +c_{2}^{\prime}b_{n}^{p-q}\left[  \frac{\mu_{X}^{+(q+1)}}{(q+1)!}%
\mathcal{B}_{p+1,q,q+1}^{+}(b_{n})\frac{\mathcal{B}_{p,p+1}^{+}(h_{n}%
)}{(p+1)!}-\frac{\mu_{X}^{-(q+1)}}{(q+1)!}\mathcal{B}_{p+1,q,q+1}^{-}%
(b_{n})\frac{\mathcal{B}_{p,p+1}^{-}(h_{n})}{(p+1)!}\right]  \\
&  =b_{n}^{q-p}B_{\omega,p,p+2}^{bc}(h_{n},b_{n}).
\end{align*}

\end{proof}
\end{lemma}

Define%
\begin{align*}
\mathcal{C}_{UV,p,q}^{+}(h_{n},b_{n})  &  =Cov[\hat{\mu}_{U,p}^{+}%
(h_{n}),\frac{\hat{\mu}_{V,q}^{+(p+1)}(b_{n})}{(p+1)!}|\mathcal{S}_{n}]\\
&  =\frac{1}{nb_{n}^{p+1}}e_{0}^{\prime}\Gamma_{+,p}^{-1}(h_{n})\Psi
_{UV,p,q}^{+}(h_{n},b_{n})\Gamma_{+,q}^{-1}(b_{n})e_{p+1},
\end{align*}
and%
\begin{align*}
\mathcal{C}_{UV,p,q}^{-}(h_{n},b_{n})  &  =Cov[\hat{\mu}_{U,p}^{-}%
(h_{n}),\frac{\hat{\mu}_{V,q}^{-(p+1)}(b_{n})}{(p+1)!}|\mathcal{S}_{n}]\\
&  =\frac{1}{nb_{n}^{p+1}}e_{0}^{\prime}\Gamma_{-,p}^{-1}(h_{n})\Psi
_{UV,p,q}^{-}(h_{n},b_{n})\Gamma_{-,q}^{-1}(b_{n})e_{p+1}.
\end{align*}
We drop one $U$ when $V=U$, e.g. $\mathcal{C}_{U,p,q}^{+}(h_{n},b_{n}%
)=\mathcal{C}_{UU,p,q}^{+}(h_{n},b_{n}).$

\begin{lemma}
\label{Variancebc}Under Assumption \ref{AssumptionRRD}(1)-(7), for $d\geq p+2,$
$n\min\{h_{n},b_{n}\}\rightarrow\infty,$ $\max\{h_{n},b_{n}\}\rightarrow0,$
then%
\begin{equation}
\mathbf{V}_{\omega,p,q}^{bc}(h_{n},b_{n})\equiv Var[\hat{l}_{\omega}%
^{bc}|\mathcal{S}_{n}]=\mathbf{V}_{\omega,p}(h_{n})+\mathbf{C}_{\omega
,p,q}^{bc}(h_{n},b_{n}),\label{Varbc}%
\end{equation}
where
\begin{equation}
\mathbf{C}_{\omega,p,q}^{bc}(h_{n},b_{n})=2h_{n}^{p+1}\left(  C_{+12}%
+C_{-12}\right)  +h_{n}^{2p+2}\left(  V_{+2}+V_{-2}\right)  \label{Cbc}%
\end{equation}%
\[
C_{+12}=\left[  c_{1}^{\prime}\mathcal{C}_{Y,p,q}^{+}(h_{n},b_{n})c_{1}%
+c_{1}^{\prime}\mathcal{C}_{YX,p,q}^{+}(h_{n},b_{n})c_{2}+c_{1}^{\prime
}\mathcal{C}_{XY,p,q}^{+}(h_{n},b_{n})c_{2}+c_{2}^{\prime}\mathcal{C}%
_{X,p,q}^{+}(h_{n},b_{n})c_{2}\right]  \frac{\mathcal{B}_{p,p+1}^{+}(h_{n}%
)}{(p+1)!},
\]%
\[
C_{-12}=\left[  c_{1}^{\prime}\mathcal{C}_{Y,p,q}^{-}(h_{n},b_{n})c_{1}%
+c_{1}^{\prime}\mathcal{C}_{YX,p,q}^{-}(h_{n},b_{n})c_{2}+c_{1}^{\prime
}\mathcal{C}_{XY,p,q}^{-}(h_{n},b_{n})c_{2}+c_{2}^{\prime}\mathcal{C}%
_{X,p,q}^{-}(h_{n},b_{n})c_{2}\right]  \frac{\mathcal{B}_{p,p+1}^{-}(h_{n}%
)}{(p+1)!},
\]%
\[
V_{+2}=\left(  c_{1}^{\prime}\mathcal{V}_{Y,p}^{+(q)}(b_{n})c_{1}%
+2c_{1}^{\prime}\mathcal{C}_{YX,p,q}^{+}(b_{n})c_{2}+c_{2}^{\prime}%
\mathcal{V}_{X,p}^{+(q)}(b_{n})c_{2}\right)  \left(  \frac{\mathcal{B}%
_{p,p+1}^{+}(h_{n})}{(p+1)!}\right)  ^{2}%
\]
and%
\[
V_{-2}=\left(  c_{1}^{\prime}\mathcal{V}_{Y,p}^{-(q)}(b_{n})c_{1}%
+2c_{1}^{\prime}\mathcal{C}_{YX,p,q}^{-}(b_{n})c_{2}+c_{2}^{\prime}%
\mathcal{V}_{X,p}^{-(q)}(b_{n})c_{2}\right)  \left(  \frac{\mathcal{B}%
_{p,p+1}^{-}(h_{n})}{(p+1)!}\right)  ^{2}%
\]

\begin{proof}
Note that $\hat{l}_{\omega}^{bc}=\hat{l}_{\omega}^{+bc}-\hat{l}_{\omega}%
^{-bc},$ so that
\[
Var[\hat{l}_{\omega}^{bc}|\mathcal{S}_{n}]=Var[\hat{l}_{\omega}^{+bc}%
|\mathcal{S}_{n}]+Var[\hat{l}_{\omega}^{-bc}|\mathcal{S}_{n}],
\]
where
\begin{align*}
\hat{l}_{\omega}^{+bc} &  =\hat{l}_{\omega}^{+}-h_{n}^{p+1}\tilde{B}%
_{w,p,q}^{+}(h_{n},b_{n}),\\
\hat{l}_{\omega}^{-bc} &  =-\hat{l}_{\omega}^{-}+h_{n}^{p+1}\tilde{B}%
_{w,p,q}^{-}(h_{n},b_{n}),\\
\hat{l}_{\omega}^{+} &  =c_{1}^{\prime}(\hat{\mu}_{Y}^{+}-\mu_{Y}^{+}%
)+c_{2}^{\prime}\left(  \hat{\mu}_{X}^{+}-\mu_{X}^{+}\right)  ,\\
\hat{l}_{\omega}^{-} &  =c_{1}^{\prime}(\hat{\mu}_{Y}^{-}-\mu_{Y}^{-}%
)+c_{2}^{\prime}\left(  \hat{\mu}_{X}^{-}-\mu_{X}^{-}\right)  ,\\
\tilde{B}_{w,p,q}^{\pm}(h_{n},b_{n}) &  =c_{1}^{\prime}\hat{B}_{Y,p,q}^{\pm
}(h_{n},b_{n})+c_{2}^{\prime}\hat{B}_{X,p,q}^{\pm}(h_{n},b_{n}),\\
\hat{B}_{V,p,q}^{\pm}(h_{n},b_{n}) &  =\frac{\hat{\mu}_{V,q}^{\pm(p+1)}%
(b_{n})}{(p+1)!}\mathcal{B}_{p,p+1}^{\pm}(h_{n}).
\end{align*}
This decomposition implies
\[
Var[\hat{l}_{\omega}^{+bc}|\mathcal{S}_{n}]=V_{+1}-2h_{n}^{p+1}C_{+12}%
+h_{n}^{2p+2}V_{+2},
\]
where%
\begin{align*}
C_{+12} &  =Cov[\hat{l}_{\omega}^{+},\tilde{B}_{w,p,q}^{+}(h_{n}%
,b_{n})|\mathcal{S}_{n}]\\
&  =\left(  c_{1}^{\prime}\mathcal{C}_{Y,p,q}^{+}c_{1}+c_{1}^{\prime
}\mathcal{C}_{YX,p,q}^{+}c_{2}+c_{1}^{\prime}\mathcal{C}_{XY,p,q}^{+}%
c_{2}+c_{2}^{\prime}\mathcal{C}_{X,p,q}^{+}c_{2}\right)  \frac{\mathcal{B}%
_{p,p+1}^{+}(h_{n})}{(p+1)!},
\end{align*}
and
\begin{align*}
V_{+2} &  =Var[\tilde{B}_{w,p,q}^{+}(h_{n},b_{n})|\mathcal{S}_{n}]\\
&  =\left(  c_{1}^{\prime}\mathcal{V}_{Y,p}^{-(q)}(b_{n})c_{1}+2c_{1}^{\prime
}\mathcal{C}_{YX,p,q}^{+}(b_{n})c_{2}+c_{2}^{\prime}\mathcal{V}_{X,p}%
^{-(q)}(b_{n})c_{2}\right)  \left(  \frac{\mathcal{B}_{p,p+1}^{+}(h_{n}%
)}{(p+1)!}\right)  ^{2}.
\end{align*}
For the left hand side limits the proof is the same.
\end{proof}
\end{lemma}

\begin{lemma}
\label{ANBC}Under Assumption \ref{AssumptionRRD},%
\[
\left(  \mathbf{V}_{\omega,p,q}^{bc}(h_{n},b_{n})\right)  ^{-1/2}\left(
\hat{\beta}_{\omega}^{bc}-\beta_{\omega}\right)  \rightarrow_{d}N\left(
0,1\right)  .
\]

\begin{proof}
As in \cite{Calonico_Cattaneo_Titiunik} it is shown that
\begin{align*}
\left(  \mathbf{V}_{\omega,p,q}^{bc}(h_{n},b_{n})\right)  ^{-1/2}\left(
\hat{\beta}_{\omega}^{bc}-\beta_{\omega}\right)   &  =\xi_{1n}+\xi_{2n},\\
&  =\xi_{1n}+o_{P}(1),
\end{align*}
where%
\[
\xi_{1n}=\left(  \mathbf{V}_{\omega,p,q}^{bc}(h_{n},b_{n})\right)  ^{-1/2}%
\hat{l}_{\omega}^{bc}.
\]
Proceeding as in Lemma \ref{AN} it is shown that $\xi_{1n}\rightarrow
_{d}N\left(  0,1\right)  $ by Linderberg-Feller central limit theorem for
triangular arrays.
\end{proof}
\end{lemma}

\noindent\textbf{Proof of Theorem \ref{thm:CW_RBC}}: It follows from Lemma
\ref{ANBC} for $p=1,$ $q=2$ and $\omega=CW.$ $\blacksquare$\bigskip

\subsection{Consistent Standard Error Estimators}

In the expressions for $\mathbf{V}_{\omega,p,q}^{bc}(h_{n},b_{n})\ $and
$\mathbf{V}_{\omega,p}(h_{n})$ the only unknown quantities are the vectors
$c_{1}$ and $c_{2},$ and the matrices $\Psi_{UV,p,q}^{+}(h_{n},b_{n})$ and
$\Psi_{UV,p,q}^{-}(h_{n},b_{n}).$ The vectors $c_{1}$ and $c_{2}$ are
estimated by $\hat{c}_{1}$ and $\hat{c}_{2},$ respectively. Natural estimators
for
\[
\Psi_{UV,p,q}^{+}(h_{n},b_{n})=\frac{1}{n}\sum_{i=1}^{n}X_{i,p}X_{i,q}%
^{\prime}\sigma_{UV}^{2}(Z_{i},\tilde{W}_{i})k_{ih_{n}}^{+}k_{ib_{n}}^{+}%
\]
and%
\[
\Psi_{UV,p,q}^{-}(h_{n},b_{n})=\frac{1}{n}\sum_{i=1}^{n}X_{i,p}X_{i,q}%
^{\prime}\sigma_{UV}^{2}(Z_{i},\tilde{W}_{i})k_{ih_{n}}^{-}k_{ib_{n}}^{-}%
\]
are%
\[
\hat{\Psi}_{UV,p,q}^{+}(h_{n},b_{n})=\frac{1}{n}\sum_{i=1}^{n}X_{i,p}%
X_{i,q}^{\prime}\hat{\varepsilon}_{U_{i}}^{+}\hat{\varepsilon}_{V_{i}}%
^{+}k_{ih_{n}}^{+}k_{ib_{n}}^{+}%
\]
and
\[
\hat{\Psi}_{UV,p,q}^{-}(h_{n},b_{n})=\frac{1}{n}\sum_{i=1}^{n}X_{i,p}%
X_{i,q}^{\prime}\hat{\varepsilon}_{U_{i}}^{-}\hat{\varepsilon}_{V_{i}}%
^{-}k_{ih_{n}}^{-}k_{ib_{n}}^{-}%
\]
where $\hat{\varepsilon}_{V_{i}}^{+}=V_{i}-\tilde{W}_{i}^{\prime}\hat{\mu
}_{V,1}^{+}(h_{n})$ and $\hat{\varepsilon}_{V_{i}}^{-}=V_{i}-\tilde{W}%
_{i}^{\prime}\hat{\mu}_{V,1}^{-}(h_{n}).$ Using this, a consistent estimator
for $\mathbf{V}_{\omega,p}(h_{n})$ is%
\[
\mathbf{\hat{V}}_{\omega,p}(h_{n})=\hat{V}_{+1}+\hat{V}_{-1},
\]
where%
\[
\hat{V}_{+1}=\hat{c}_{1}^{\prime}\mathcal{\hat{V}}_{Y,p}^{+}(h_{n})\hat{c}%
_{1}+2c_{1}^{\prime}\hat{C}_{YX,p}^{+}(h_{n})\hat{c}_{2}+\hat{c}_{2}^{\prime
}\mathcal{\hat{V}}_{X,p}^{+}(h_{n})\hat{c}_{2},
\]
and
\[
\hat{V}_{-1}=\hat{c}_{1}^{\prime}\mathcal{\hat{V}}_{Y,p}^{-}(h_{n})\hat{c}%
_{1}+2\hat{c}_{1}^{\prime}\hat{C}_{YX,p}^{-}(h_{n})\hat{c}_{2}+\hat{c}%
_{2}^{\prime}\mathcal{\hat{V}}_{X,p}^{-}(h_{n})\hat{c}_{2}.
\]%
\begin{align*}
\hat{C}_{UV,p}^{+}(h_{n})  &  =n^{-1}e_{0}^{\prime}\Gamma_{+,p}^{-1}%
(h_{n})\hat{\Psi}_{UV,p}^{+}(h_{n})\Gamma_{+,p}^{-1}(h_{n})e_{0},\\
\hat{C}_{UV,p}^{-}(h_{n})  &  =n^{-1}e_{0}^{\prime}\Gamma_{-,p}^{-1}%
(h_{n})\hat{\Psi}_{UV,p}^{-}(h_{n})\Gamma_{-,p}^{-1}(h_{n})e_{0},\\
\mathcal{\hat{V}}_{V,p}^{+}(h_{n})  &  =\hat{C}_{VV,p}^{+}(h_{n})\text{ and
}\mathcal{\hat{V}}_{V,p}^{-}(h_{n})=\hat{C}_{VV,p}^{-}(h_{n}).
\end{align*}
Likewise, we construct a consistent estimator for $\mathbf{V}_{\omega
,p,q}^{bc}(h_{n},b_{n})$ based on plug-in estimated residuals%
\begin{equation}
\mathbf{\hat{V}}_{\omega,p,q}^{bc}(h_{n},b_{n})=\mathbf{\hat{V}}_{\omega
,p}(h_{n})+\mathbf{\hat{C}}_{\omega,p,q}^{bc}(h_{n},b_{n}), \label{se}%
\end{equation}
where $\mathbf{\hat{C}}_{\omega,p,q}^{bc}$ replaces $c_{1}$ and $c_{2}$ by
$\hat{c}_{1}$ and $\hat{c}_{2}$ and $\Psi_{UV,p,q}^{+}(h_{n},b_{n})$ and
$\Psi_{UV,p,q}^{-}(h_{n},b_{n})$ by $\hat{\Psi}_{UV,p,q}^{+}(h_{n},b_{n})$ and
$\hat{\Psi}_{UV,p,q}^{-}(h_{n},b_{n}),$ respectively.

\begin{lemma}
\label{secon}Under Assumption \ref{AssumptionRRD},%
\[
\frac{\mathbf{\hat{V}}_{R,1,2}^{bc}(h_{n},b_{n})}{\mathbf{V}_{R,1,2}%
^{bc}(h_{n},b_{n})}\rightarrow_{p}1.
\]

\begin{proof}
We first show that%
\begin{equation}
\hat{\Psi}_{UV,p,q}^{+}(h_{n},b_{n})=\Psi_{UV,p,q}^{+}(h_{n},b_{n}%
)+o_{P}\left(  \frac{m_{n}}{h_{n}b_{n}}\right)  ,\label{aa1}%
\end{equation}
where $m_{n}=\min\{h_{n},b_{n}\}.$ Standard calculations show $\hat
{\varepsilon}_{V_{i}}^{+}=\varepsilon_{V_{i}}^{+}-\tilde{W}_{i}^{\prime
}\left(  \hat{\mu}_{V,1}^{+}(h_{n})-\mu_{V,1}^{+}\right)  $ and by previous
results
\begin{align*}
\hat{\Psi}_{UV,p,q}^{+}(h_{n},b_{n}) &  =\frac{1}{n}\sum_{i=1}^{n}%
X_{i,p}X_{i,q}^{\prime}\varepsilon_{U_{i}}^{+}\varepsilon_{V_{i}}^{+}%
k_{ih_{n}}^{+}k_{ib_{n}}^{+}\\
&  -\left[  \frac{1}{n}\sum_{i=1}^{n}X_{i,p}X_{i,q}^{\prime}\varepsilon
_{U_{i}}^{+}\tilde{W}_{i}^{\prime}k_{ih_{n}}^{+}k_{ib_{n}}^{+}\right]  \left(
\hat{\mu}_{V,1}^{+}(h_{n})-\mu_{V,1}^{+}\right)  \\
&  -\left[  \frac{1}{n}\sum_{i=1}^{n}X_{i,p}X_{i,q}^{\prime}\varepsilon
_{V_{i}}^{+}\tilde{W}_{i}^{\prime}k_{ih_{n}}^{+}k_{ib_{n}}^{+}\right]  \left(
\hat{\mu}_{U,1}^{+}(h_{n})-\mu_{U,1}^{+}\right)  \\
&  +\left(  \hat{\mu}_{U,1}^{+}(h_{n})-\mu_{U,1}^{+}\right)  ^{\prime}\left[
\frac{1}{n}\sum_{i=1}^{n}X_{i,p}X_{i,q}^{\prime}\tilde{W}_{i}\tilde{W}%
_{i}^{\prime}k_{ih_{n}}^{+}k_{ib_{n}}^{+}\right]  \left(  \hat{\mu}_{V,1}%
^{+}(h_{n})-\mu_{V,1}^{+}\right)  \\
&  =\frac{1}{n}\sum_{i=1}^{n}X_{i,p}X_{i,q}^{\prime}\varepsilon_{U_{i}}%
^{+}\varepsilon_{V_{i}}^{+}k_{ih_{n}}^{+}k_{ib_{n}}^{+}+o_{P}\left(
\frac{m_{n}}{h_{n}b_{n}}\right)  \\
&  \equiv\breve{\Psi}_{UV,p,q}^{+}(h_{n},b_{n})+o_{P}\left(  \frac{m_{n}%
}{h_{n}b_{n}}\right)  ,
\end{align*}
By the change of variables $u=Z/h_{n},$ it follows that%
\begin{align*}
&  \E\left[  \frac{h_{n}b_{n}}{m_{n}}\breve{\Psi}_{UV,p,q}^{+}(h_{n}%
,b_{n})\right]  \\
&  =\E\left[  \frac{h_{n}b_{n}}{m_{n}}\frac{1}{n}\sum_{i=1}^{n}X_{i,p}%
X_{i,q}^{\prime}\sigma_{UV}^{2}(Z_{i},\tilde{W}_{i})k_{ih_{n}}^{+}k_{ib_{n}%
}^{+}\right]  \\
&  =\tilde{\Psi}_{UV,p,q}^{+}(h_{n},b_{n}).
\end{align*}
Also, since $\sigma_{jg}^{2,\varepsilon_{U_{i}}\varepsilon_{V_{i}}}(z)$ is
bounded,
\begin{align*}
&  \E\left[  \left\vert \frac{h_{n}b_{n}}{m_{n}}\breve{\Psi}_{UV,p}^{+}%
(h_{n})-\E[\frac{h_{n}b_{n}}{m_{n}}\breve{\Psi}_{UV,p}^{+}(h_{n})]\right\vert
^{2}\right]  \\
&  \leq Cn^{-1}m_{n}^{-1}\int_{0}^{\infty}k^{2}\left(  \frac{m_{n}u}{h_{n}%
}\right)  k^{2}\left(  \frac{m_{n}u}{b_{n}}\right)  \left\vert r_{p}\left(
\frac{m_{n}u}{h_{n}}\right)  \right\vert ^{2}\left\vert r_{q}\left(
\frac{m_{n}u}{b_{n}}\right)  \right\vert ^{2}f(uh_{n})du\\
&  =O\left(  n^{-1}m_{n}^{-1}\right)  .
\end{align*}
Hence with $m_{n}\rightarrow0$ and $nm_{n}\rightarrow\infty,$%
\[
\frac{h_{n}b_{n}}{m_{n}}\breve{\Psi}_{UV,p,q}^{+}(h_{n},b_{n})=\tilde{\Psi
}_{UV,p}^{+}(h_{n},b_{n})+o_{P}(1).
\]
Then, (\ref{aa1}) follows from Lemma \ref{Omega++}. Simple algebra shows the
rest of the proof, using that
\[
\hat{c}_{1}-c_{1}=O_{P}\left(  \frac{1}{\sqrt{nh_{n}}}+h_{n}^{(p+1)}\right)
\]
and%
\[
\hat{c}_{2}-c_{2}=O_{P}\left(  \frac{1}{\sqrt{nh_{n}}}+h_{n}^{(p+1)}\right)  .
\]

\end{proof}
\end{lemma}

\subsection{MSE-Optimal Bandwidth Selectors}

We consider first the MSE optimal bandwidth selector for estimating the bias
term, i.e. $b_{n}.$ Consider the general case where we use a $q-$th order
polynomial for estimating the $p+1$ derivative, $q>p$. The leading term in the
bias is%
\begin{align*}
\bar{B}_{w,p,q}(b_{n}) &  =c_{1}^{\prime}\hat{B}_{Y,p,q}(b_{n})+c_{2}^{\prime
}\hat{B}_{X,p,q}(b_{n}),\\
\hat{B}_{V,p,q}(b_{n}) &  =\frac{\hat{\mu}_{V,q}^{+(p+1)}(b_{n})}%
{(p+1)!}\mathcal{B}_{p,p+1}^{+}-\frac{\hat{\mu}_{V,q}^{-(p+1)}(b_{n})}%
{(p+1)!}\mathcal{B}_{p,p+1}^{-}%
\end{align*}
and $\mathcal{B}_{p,q}^{+}=e_{0}^{\prime}\Gamma_{+,p}^{-1}\vartheta_{p,q}^{+}$
and $\mathcal{B}_{p,q}^{-}=e_{0}^{\prime}\Gamma_{-,p}^{-1}\vartheta_{p,q}%
^{-}.$ With a slightly abuse of notation denote the true bias%
\begin{equation}
\mathbf{B}_{\omega,p,p+1}=c_{1}^{\prime}B_{Y,p,p+1}+c_{2}^{\prime}%
B_{X,p,p+1},\label{biasexp}%
\end{equation}
where%
\[
B_{V,p,q}=\frac{\mu_{V}^{+(p+1)}}{(p+1)!}\mathcal{B}_{p,q}^{+}-\frac{\mu
_{V}^{-(p+1)}}{(p+1)!}\mathcal{B}_{p,q}^{-}.
\]

\begin{lemma}
\label{MSEb}Under Assumption \ref{AssumptionRRD},
\begin{align*}
\mathbf{MSE}(b_{n}) &  \equiv \E[\left(  \bar{B}_{w,p,q}(b_{n})-\mathbf{B}%
_{\omega,p,p+1}\right)  ^{2}|\mathcal{S}_{n}]\\
&  =b_{n}^{q-p}[B_{\omega,p,q}^{bc}+o_{P}(1)]\\
&  +n^{-1}b_{n}^{-1-2(p+1)}[V_{\omega,p,q}^{bc}+o_{P}(1)],
\end{align*}
where%
\[
B_{\omega,p,q}^{bc}=c_{1}^{\prime}B_{\omega,p,q}^{Y,bc}+c_{2}^{\prime
}B_{\omega,p,q}^{X,bc}%
\]%
\[
B_{\omega,p,q}^{V,bc}=\frac{\mu_{V}^{+(q+1)}}{(q+1)!}\frac{\mathcal{B}%
_{p+1,q,q+1}^{+}\mathcal{B}_{p,p+1}^{+}}{(p+1)!}-\frac{\mu_{V}^{-(q+1)}%
}{(q+1)!}\frac{\mathcal{B}_{p+1,q,q+1}^{-}\mathcal{B}_{p,p+1}^{-}}{(p+1)!}%
\]
and
\[
V_{\omega,p,q}^{bc}=V_{\omega,p,q}^{+bc}+V_{\omega,p,q}^{-bc}%
\]
where%
\begin{align*}
V_{\omega,p,q}^{+bc} &  =\left(  c_{1}^{\prime}\mathcal{V}_{Y,q}^{+(p+1)}%
c_{1}+2c_{1}^{\prime}\mathcal{C}_{YX,p,q}^{+}c_{2}+c_{2}^{\prime}%
\mathcal{V}_{X,q}^{+(p+1)}c_{2}\right)  \left(  \frac{\mathcal{B}_{p,p+1}^{+}%
}{(p+1)!}\right)  ^{2},\\
\mathcal{V}_{V,q}^{+(p+1)} &  =\left(  (p+1)!\right)  ^{2}e_{p+1}^{\prime
}\Gamma_{+,q}^{-1}\Psi_{V,q}^{+}\Gamma_{+,q}^{-1}e_{p+1},\\
\mathcal{C}_{YX,p,q}^{+} &  =\left(  (p+1)!\right)  ^{2}e_{p+1}^{\prime}%
\Gamma_{+,q}^{-1}\Psi_{YX,q}^{+}\Gamma_{+,q}^{-1}e_{p+1}%
\end{align*}
and%
\begin{align*}
V_{\omega,p,q}^{-bc} &  =\left(  c_{1}^{\prime}\mathcal{V}_{Y,q}^{-(p+1)}%
c_{1}+2c_{1}^{\prime}\mathcal{C}_{YX,p,q}^{-}c_{2}+c_{2}^{\prime}%
\mathcal{V}_{X,q}^{-(p+1)}c_{2}\right)  \left(  \frac{\mathcal{B}_{p,p+1}^{-}%
}{(p+1)!}\right)  ^{2},\\
\mathcal{V}_{V,q}^{-(p+1)} &  =\left(  (p+1)!\right)  ^{2}e_{p+1}^{\prime
}\Gamma_{-,q}^{-1}\Psi_{V,q}^{-}\Gamma_{-,q}^{-1}e_{p+1},\\
\mathcal{C}_{YX,p,q}^{-} &  =\left(  (p+1)!\right)  ^{2}e_{p+1}^{\prime}%
\Gamma_{-,q}^{-1}\Psi_{YX,q}^{-}\Gamma_{-,q}^{-1}e_{p+1}.
\end{align*}
The optimal MSE bandwidth has the form%
\begin{align*}
b_{n}^{\ast} &  =C_{MSEb,p}^{1/(2q+3)}n^{-1/(2q+3)}\\
C_{MSEb,p} &  =\frac{(2p+3)V_{\omega,p,q}^{bc}}{2(q-p)\left(  B_{\omega
,p,q}^{bc}\right)  ^{2}}.
\end{align*}

\begin{proof}
Note that by Lemma \ref{Bias}
\begin{align*}
\E[\big(  \bar{B}_{w,p,q}(b_{n})-&\mathbf{B}_{\omega,p,p+1}\big)
|\mathcal{S}_{n}]   =c_{1}^{\prime}\E[\hat{B}_{Y,p,q}(b_{n})-B_{Y,p,p+1}%
|\mathcal{S}_{n}]+c_{2}^{\prime}\E[\hat{B}_{X,p,q}(b_{n})-B_{X,p,p+1}%
|\mathcal{S}_{n}]\\
&  =c_{1}^{\prime}\E[e_{p+1}^{\prime}\hat{\beta}_{Y,q}^{+}(b_{n})-e_{p+1}%
^{\prime}\beta_{Y}^{+}|\mathcal{S}_{n}]\mathcal{B}_{p,p+1}^{+}-c_{1}^{\prime
}\E[e_{p+1}^{\prime}\hat{\beta}_{Y,q}^{-}(b_{n})-e_{p+1}^{\prime}\beta_{Y}%
^{-}|\mathcal{S}_{n}]\mathcal{B}_{p,p+1}^{-}\\
&  +c_{2}^{\prime}\E[e_{p+1}^{\prime}\hat{\beta}_{X,q}^{+}(b_{n})-e_{p+1}%
^{\prime}\beta_{X}^{+}|\mathcal{S}_{n}]\mathcal{B}_{p,p+1}^{+}-c_{2}^{\prime
}\E[e_{p+1}^{\prime}\hat{\beta}_{X,q}^{-}(b_{n})-e_{p+1}^{\prime}\beta_{X}%
^{-}|\mathcal{S}_{n}]\mathcal{B}_{p,p+1}^{-}\\
&  =c_{1}^{\prime}b_{n}^{p-q}\left[  \frac{\mu_{Y}^{+(q+1)}}{(q+1)!}%
\frac{\mathcal{B}_{p+1,q,q+1}^{+}\mathcal{B}_{p,p+1}^{+}}{(p+1)!}-\frac
{\mu_{Y}^{-(q+1)}}{(q+1)!}\frac{\mathcal{B}_{p+1,q,q+1}^{-}\mathcal{B}%
_{p,p+1}^{-}}{(p+1)!}+o_{P}(1)\right]  \\
&  +c_{2}^{\prime}b_{n}^{p-q}\left[  \frac{\mu_{X}^{+(q+1)}}{(q+1)!}%
\frac{\mathcal{B}_{p+1,q,q+1}^{+}\mathcal{B}_{p,p+1}^{+}(h_{n})}{(p+1)!}%
-\frac{\mu_{X}^{-(q+1)}}{(q+1)!}\frac{\mathcal{B}_{p+1,q,q+1}^{-}%
\mathcal{B}_{p,p+1}^{-}(h_{n})}{(p+1)!}+o_{P}(1)\right]  \\
&  =b_{n}^{q-p}\left[  B_{\omega,p,q}^{bc}+o_{P}(1)\right]  .
\end{align*}
As for the variance, write
\begin{align*}
\bar{B}_{w,p,q}(b_{n}) &  =\bar{B}_{w,p,q}^{+}(b_{n})-\bar{B}_{w,p,q}%
^{-}(b_{n}),\\
\bar{B}_{w,p,q}^{+}(b_{n}) &  =c_{1}^{\prime}\hat{B}_{Y,p,q}^{+}(b_{n}%
)+c_{2}^{\prime}\hat{B}_{X,p,q}^{+}(b_{n}),\\
\bar{B}_{w,p,q}^{-}(b_{n}) &  =c_{1}^{\prime}\hat{B}_{Y,p,q}^{-}(b_{n}%
)+c_{2}^{\prime}\hat{B}_{X,p,q}^{-}(b_{n}),\\
\bar{B}_{V,p,q}^{+}(b_{n}) &  =\frac{\hat{\mu}_{V,q}^{+(p+1)}(b_{n})}%
{(p+1)!}\mathcal{B}_{p,p+1}^{+},\\
\hat{B}_{V,p,q}^{-}(b_{n}) &  =\frac{\hat{\mu}_{V,q}^{-(p+1)}(b_{n})}%
{(p+1)!}\mathcal{B}_{p,p+1}^{-}.
\end{align*}
This decomposition and Lemma \ref{Variance} imply
\[
Var[\bar{B}_{w,p,q}(b_{n})|\mathcal{S}_{n}]=n^{-1}b_{n}^{-1-2(p+1)}\left[
V_{\omega,p,q}^{+bc}+V_{\omega,p,q}^{-bc}+o_{P}(1)\right]  ,
\]
where%
\begin{align*}
V_{\omega,p,q}^{+bc} &  =\left(  c_{1}^{\prime}\mathcal{V}_{Y,q}^{+(p+1)}%
c_{1}+2c_{1}^{\prime}\mathcal{C}_{YX,p,q}^{+}c_{2}+c_{2}^{\prime}%
\mathcal{V}_{X,q}^{+(p+1)}c_{2}\right)  \left(  \frac{\mathcal{B}_{p,p+1}^{+}%
}{(p+1)!}\right)  ^{2},\\
\mathcal{V}_{V,q}^{+(p+1)} &  =\left(  (p+1)!\right)  ^{2}e_{p+1}^{\prime
}\Gamma_{+,q}^{-1}\Psi_{V,q}^{+}\Gamma_{+,q}^{-1}e_{p+1},\\
\mathcal{C}_{YX,p,q}^{+} &  =\left(  (p+1)!\right)  ^{2}e_{p+1}^{\prime}%
\Gamma_{+,q}^{-1}\Psi_{YX,q}^{+}\Gamma_{+,q}^{-1}e_{p+1}%
\end{align*}
and similarly%
\begin{align*}
V_{\omega,p,q}^{-bc} &  =\left(  c_{1}^{\prime}\mathcal{V}_{Y,q}^{-(p+1)}%
c_{1}+2c_{1}^{\prime}\mathcal{C}_{YX,p,q}^{-}c_{2}+c_{2}^{\prime}%
\mathcal{V}_{X,q}^{-(p+1)}c_{2}\right)  \left(  \frac{\mathcal{B}_{p,p+1}^{-}%
}{(p+1)!}\right)  ^{2},\\
\mathcal{V}_{V,q}^{-(p+1)} &  =\left(  (p+1)!\right)  ^{2}e_{p+1}^{\prime
}\Gamma_{-,q}^{-1}\Psi_{V,q}^{-}\Gamma_{-,q}^{-1}e_{p+1},\\
\mathcal{C}_{YX,p,q}^{-} &  =\left(  (p+1)!\right)  ^{2}e_{p+1}^{\prime}%
\Gamma_{-,q}^{-1}\Psi_{YX,q}^{-}\Gamma_{-,q}^{-1}e_{p+1}.
\end{align*}

\end{proof}
\end{lemma}

The following MSE calculations for the linearized term in the WLATE estimators
have been obtained before, and are summarized in the following result. Recall
\[
\hat{l}_{\omega}=c_{1}^{\prime}(\hat{\delta}_{Y}-\delta_{Y})+c_{2}^{\prime
}\left(  \hat{\delta}_{X}-\delta_{X}\right)  ,
\]
where $c_{1}=c/\tau_{X}$, $c_{2}^{\prime}=\left[  \delta_{Y}^{\prime}%
G_{\delta}-\beta_{w}(c^{\prime}+\delta_{X}^{\prime}G_{\delta})\right]
/\tau_{X}$ and $G_{\delta}$ is the derivative of $g$ with respect to
$\delta_{X}$.

\begin{lemma}
\label{MSEh}Under Assumption \ref{AssumptionRRD}(1)-(7),
\begin{align*}
\mathbf{MSE}(h_{n})  &  \equiv \E[\hat{l}_{\omega}^{2}|\mathcal{S}_{n}]\\
&  =h_{n}^{2(p+1)}[B_{\omega,p,p+1}+o_{P}(1)]\\
&  +n^{-1}h_{n}^{-1}[V_{\omega,p}+o_{P}(1)],
\end{align*}
where $B_{\omega,p,p+1}$ is defined in (\ref{biasexp}) and
\begin{align*}
V_{\omega,p}  &  =V_{+1}+V_{-1},\\
V_{\pm1}  &  =c_{1}^{\prime}\mathcal{V}_{Y,p}^{\pm}c_{1}+2c_{1}^{\prime
}\mathcal{C}_{YX,p}^{\pm}c_{2}+c_{2}^{\prime}\mathcal{V}_{X,p}^{\pm}c_{2},\\
\mathcal{C}_{YX,p}^{\pm}  &  =e_{0}^{\prime}\Gamma_{\pm,p}^{-1}\Psi
_{YX,p}^{\pm}\Gamma_{\pm,p}^{-1}e_{0},\\
\mathcal{V}_{V,p}^{+}  &  =\mathcal{C}_{VV,p}^{\pm}.
\end{align*}
The optimal MSE bandwidth has the form%
\begin{align*}
h_{n}^{\ast}  &  =C_{MSEh,p}^{1/(2p+3)}n^{-1/(2p+3)}\\
C_{MSEh,p}  &  =\frac{V_{\omega,p}}{2(p+1)B_{\omega,p}^{2}}.
\end{align*}

\begin{proof}
It follows from Lemmas \ref{Biasc2} and \ref{Variancec2}.
\end{proof}
\end{lemma}

\subsection{Implementation of MSE-Optimal Bandwidth Selectors}

In this section we describe the implementation of MSE-optimal bandwidths for local linear estimation $(p=1)$. We consider the following steps.

\begin{description}
\item[Step 1:] Choose a preliminary pilot bandwidth $c_{n}$ as%
\begin{align*}
c_{n} &  =C_{K}\times C_{sd}\times n^{-1/5},\\
C_{K} &  =\left(  \frac{8\sqrt{\pi}\int k^{2}(u)du}{3\left(  \int
u^{2}k(u)du\right)  ^{2}}\right)  ^{1/5}\\
C_{sd} &  =\min\left(  s,\frac{IQR}{1.349}\right)  ,
\end{align*}
where $s^{2}$ and $IQR$ are the sample variance and interquartile range of
$\{Z_{i}\}_{i=1}^{n}.$ The constant $C_{K}$ is 1.843 when $K$ is the uniform
kernel, and 2.576 when $K$ is the triangular kernel.

\item[Step 2:] Choose an optimal bandwidth for bias estimation $b_{n}$
following Lemma \ref{MSEb} with $q=p+1=2$:
\begin{align*}
\hat{b}_{n}^{\ast} &  =\hat{C}_{MSEb,1}^{1/7}n^{-1/7}\\
\hat{C}_{MSEb,1} &  =\frac{5\hat{V}_{\omega,1,2}^{bc}(c_{n})}{2\left(  \hat
{B}_{\omega,1,2}^{bc}(d_{n})\right)  ^{2}}.
\end{align*}
The variance estimate $\hat{V}_{\omega,1,2}^{bc}(c_{n})$ is
\[
\hat{V}_{\omega,1,2}^{bc}(c_{n})=\hat{V}_{\omega,1,2}^{+bc}(c_{n})+\hat
{V}_{\omega,1,2}^{-bc}(c_{n})
\]
where (with $p=1$ and $q=2)$%
\begin{align*}
\hat{V}_{\omega,p,q}^{\pm bc}(c_{n}) &  =\left(  \hat{c}_{1}^{\prime
}\mathcal{V}_{Y,q}^{+(p+1)}(c_{n})\hat{c}_{1}+2\hat{c}_{1}^{\prime}%
\mathcal{C}_{YX,p,q}^{+}(c_{n})\hat{c}_{2}+\hat{c}_{2}^{\prime}\mathcal{V}%
_{X,q}^{+(p+1)}(c_{n})\hat{c}_{2}\right)  \left(  \frac{\mathcal{B}%
_{p,p+1}^{\pm}(c_{n})}{(p+1)!}\right)  ^{2}\!,\\
\mathcal{V}_{V,q}^{\pm(p+1)}(c_{n}) &  =\left(  (p+1)!\right)  ^{2}%
e_{p+1}^{\prime}\Gamma_{\pm,q}^{-1}(c_{n})\hat{\Psi}_{V,q}^{\pm}(c_{n}%
)\Gamma_{\pm,q}^{-1}(c_{n})e_{p+1},\\
\mathcal{C}_{YX,p,q}^{\pm}(c_{n}) &  =\left(  (p+1)!\right)  ^{2}%
e_{p+1}^{\prime}\Gamma_{\pm,q}^{-1}(c_{n})\hat{\Psi}_{YX,q}^{\pm}(c_{n}%
)\Gamma_{\pm,q}^{-1}(c_{n})e_{p+1}.
\end{align*}
The bias term $\hat{B}_{\omega,1,2}^{bc}(d_{n})$ is obtained as (with $p=1$
and $q=2)$
\[
\hat{B}_{\omega,p,q}^{bc}(d_{n})=\hat{c}_{1}^{\prime}\hat{B}_{\omega
,p,q}^{Y,bc}(d_{n})+\hat{c}_{2}^{\prime}\hat{B}_{\omega,p,q}^{X,bc}(d_{n})
\]%
\[
\hat{B}_{\omega,p,q}^{V,bc}(d_{n})=\frac{\hat{\mu}_{V}^{+(q+1)}(d_{n}%
)}{(q+1)!}\frac{\mathcal{B}_{p+1,q,q+1}^{+}(c_{n})\mathcal{B}_{p,p+1}%
^{+}(c_{n})}{(p+1)!}-\frac{\hat{\mu}_{V}^{-(q+1)}(d_{n})}{(q+1)!}%
\frac{\mathcal{B}_{p+1,q,q+1}^{-}(c_{n})\mathcal{B}_{p,p+1}^{-}(c_{n}%
)}{(p+1)!},
\]
where $d_{n}$ is the optimal bandwidth for estimating the third derivatives
$\mu_{V}^{\pm(q+1)}$ with a third order local polynomial. This optimal
bandwidth is obtained for each variable $V$ and $\pm$ sides as (with $q=2)$%
\begin{align*}
d_{V,n}^{\pm} &  =\hat{C}_{V,\pm,q+1}^{1/(2q+5)}n^{-1/(2q+5)}\\
\hat{C}_{V,\pm,q+1}^{1/(2q+5)} &  =\frac{(2q+3)\hat{V}_{V,q+1}^{\pm
(q+1)}(c_{n})}{2\left(  \hat{B}_{q+1,q+1,q+2}^{\pm}(c_{n})\right)  ^{2}},
\end{align*}
where
\[
\hat{V}_{V,q+1}^{\pm(q+1)}(c_{n})=e_{q+1}^{\prime}\Gamma_{+,q+1}^{-1}%
(c_{n})\hat{\Psi}_{V,q}^{\pm}(c_{n})\Gamma_{+,q+1}^{-1}(c_{n})e_{q+1}%
\]
and
\[
\hat{B}_{q+1,q+1,q+2}^{\pm}(c_{n})=\frac{\hat{\mu}_{V}^{+(q+2)}(c_{n}%
)}{(q+2)!}\mathcal{B}_{q+1,q+1,q+2}^{+}(c_{n}).
\]

\item[Step 3:] Choose optimal bandwidth $h_{n}$ for the estimator: use $c_{n}$
and $b_{n}$ to compute for $p=1$
\begin{align*}
h_{n}^{\ast} &  =\hat{C}_{MSEh,p}^{1/(2p+3)}n^{-1/(2p+3)}\\
\hat{C}_{MSEh,p} &  =\frac{\hat{V}_{\omega,p}(c_{n})}{2(p+1)\hat{B}_{\omega
,p}^{2}(b_{n})},
\end{align*}
where
\begin{align*}
\hat{V}_{\omega,p}(c_{n}) &  =\hat{V}_{+1}(c_{n})+\hat{V}_{-1}(c_{n}),\\
\hat{V}_{\pm1}(c_{n}) &  =\hat{c}_{1}^{\prime}\mathcal{V}_{Y,p}^{\pm}%
(c_{n})\hat{c}_{1}+2\hat{c}_{1}^{\prime}\mathcal{C}_{YX,p}^{\pm}(c_{n})\hat
{c}_{2}+\hat{c}_{2}^{\prime}\mathcal{V}_{X,p}^{\pm}(c_{n})\hat{c}_{2},\\
\mathcal{C}_{YX,p}^{\pm}(c_{n}) &  =e_{0}^{\prime}\Gamma_{\pm,p}^{-1}%
(c_{n})\hat{\Psi}_{YX,p}^{\pm}(c_{n})\Gamma_{\pm,p}^{-1}(c_{n})e_{0},\\
\mathcal{V}_{V,p}^{+}(c_{n}) &  =\mathcal{C}_{VV,p}^{\pm}(c_{n}).
\end{align*}
and
\[
\hat{B}_{\omega,p,q}(c_{n},b_{n})=\hat{c}_{1}^{\prime}\hat{B}_{Y,p,q}%
(c_{n},b_{n})+\hat{c}_{2}^{\prime}\hat{B}_{X,p,q}(c_{n},b_{n}),
\]
where%
\[
\hat{B}_{V,p,q}(c_{n},b_{n})=\frac{\hat{\mu}_{V,q}^{+(p+1)}(b_{n})}%
{(p+1)!}\mathcal{B}_{p,p+1}^{+}(c_{n})-\frac{\hat{\mu}_{V,q}^{-(p+1)}(b_{n}%
)}{(p+1)!}\mathcal{B}_{p,p+1}^{-}(c_{n}).
\]

\end{description}

\newpage 

\section{Additional Monte Carlo Simulations}\label{app: MC rob}

In this appendix we consider a few other simulation scenarios.

\subsection{More Intense Treatment Effect Heterogeneity}\label{sec: MC more het}

In Table \ref{tab: RDDbinworse}, we consider the case where $\beta_{XW}=5$ or $\beta_{XW}=10$ (instead of $\beta_{XW}=0$ or $\beta_{XW}=2$, as in Table \ref{tab: RDD}). All other parameters from Section \ref{sec: MC} are kept the same. The increased heterogeneity in treatment effects lead to \emph{all estimators} perform worse, with similar relative results between the CWLATE and the standard RDD estimators.

\begin{table}[H]
\caption{MSE Relative to Own Target Estimand -- Intense Treatment Effect Heterogeneity}\label{tab: RDDbinworse}
\begin{center}
\scalebox{.9}{\small \begin{tabularx}{1.18\linewidth}{c|rrr|rrr|rrr|rrr}

\toprule
 Obs.  & \multicolumn{6}{c}{\(\alpha_{DW}=0\)} & \multicolumn{6}{c}{\(\alpha_{DW}=1\)} \\  & \multicolumn{3}{c}{\(\beta_{XW}=5\)} & \multicolumn{3}{c}{\(\beta_{XW}=10\)} & \multicolumn{3}{c}{\(\beta_{XW}=5\)} & \multicolumn{3}{c}{\(\beta_{XW}=10\)} \\ & \(\hat{\beta}_{U}\) & \(\hat{\beta}^{cov}_{U}\) & \(\hat{\beta}_{CW}\) & \(\hat{\beta}_{U}\) & \(\hat{\beta}^{cov}_{U}\) & \(\hat{\beta}_{CW}\) & \(\hat{\beta}_{U}\) & \(\hat{\beta}^{cov}_{U}\) & \(\hat{\beta}_{CW}\) & \(\hat{\beta}_{U}\) & \(\hat{\beta}^{cov}_{U}\) & \(\hat{\beta}_{CW}\) \tabularnewline
\midrule \addlinespace[\belowrulesep]
300&411.22&405.29&11.66&396825.20&30411.45&35.87&357.78&861.52&3.26&9154.98&6938.44&9.42 \tabularnewline \addlinespace[.05in]
500&188.47&23.10&7.20&108.86&35.65&25.97&29.01&23.72&1.34&648.90&425.38&3.40 \tabularnewline \addlinespace[.05in]
1000&3.10&1.90&3.94&9.53&6.12&14.50&3.22&2.72&.48&10.18&8.35&1.03 \tabularnewline \addlinespace[.05in]
2000&1.18&.85&2.05&4.22&2.69&7.53&1.15&1.03&.19&4.36&3.62&.36 \tabularnewline \addlinespace[.05in]
3000&.78&.54&1.35&2.67&1.72&5.06&.75&.67&.12&2.70&2.22&.19 \tabularnewline \addlinespace[.05in]
4000&.57&.39&1.00&2.05&1.29&3.75&.53&.47&.08&1.96&1.60&.14 \tabularnewline \addlinespace[.05in]
5000&.46&.31&.81&1.67&1.03&3.09&.42&.37&.07&1.53&1.27&.10 \tabularnewline \addlinespace[.05in]
\bottomrule 

\end{tabularx}}
\end{center}\label{tab:increaseheterogeneity}
\end{table}

\subsection{Coarsened $W_i$}\label{sec: MC coarse}

Next, we consider what happens when more complex covariates are used to implement the CWLATE estimator. We use the same Monte Carlo set up discussed in Section \ref{sec: MC}, but with $W_i\in \{-5/3,-4/3,$ $-1,-2/3,-1/3,1/3,2/3,1,4/3,$ $5/3\}$, in contrast to the binary covariate $W_i \in\{-1,1\}$ from Section \ref{sec: MC}.

We consider two separate scenarios: (a) CWLATE is estimated with $\tilde{W}_i=W_i$, and (b) CWLATE is estimated with a coarsened binary version of the covariate ($\tilde{W}_i=1$ if $W_i>0,$ and $\tilde{W}_i=-1$ if $W_i<0$). As in the baseline case, the term representing the heterogeneity in the first-stages across different values of $W_i$ is $\alpha_{DW}(W_i)=W_i\cdot\alpha_{DW}$. Analogously, the term representing the heterogeneity in the treatment effects across different values of $W_i$ is $\beta_{XW}(W_i)=W_i\cdot\beta_{XW}$.

Table~\ref{tab: RDDbintilde} reports the MSE when CWLATE is estimated with the full covariate taking 10 different values. The results are similar to the results in Table \ref{tab: RDD}, with the small difference due to increased heterogeneity when either $\alpha_{DW}\neq0$ or $\beta_{XW}\neq0$, since now the range of $W_i$ increased from $[-1,1]$ to $[-5/3,5/3]$.

\begin{table}[H]
\caption{MSE using Full Covariate -- $\tilde{W}_i= W_i$, $dim(\tilde{W}_i)=dim(W_i)=10$}\label{tab: RDDbintilde}
\begin{center}
\scalebox{.9}{\small \begin{tabularx}{1.11\linewidth}{c|rrr|rrr|rrr|rrr}

 Obs.  & \multicolumn{6}{c}{\(\alpha_{DW}=0\)} & \multicolumn{6}{c}{\(\alpha_{DW}=1\)} \\  & \multicolumn{3}{c}{\(\beta_{XW}=0\)} & \multicolumn{3}{c}{\(\beta_{XW}=2\)} & \multicolumn{3}{c}{\(\beta_{XW}=0\)} & \multicolumn{3}{c}{\(\beta_{XW}=2\)} \\ & \(\hat{\beta}_{U}\) & \(\hat{\beta}^{cov}_{U}\) & \(\hat{\beta}_{CW}\) & \(\hat{\beta}_{U}\) & \(\hat{\beta}^{cov}_{U}\) & \(\hat{\beta}_{CW}\) & \(\hat{\beta}_{U}\) & \(\hat{\beta}^{cov}_{U}\) & \(\hat{\beta}_{CW}\) & \(\hat{\beta}_{U}\) & \(\hat{\beta}^{cov}_{U}\) & \(\hat{\beta}_{CW}\) \tabularnewline
\toprule
\midrule \addlinespace[\belowrulesep]
300&517.82&118.64&1.06&72.50&379.06&1.73&2718.84&27680.72&.93&72.85&5259.80&1.53 \tabularnewline \addlinespace[.05in]
500&6.48&93.96&.71&10.68&9.11&1.27&70.61&63.13&.55&16.31&50.54&.92 \tabularnewline \addlinespace[.05in]
1000&.48&.51&.38&.92&.81&.83&.69&.69&.27&.88&.85&.45 \tabularnewline \addlinespace[.05in]
2000&.22&.22&.20&.41&.35&.48&.25&.26&.13&.39&.36&.21 \tabularnewline \addlinespace[.05in]
3000&.14&.14&.13&.26&.22&.33&.16&.17&.09&.25&.24&.14 \tabularnewline \addlinespace[.05in]
4000&.10&.10&.10&.19&.16&.26&.12&.12&.07&.18&.17&.11\tabularnewline \addlinespace[.05in]
5000&.08&.08&.08&.15&.13&.21&.09&.10&.05&.14&.13&.08 \tabularnewline \addlinespace[.05in]
\bottomrule 

\end{tabularx}}
\end{center}
\end{table}

Table~\ref{tab: RDDbin} reports the MSE when CWLATE is estimated with the coarsened (binary) covariate. The only columns that are different from Table~\ref{tab: RDDbintilde} are the CWLATE ones, since $\hat{\beta}^{cov}_U$ is estimated with the full vector $W_i$ in both tables. The results in both simulations are similar, with a small advantage when using the full covariate vector $W_i,$ especially in smaller samples. 

\begin{table}[H]
\caption{MSE using Coarsened Covariate -- $\tilde{W}_i\neq W_i$, $dim(\tilde{W}_i)=2<10=dim(W_i)$}\label{tab: RDDbin}
\begin{center}
\scalebox{.9}{\small \begin{tabularx}{1.11\linewidth}{c|rrr|rrr|rrr|rrr}

 Obs.  & \multicolumn{6}{c}{\(\alpha_{DW}=0\)} & \multicolumn{6}{c}{\(\alpha_{DW}=1\)} \\  & \multicolumn{3}{c}{\(\beta_{XW}=0\)} & \multicolumn{3}{c}{\(\beta_{XW}=2\)} & \multicolumn{3}{c}{\(\beta_{XW}=0\)} & \multicolumn{3}{c}{\(\beta_{XW}=2\)} \\ & \(\hat{\beta}_{U}\) & \(\hat{\beta}^{cov}_{U}\) & \(\hat{\beta}_{CW}\) & \(\hat{\beta}_{U}\) & \(\hat{\beta}^{cov}_{U}\) & \(\hat{\beta}_{CW}\) & \(\hat{\beta}_{U}\) & \(\hat{\beta}^{cov}_{U}\) & \(\hat{\beta}_{CW}\) & \(\hat{\beta}_{U}\) & \(\hat{\beta}^{cov}_{U}\) & \(\hat{\beta}_{CW}\) \tabularnewline
\toprule
\midrule \addlinespace[\belowrulesep]
300&517.82&118.64&4.35&72.50&379.06&5.68&2718.84&27680.72&2.68&72.85&5259.80&2.15 \tabularnewline \addlinespace[.05in]
500&6.48&93.96&1.45&10.68&9.11&2.50&70.61&63.13&.75&16.31&50.54&.93 \tabularnewline \addlinespace[.05in]
1000&.48&.51&.47&.92&.81&1.11&.69&.69&.32&.88&.85&.37 \tabularnewline \addlinespace[.05in]
2000&.22&.22&.22&.41&.35&.53&.25&.26&.15&.39&.36&.17 \tabularnewline \addlinespace[.05in]
3000&.14&.14&.14&.26&.22&.35&.16&.17&.10&.25&.24&.12 \tabularnewline \addlinespace[.05in]
4000&.10&.10&.10&.19&.16&.26&.12&.12&.07&.18&.17&.09\tabularnewline \addlinespace[.05in]
5000&.08&.08&.08&.15&.13&.21&.09&.10&.06&.14&.13&.06 \tabularnewline \addlinespace[.05in]
\bottomrule 

\end{tabularx}}
\end{center}
\end{table}

In our simulations, using a larger covariate set does not seem to be detrimental, and may in fact be beneficial (see also results in our application). At the same time, using a coarsened covariate in the CWLATE estimator had minimal impact on precision and seems to be a viable alternative.

\subsection{Estimating $\beta_U$ with the CWLATE Estimator}\label{sec:betterestimator}

In this section, we compare the CWLATE estimator to the standard RDD estimators \emph{when all target the unconditional LATE} $\beta_U$. Note that, in the cases we examine, the CWLATE estimator is an inconsistent estimator of $\beta_U$ (since it is an unbiased estimator of $\beta_{CW}\neq \beta_U$). The results can be seen in Table~\ref{tab: RDD2}. We consider the case where $\alpha_{DW}=1$ and either $\beta_{XW}=2$ or $\beta_{XW}=10$. The differences between the estimands increase with $\beta_{XW},$ so the CWLATE is a more severely biased estimator of $\beta_U$ when $\beta_{XW}=10$. Nevertheless, the results show that not only is the MSE of the CWLATE estimator generally lower than that of the standard RDD estimators, but the comparatively better performance intensifies with the higher $\beta_{XW}$ value for moderate sample sizes.
\begin{table}[H]
\caption{MSE and 95\% Coverage Relative to Unconditional LATE Target}\label{tab: RDD2}
\begin{center}
\scalebox{.9}{\small \begin{tabularx}{1.1\linewidth}{c|rrr|rrr|rrr|rrr}

\toprule
 Obs.  & \multicolumn{6}{c}{MSE} & \multicolumn{6}{c}{95\% Coverage (in \%)} \\  & \multicolumn{3}{c}{\(\beta_{XW}=2\)} & \multicolumn{3}{c}{\(\beta_{XW}=10\)} & \multicolumn{3}{c}{\(\beta_{XW}=2\)} & \multicolumn{3}{c}{\(\beta_{XW}=10\)} \\ & \(\hat{\beta}_{U}\) & \(\hat{\beta}^{cov}_{U}\) & \(\hat{\beta}_{CW}\) & \(\hat{\beta}_{U}\) & \(\hat{\beta}^{cov}_{U}\) & \(\hat{\beta}_{CW}\) & \(\hat{\beta}_{U}\) & \(\hat{\beta}^{cov}_{U}\) & \(\hat{\beta}_{CW}\) & \(\hat{\beta}_{U}\) & \(\hat{\beta}^{cov}_{U}\) & \(\hat{\beta}_{CW}\) \tabularnewline
\midrule \addlinespace[\belowrulesep]
300&3363.10&6153.34&1.61&9155.00&6938.44&7.21&98.51&97.67&92.00&97.09&94.75&94.79 \tabularnewline \addlinespace[.05in]
500&267.24&18.65&.73&648.90&425.38&2.85&98.08&97.32&92.54&97.33&95.60&97.94 \tabularnewline \addlinespace[.05in]
1000&.89&.85&.35&10.18&8.35&1.43&96.24&95.99&90.21&96.22&95.48&99.46 \tabularnewline \addlinespace[.05in]
2000&.36&.35&.18&4.36&3.62&1.33&95.43&95.13&88.27&95.23&95.11&99.39 \tabularnewline \addlinespace[.05in]
3000&.23&.23&.14&2.70&2.22&1.33&95.16&94.98&85.16&95.27&95.19&96.64 \tabularnewline \addlinespace[.05in]
4000&.17&.17&.14&1.96&1.60&1.43&95.13&94.96&77.45&95.16&95.11&85.32 \tabularnewline \addlinespace[.05in]
5000&.13&.13&.11&1.53&1.27&1.57&95.03&94.91&75.88&95.19&95.23&63.49 \tabularnewline \addlinespace[.05in]
\bottomrule 

\end{tabularx}}
\end{center}
\vspace{-.3in}
{\footnotesize \singlespacing Notes: This Table shows the MSE and the 95\% confidence interval coverage of each estimator relative to the target $\beta_U$ when $\alpha_{DW}=1$. Note that $\hat{\beta}_{CW}$ is a biased estimator of $\beta_U$, since it is an unbiased estimator of $\beta_{CW}$ and $\beta_{CW}\neq\beta_U$ in these cases.}
\end{table}

\begin{figure}[H]
\begin{center}
\caption{MSE Relative to Unconditional LATE Target -- Decomposition}\label{fig: MSE comp}
\begin{subfigure}[b]{0.45\textwidth}\caption{Bias$^2$}
\includegraphics[scale=0.55]{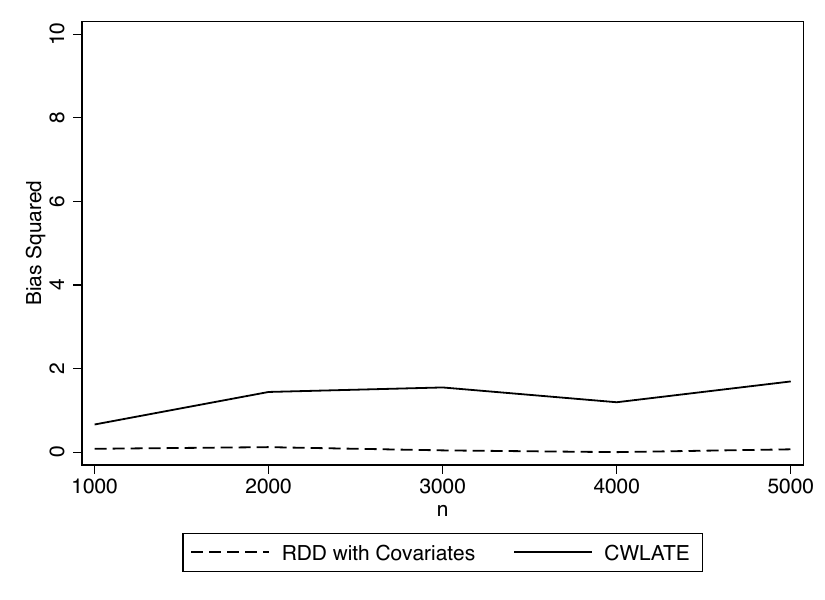}\end{subfigure}\hspace{.1in}
\begin{subfigure}[b]{0.45\textwidth}\caption{Variance}
\includegraphics[scale=0.55]{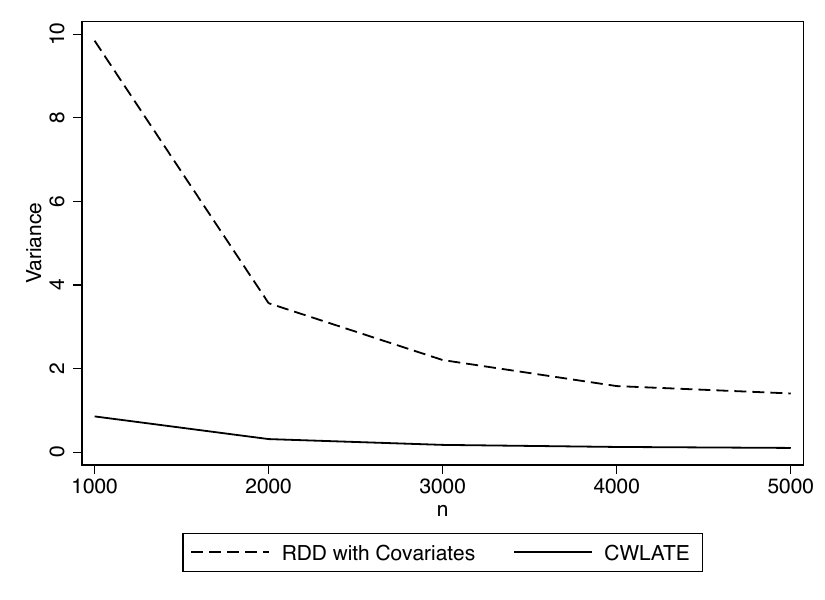}\end{subfigure}
\end{center}
\vspace{-.3in}
{\footnotesize \singlespacing Notes: This figure shows the two components of the MSE of $\hat{\beta}^{cov}_U$ and $\hat{\beta}_{CW}$ when evaluated with respect to $\beta_U$ as the target when $\alpha_{DW}=1$ and $\beta_{XW}=10$.}
\end{figure}

The relative advantage of the CWLATE even when biased is due to the excessively large variance of the standard RDD estimators. The standard RDD estimators show signs of weak identification for the small samples, while the CWLATE's variance remains relatively low. Figure~\ref{fig: MSE comp} shows that, even for moderate samples, the variance of the standard RDD estimators is excessively high. A sample of 5,000 was necessary for the variance to be sufficiently low that the standard RDD with covariates could surpass the biased CWLATE for $\beta_{XW}=10.$ Note that for $\beta_{XW}=2,$ this level is not reached for any of the sample sizes considered.

These results do not mean that the CWLATE is a better estimator of $\beta_U$. Although technically it seems to be superior for point estimation (since it has a smaller MSE), it is not advisable to perform inference with a biased estimator because confidence interval coverages and test sizes can be wrong. In the right two panels of the table, we show that the coverage rate of the 95\% confidence interval using the CWLATE estimator can be seriously off the mark.

Rather, the point of this exercise is to show that the issues of excessive variance should not be overlooked. The fact that an inconsistent estimator can perform better in such a wide range of scenarios is a cautionary tale against the standard RDD as an estimand. In many situations, such as in our application, the CWLATE can be a better target to substantiate causal claims.

\end{document}